\documentclass[prb,article]{revtex4-2}

\usepackage{amsmath}  
\usepackage{amsfonts} 
\usepackage{graphicx} 
\usepackage{hyperref}
\usepackage{amsthm}

\newcommand{\beq}{\begin{equation}}
	\newcommand{\eeq}{\end{equation}}
\newcommand{\bqa}{\begin{eqnarray}}
	\newcommand{\eqa}{\end{eqnarray}}
\newcommand{\nn}{\nonumber}

\newcommand{\smallfrac}[2]{\mbox{$\frac{#1}{#2}$}}

\newcommand{\half}{\smallfrac{1}{2}}

\newtheorem{Theorem}{Theorem}

\newtheorem{Corollary}{Corollary}
\newtheorem{Lemma}{Lemma}

\newcommand{\tr}{{\rm tr}}
\newcommand{\E}{{\cal E}}

\begin{document}
\title{Better Heisenberg limits, coherence bounds, and energy-time tradeoffs \\via quantum R\'enyi information}

\author{Michael J. W. Hall}
\affiliation{Theoretical Physics, Research School of Physics, Australian National University, Canberra ACT 0200, Australia}

\begin{abstract}An uncertainty relation for the R\'enyi entropies of conjugate quantum observables is used to obtain a strong Heisenberg limit of the form ${\rm RMSE}\geq f(\alpha)/(\langle N\rangle+\half)$,  bounding the root mean square error of any estimate of a random optical phase shift in terms of average photon number, where $f(\alpha)$ is maximised for non-Shannon entropies. Related simple yet strong uncertainty relations linking phase uncertainty to the photon number distribution, such as $\Delta\Phi\geq \max_n p_n$, are also obtained. These results are significantly strengthened via upper and lower bounds on the R\'enyi mutual information of quantum communication channels, related to asymmetry and convolution, and applied to the estimation (with prior information) of unitary shift parameters such as rotation angle and time, and to obtain strong bounds on measures of coherence. Sharper R\'enyi entropic uncertainty relations are also obtained, including time-energy uncertainty relations for Hamiltonians with discrete spectra. In the latter case almost-periodic R\'enyi entropies are introduced for nonperiodic systems.
\end{abstract}

\maketitle

\section{Introduction}
\label{sec:intro}

Quantum mechanics places fundamental limits on the information which can be gained in various contexts, ranging from the accuracy to which the phase shift of an optical probe state can be estimated to the secure key rate that can be obtained from a cryptographic protocol. 
Such limits are often formulated via uncertainty relations that restrict, for example, the degree to which values of two observables can be jointly specified, or the degree to which both an intended party and an eavesdropper can access quantum information~\cite{Colesreview}.

Entropic uncertainty relations place particularly strong restrictions, and underlie the main themes of this paper. One has, for example, the number-phase uncertainty relation~\cite{BBM, Frank2012}
\beq \label{ur}
H(N|\rho) + H(\Phi|\rho) \geq \log 2\pi + H(\rho),
\eeq
for the number and canonical phase observables of an optical mode, $N$ and $\Phi$. Here $H(A|\rho)=-\sum_a p(a|\rho)\log p(a|\rho)$ is the Shannon entropy of an observable $A$ with probability distribution $p(a|\rho)$, for a state described by density operator $\rho$, and $H(\rho)=-\tr[\rho\log \rho]$ denotes the von Neumann entropy of the state. The choice of logarithm base is left open throughout, corresponding to a choice of units, e.g., to bits for base 2 and nats for base $e$.  It follows that the number and phase uncertainties of any quantum state, as quantified by their Shannon entropies, cannot both be arbitrarily small. 

Entropic uncertainty relations have useful counterparts in quantum metrology. For example, if a random phase shift $\Theta$ of an optical probe state $\rho$ is estimated via some measurement $\Theta_{\rm est}$, then it follows from uncertainty relation~(\ref{ur}) that the error in the estimate, $\Theta_{\rm est}-\Theta$, is strongly constrained by the tradeoff relation~\cite{HallPRX}
\beq \label{urmet}
H(N|\rho) + H(\Theta_{\rm est}-\Theta|\rho) \geq \log 2\pi + H(\rho) .
\eeq
This relation applies to arbitrary estimates, rather than to a particular phase observable $\Phi$,  and further implies that the root-mean-square error (RMSE) of the estimate is bounded by~\cite{HallPRX,HallPRA,Nair}
\beq \label{rmse}
{\rm RMSE} :=\langle (\Theta_{\rm est}-\Theta)^2\rangle^{1/2} \geq \frac{\sqrt{2\pi/e^3}}{\langle N\rangle +1} ,
	\eeq
where $\langle N\rangle=\tr[\rho N]$ denotes the average photon number of the probe state. This is a strong form of the well-known Heisenberg limit in quantum metrology, which states that the phase error can asymptotically scale no better than $\langle N\rangle^{-1}$,  where such limits cannot be obtained via quantum Fisher information methods without additional assumptions~\cite{ZZ} (see also Section~\ref{sec:setting}). The above bounds are valid for both linear and nonlinear phase shifts, can be further strengthened to take into account any prior information about the phase shift $\Theta$, and in many cases far outperform bounds based on quantum Fisher information~\cite{HallPRX} (see also Section~\ref{sec:ent}).

While the above results arise via properties of standard Shannon and von Neumann entropies, it is known that various elements of quantum information theory can be generalised to the family of R\'enyi entropies, $H_\alpha(A|\rho)$ and $H_\alpha(\rho)$, and associated R\'enyi relative entropies $D_\alpha(\rho\|\sigma)$~\cite{Colesreview}. These quantities are labelled by a real index, $\alpha\geq0$, and reduce to the standard entropies and relative entropy for $\alpha=1$. One has, for example, the R\'enyi uncertainty relation~\cite{Maassen, BB2006}
\beq \label{urrenyi}
H_\alpha(N|\rho) + H_\beta(\Phi|\rho) \geq \log 2\pi , \qquad \frac{1}{\alpha}+\frac{1}{\beta}=2,
\eeq
analogous to Equation~(\ref{ur}). Several questions then immediately arise. Are such generalisations to R\'enyi entropies advantageous? Why are the uncertainties of $N$ and $\Phi$ characterised by two different R\'enyi entropies, $H_\alpha$ and $H_\beta$, in Equation~(\ref{urrenyi})? And why is there no term depending on the degree of purity of the state, analogous to $H(\rho)$ in Equation~(\ref{ur})?

Several positive answers to the first question above are known, in contexts such as mutually unbiased bases~\cite{Maassen}, quantum cryptography~\cite{RennerQKD}, and quantum steering~\cite{Brunner}. An aim of this paper is to demonstrate further unambiguous advantages of R\'enyi entropic uncertainty relation~(\ref{urrenyi}), in the context of quantum metrology. For example, it will be shown in Section~\ref{sec:ent} to lead to a generalised  Heisenberg limit of the form
\beq \label{rmsealpha}
{\rm RMSE} \geq \frac{f(\alpha)}{\langle N\rangle +\half} 
\eeq
for random phase shifts, where the function $f(\alpha)$ is maximised for the choice $\alpha\approx 0.7471$. This choice not only improves on the denominator in Equation~(\ref{rmse}) (corresponding to $\alpha=1$), but also improves on the numerator, by around $4\%$, with the result being independent of R\'enyi entropies and any interpretation thereof. Further entropic bounds on the RMSE are obtained in Section~\ref{sec:ent}, as well as related simple yet strong uncertainty relations for number and canonical phase observables, such as
\beq \label{simple}
\Delta \Phi \geq \max_n p(n|\rho) .
\eeq

A second aim of the paper is to further strengthen uncertainty relations and metrology bounds such as Equations~(\ref{urmet})--(\ref{simple}), achieved in Section~\ref{sec:sand} via finding upper and lower bounds for the classical R\'enyi mutual information of quantum communication channels~\cite{Sibson,Verdu,Wilde,Tom}, which also shed light on the second and third questions above. The upper bounds are based on the notion of R\'enyi asymmetry~\cite{renyiasymm}, recently applied to energy-time uncertainty relations for conditional R\'enyi entropies by Coles {\it et al.}~\cite{Colestime}. The lower bounds relate to the convolution of the prior and error distributions. For example, the number-phase uncertainty relation
\beq \label{preview}
A_\alpha^N(\rho) + H_\alpha(\Phi|\rho)\geq \log 2\pi, \qquad \alpha\geq\half,
\eeq
is obtained in Section~\ref{sec:sand}, which generalises Equation~(\ref{ur}) for Shannon entropies, and strengthens Equation~(\ref{urrenyi}) for Renyi entropies to take the degree of purity of the state into account. Here $A^N_\alpha(\rho)$ denotes the associated R\'enyi asymmetry, which may be interpreted as quantifying the intrinsically `quantum' uncertainty of $N$, and satisfies a duality property for pure states that underpins the relationship between the indexes $\alpha$ and $\beta$ in Equation~(\ref{urrenyi}). 

The results in Section~\ref{sec:sand} hold for the general case of unitary displacements generated by an operator with a discrete spectrum (such as $N$). Applications to strong upper and lower bounds for several measures of coherence~\cite{coh,Chitambar}, the estimation of rotation angles, and energy-time metrology and uncertainty relations, are briefly discussed in Section~\ref{sec:time}. In the latter case, almost-periodic R\'enyi entropies are introduced for the time uncertainties of non-periodic systems, analogously to the case of standard entropies~\cite{Halltime}. Conclusions are given in Section~\ref{sec:con}, and  proof technicalities  are largely deferred to appendices.

\section{Metrology bounds, Heisenberg limit and uncertainty relations via R\'enyi entropies}
\label{sec:ent}

In this section an analogue of metrology relation~(\ref{urmet}) is derived for R\'enyi entropies, via uncertainty relation~(\ref{urrenyi}). The improved Heisenberg limit~(\ref{rmsealpha}) follows as a consequence, as well as several simple uncertainty relations for number and phase, including Equation~(\ref{simple}).  Stronger versions of these results will be obtained in Section~\ref{sec:sand}. 

\subsection{Definition of R\'enyi entropies and R\'enyi lengths}
\label{sec:length}

To proceed, several definitions are necessary.
First, the photon number  of an optical mode is described by a Hermitian operator $N$ having eigenstates $\{|n\rangle\}$, $n=0,1,2,\dots$, with associated probability distribution $p(n|\rho)=\langle n|\rho|n\rangle$ for a state described by density operator $\rho$. A phase shift $\theta$ of the field is correspondingly described by $\rho_\theta=e^{-iN\theta}\rho e^{iN\theta}$. 

Second, the canonically conjugate phase observable $\Phi$ is described by the positive-operator-valued measure (POVM) $\{|\phi\rangle\langle\phi|\}$, with $\phi$ ranging over the unit circle and
\beq \label{phi}
|\phi\rangle := \frac{1}{\sqrt{2\pi}} \sum_{n=0}^\infty e^{-in\phi} |n\rangle ,
\eeq
and associated canonical phase probability density  $p(\phi|\rho)=\langle\phi|\rho|\phi\rangle$~\cite{Helstrom,Holevo}. It is straightforward to check that this density is translated under phase shifts, i.e., $p(\phi|\rho_\theta)=p(\phi-\theta|\rho)$.

Third, the classical R\'enyi entropies of $N$ and $\Phi$ are defined by~\cite{Colesreview}
\beq \label{renyient}
H_\alpha(N|\rho) :=\frac{1}{1-\alpha}\log \sum_{n=0}^\infty p(n|\rho)^\alpha,\qquad 
H_\alpha(\Phi|\rho) :=\frac{1}{1-\alpha}\log \oint d\phi \, p(\phi|\rho)^\alpha ,
\eeq
for $\alpha\in[0,\infty)$. These reduce to the standard Shannon entropies in the limit $\alpha\rightarrow1$ (using, e.g., $\lim_{\alpha\rightarrow1}[g(\alpha)-g(1)]/[\alpha-1]=g'(1)$ for $g(\alpha)=\log \sum_n p(n|\rho)^\alpha$). They provide measures of uncertainty that are small for highly peaked distributions and large for spread-out distributions. In particular, $H_\alpha(N)=0$ and $H_\alpha(\Phi|\rho)=\log 2\pi$ for any number state $\rho=|n\rangle\langle n|$.  Direct measures of uncertainty are given by the associated R\'enyi lengths
\beq \label{length}
L_\alpha(N|\rho):=\left[\sum_{n=0}^\infty p(n|\rho)^\alpha\right]^\frac{1}{1-\alpha},\qquad 
L_\alpha(\Phi|\rho) :=\left[\oint d\phi \, p(\phi|\rho)^\alpha \right]^\frac{1}{1-\alpha} ,
\eeq
which  quantify the effective spreads of $N$ and $\Phi$ over the nonnegative integers and the unit circle, respectively~\cite{Hallvol}. Note that uncertainty relation~(\ref{urrenyi}) can be rewritten in the form
\beq
L_\alpha(N|\rho) L_\beta(\Phi|\rho) \geq 2\pi, \qquad \frac{1}{\alpha}+\frac{1}{\beta}=2
\eeq
for these spreads, akin to the usual Heisenberg uncertainty relation.

\subsection{Entropic tradeoff relation for phase estimation}

If some estimate $\theta_{\rm est}$ is made of a phase shift $\theta$ applied to a probe state, then the estimation error, $\theta_{\rm err}=\theta_{\rm est}-\theta$, will have a highly-peaked probability density for a good estimate, and a spread-out probability density for a poor estimate. Hence, the quality of the estimate can be quantified in terms of the R\'enyi entropy of $p(\theta_{\rm err})$. The following theorem imposes a tradeoff between the quality of any estimate of a completely unknown phase shift and the number entropy of the probe state.

\begin{Theorem} \label{thm1} For any estimate $\Theta_{\rm est}$ of a uniformly random phase shift $\Theta$ applied to a probe state $\rho$, the estimation error $\Theta_{\rm est}-\Theta$ satisfies the tradeoff relation
\beq \label{thm1b}
H_\alpha(\Theta_{\rm est}-\Theta|\rho) + H_\beta(N|\rho) \geq \log 2\pi,\qquad 
\frac{1}{\alpha}+\frac{1}{\beta}=2 .
\eeq
\end{Theorem}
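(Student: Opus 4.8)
The plan is to reduce the general estimation problem to a direct application of the R\'enyi uncertainty relation~(\ref{urrenyi}). The estimate is obtained by some measurement on the shifted state $\rho_\Theta$ (possibly using ancillas and classical post-processing, all subsumed into a POVM), and the quantity to be bounded is the R\'enyi entropy of the error density $p(\theta_{\rm err}=x|\rho)$. I would prove the theorem in three moves: first, exploit the uniformly random prior to replace the given measurement by a phase-covariant one with the \emph{same} error density; second, show that the error density of a covariant measurement is the canonical phase density $p(\phi|\sigma)$ of an auxiliary state $\sigma$ sharing the photon-number statistics of $\rho$; third, apply~(\ref{urrenyi}) to $\sigma$.

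For the first move: folding the estimate modulo $2\pi$ never increases the R\'enyi entropy of the error (for $\alpha>1$ by superadditivity $\big(\sum_k a_k\big)^\alpha\ge\sum_k a_k^\alpha$, for $\alpha<1$ by the reverse inequality, in each case combined with the sign of $1/(1-\alpha)$), so it suffices to treat circle-valued POVMs $\{M_\phi\}$. With a uniform prior the error density is $p(\theta_{\rm err}=x|\rho)=\oint\frac{d\theta}{2\pi}\,\tr[\rho_\theta M_{\theta+x}]=\tr[\rho\,E_x]$, where $E_x:=\oint\frac{d\theta}{2\pi}\,e^{iN\theta}M_{\theta+x}e^{-iN\theta}$, and a short computation shows $\{E_x\}$ is a legitimate phase-covariant POVM, $E_x=e^{-iNx}E_0\,e^{iNx}$ with $\oint dx\,E_x=\id$. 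So one may assume from the outset that the estimate comes from such a covariant POVM.

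For the second move: by Holevo's characterisation of covariant phase POVMs, $E_0=\frac{1}{2\pi}h$ with $h=(h_{jk})_{j,k\ge0}$ positive semidefinite and $h_{jj}=1$. Expanding $\tr[\rho\,E_x]$ in the number basis and matching it, Fourier mode by Fourier mode, against $\langle x|\sigma|x\rangle$ for the phase kets~(\ref{phi}) shows that $p(\theta_{\rm err}=x|\rho)=p(x|\sigma)$ for the operator $\sigma$ with $\langle m|\sigma|n\rangle=\overline{h_{mn}}\,\langle m|\rho|n\rangle$, i.e.\ the Hadamard (entrywise) product of $h\tp$ with $\rho$ in the number basis. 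Since $h\tp$ is again positive semidefinite with unit diagonal, the Schur product theorem gives $\sigma\ge0$ and $\langle n|\sigma|n\rangle=\langle n|\rho|n\rangle$; thus $\sigma$ is a genuine state with $p(n|\sigma)=p(n|\rho)$, so $H_\beta(N|\sigma)=H_\beta(N|\rho)$ and $H_\alpha(\Phi|\sigma)=H_\alpha(\Theta_{\rm est}-\Theta|\rho)$. Applying~(\ref{urrenyi}) to $\sigma$ with $\alpha$ and $\beta$ interchanged (legitimate, since $1/\alpha+1/\beta=2$ is symmetric) yields $H_\beta(N|\rho)+H_\alpha(\Theta_{\rm est}-\Theta|\rho)\ge\log2\pi$, which is~(\ref{thm1b}).

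The algebra here — the covariant twirl, the correlation-matrix form of $E_0$, the Schur product theorem, the duality under $\alpha\leftrightarrow\beta$ — is elementary; I expect the main obstacle to be making it rigorous in infinite dimensions, since $N$ is unbounded, the phase observable is a continuous POVM, and one must justify exchanging the $\theta$-integral with the trace and manipulating the infinite matrix $h$ (and, implicitly, checking that the relevant sums and integrals converge so that $\sigma$ is genuinely a density operator). I would discharge these points in an appendix, either by truncating to the lowest $d$ photon numbers and letting $d\to\infty$, or by passing to a Naimark dilation in which the covariant measurement is literally the canonical phase of a larger mode.
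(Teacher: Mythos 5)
Your proposal is correct and follows essentially the same route as the paper's proof in Appendix~\ref{appa}: a covariant twirl of the estimation POVM, followed by the construction of an auxiliary state with the same number statistics as $\rho$ whose canonical phase density reproduces the error density, and then an application of uncertainty relation~(\ref{urrenyi}). Your invocation of Holevo's characterisation and the Schur product theorem is just a repackaging of the paper's explicit Kraus decomposition $\mu(\rho)=\sum_{\tilde m}A_{\tilde m}\rho A_{\tilde m}^\dagger$, which realises the same Hadamard-multiplier map.
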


Note that the condition on $\alpha$ and $\beta$ implies $\alpha,\beta\geq\half$. For the case of Shannon entropies, i.e., $\alpha=\beta=1$, this result has been previously obtained via entropic  uncertainty relation~(\ref{ur})~\cite{HallPRA}. A similar method is used in Appendix~\ref{appa} to prove the general result of the theorem via entropic uncertainty relation~(\ref{urrenyi}). 
It is worth emphasising that, unlike uncertainty relation~(\ref{urrenyi}), Theorem~\ref{thm1} applies to {\it any}  estimate of the random phase shift, including the canonical phase measurement $\Phi$ as a special case (for this case Equation~(\ref{urrenyi}) is recovered). 

Theorem~\ref{thm1} implies that  no phase-shift information can be gained via a probe state $|n\rangle\langle n|$, as expected since number eigenstates are insensitive to phase shifts. In particular, the number entropy $H_\beta(N|\rho)$  vanishes for any index $\beta$ and so the error entropy in Equation~(\ref{thm1b}) must reach its maximum value of $\log 2\pi$, which is only possible if the error has a uniform probability density, i.e.,  $p(\theta_{\rm err})=1/(2\pi)$.  

Conversely, Theorem~\ref{thm1} connects informative estimates with probe states that have a high number entropy. For example, if a canonical phase measurement is used to estimate a random phase shift of the pure probe state $|\psi\rangle=(n_{\max}+1)^{-1/2}(|0\rangle+|1\rangle+\dots|n_{\max}\rangle)$,   the error distribution may be calculated, using Equation~(\ref{perr}) of Appendix~\ref{appa} for $\Theta_{\rm est}\equiv\Phi$, as
\beq \label{opt}
p(\theta_{\rm err})=\langle \theta_{\rm err}|\rho|\theta_{\rm err}\rangle =|\langle\theta_{\rm err}|\psi\rangle|^2 =\frac{1}{2\pi (n_{\max}+1)} \left|\sum_{n=0}^{n_{\max}} e^{in\theta_{\rm err}}\right|^2 ,
\eeq 
and the Shannon entropy of the error then follows via Equation~(69) of Reference~\cite{HallJMO} as
\begin{align} \label{example}
H(\Theta_{\rm est}-\Theta|\rho) &= \log 2\pi + \log (n_{\max}+1) + 2[1 - 1^{-1} - 2^{-1} - \dots -(n_{\max}+1)^{-1}]\log e \nn\\
 &\leq \log 2\pi  - \log (n_{\max}+1) + 2(1-\gamma)\log e,
\end{align}
where $\gamma\approx0.5772$ is Euler's constant.  Hence such a probe state leads to an arbitrarily low uncertainty for the error as $n_{\max}$ increases. Moreover, Theorem~\ref{thm1} implies that this estimate is near-optimal, under the constraint of at most $n_{\max}$ photons, in the sense that the error entropy is within $2(1-\gamma)\log e\approx 1.2$ bits of the minimum possible, $\log 2\pi - \log (n_{\max}+1)$, allowed  by Equation~(\ref{thm1b}) under this constraint.

This last result strongly contrasts with Fisher information methods, which suggest that the best possible single-mode probe state, under the constraint of at most $n_{\max}$ photons, is the simple superposition state $2^{-1/2}(|0\rangle+|n_{\max}\rangle)$~\cite{CavesPRL}.  However,  it follows from Theorem~\ref{thm1} that this probe state cannot be optimal for the estimation of a random phase shift. In particular, noting that $H_\beta(N|\rho)=\log 2$ for this case, Equations~(\ref{length}) and~(\ref{thm1b}) give
\beq \label{fisher}
\log L_\alpha(\Theta_{\rm est}-\Theta|\rho) = H_\alpha(\Theta_{\rm est}-\Theta|\rho) \geq \log \pi
\eeq
for any value of $n_{\max}$, in stark contrast to Equation~(\ref{example}). Indeed, choosing  $\alpha=2$ gives 
\beq
\oint d\theta_{\rm err}\,\left[p(\theta_{\rm err})-\frac{1}{2\pi}\right]^2 = L_2(\Theta_{\rm est}-\Theta|\rho)^{-1} - \frac{1}{\pi} + \frac{1}{4\pi^2} \leq \frac{1}{4\pi^2},
\eeq
implying that $p(\theta_{\rm err})$ cannot be too different from a uniform distribution. Hence the simple superposition state has a poor performance in comparison to the probe state in Equation~(\ref{example}), for the case of uniformly random estimates.  A more  direct comparison with Fisher information bounds is made in in the following subsection, and the difference explained in Section~\ref{sec:setting}.

\subsection{Lower bounds for RMSE, a strong Heisenberg limit, and number-phase uncertainty relations }
\label{sec:rmse}

Equation~(\ref{length}) and Theorem~\ref{thm1} imply that the R\'enyi length of the error for any estimate of a random phase shift has the lower bound $L_\alpha(\Theta_{\rm est}-\Theta|\rho)\geq 2\pi/L_\beta(N|\rho)$. However, a more familiar length measure for characterising the performance of an estimation scheme is the root-mean-square error (RMSE) of the estimate, given by ${\rm RMSE} =\langle (\Theta_{\rm est}-\Theta)^2\rangle^{1/2}$.
Note that, in contrast to the case of entropies and R\'enyi lengths, a well-known ambiguity arises: $\theta_{\rm err}^2=(\theta_{\rm est}-\theta)^2$ is not a periodic function, and hence evaluation of the RMSE depends on the choice of a phase reference interval for the error $\theta_{\rm err}$. Fortunately this is easily resolved: a perfect estimate corresponds to a zero error, and hence the reference interval centred on zero, i.e., $\theta_{\rm err}\in[-\pi,\pi)$, will be used.

The following theorem gives three strong lower bounds for the RMSE, where the third has the form of a generalised Heisenberg limit as discussed in Section~\ref{sec:intro}. A corollary to this theorem, further below, gives corresponding preparation uncertainty relations for number and phase.

\begin{Theorem} \label{thm2}
For any estimate $\Theta_{\rm est}$ of a uniformly random phase shift $\Theta$ applied to a probe state $\rho$, the root-mean-square error RMSE=$\langle (\Theta_{\rm est}-\Theta)^2\rangle^{1/2}$ has the lower bounds
\beq \label{thm2a}
{\rm RMSE} \geq \frac{\pi}{\sqrt{3}\,L_{1/2}(N|\rho)},\qquad {\rm RMSE}\geq \max_n p(n|\rho), \qquad {\rm RMSE}\geq \frac{f_{\max}}{\langle N\rangle +\half} ,
\eeq
where $L_{1/2}(N|\rho)=\big[\sum_n \sqrt{p(n|\rho)}\,\big]^2$ is a R\'enyi length as defined in Equation~(\ref{length}), and $f_{\max}\approx 0.5823$ denotes the maximum value of the function 
\beq \label{thm2b}
f(\alpha):= \left\{ \begin{array}{ll}
	2\alpha^{-1} \left(\frac{\pi}{3\alpha-1}\right)^{\frac{1}{2}} \left(\frac{3\alpha-1}{2}\right)^{\frac{1}{1-\alpha}}
		\,	\frac{(1-\alpha)^{\frac{1}{2}}\Gamma(\frac{1}{1-\alpha})}{\Gamma(\frac{1}{1-\alpha}-\frac{1}{2})}, & \half\leq\alpha\leq1,\\
	2\alpha^{-1} \left(\frac{\pi}{3\alpha-1}\right)^{\frac{1}{2}} \left(\frac{3\alpha-1}{2}\right)^{\frac{1}{1-\alpha}}	
	\,
	\frac{(\alpha-1)^{\frac{1}{2}}\Gamma(\frac{\alpha}{\alpha-1}+\frac{1}{2})}{\Gamma(\frac{\alpha}{\alpha-1})}, & \alpha\geq 1,
	\end{array} \right.
\eeq
which is achieved for the choice $\alpha\approx 0.7471$.
\end{Theorem}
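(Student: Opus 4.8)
The plan is to combine Theorem~\ref{thm1}, used in the R\'enyi-length form $L_\alpha(\Theta_{\rm est}-\Theta|\rho)\,L_\beta(N|\rho)\ge 2\pi$ noted just above the statement of this theorem, with two ``reverse'' estimates. The first bounds the error length $L_\alpha(\Theta_{\rm est}-\Theta|\rho)$ above in terms of the RMSE, expressing that a sharply peaked second moment forces a small R\'enyi length of the error; the second bounds $L_\beta(N|\rho)$ above in terms of $\langle N\rangle+\half$, expressing that a small mean forces a small R\'enyi length of the photon-number distribution. The three displayed inequalities then correspond to the limiting choices $(\alpha,\beta)\to(\infty,\half)$ and $(\half,\infty)$ and to a general $\alpha\in(\half,1)$ optimised at the end, with only the third requiring both estimates.

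For the error side, first put $P_{\max}:=\mathrm{ess\,sup}\,p(\theta_{\rm err})$, so $L_\infty(\Theta_{\rm est}-\Theta|\rho)=P_{\max}^{-1}$; since $1=\oint p\le 2\pi P_{\max}$ one has $P_{\max}\ge 1/2\pi$, so the minimiser of $\oint\theta_{\rm err}^2\,p\,d\theta_{\rm err}$ among densities on $[-\pi,\pi)$ with supremum $\le P_{\max}$ is the uniform density on $|\theta_{\rm err}|\le 1/2P_{\max}$ (which does fit in $[-\pi,\pi)$), giving ${\rm RMSE}^2\ge \tfrac{1}{12}P_{\max}^{-2}$ and hence ${\rm RMSE}\ge \tfrac{1}{2\sqrt3}L_\infty(\Theta_{\rm est}-\Theta|\rho)\ge \pi/(\sqrt3\,L_{1/2}(N|\rho))$ by Theorem~\ref{thm1}. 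Next, $L_{1/2}(\Theta_{\rm est}-\Theta|\rho)=\big(\oint\sqrt{p}\,d\theta_{\rm err}\big)^2$, and Cauchy--Schwarz with weight $a+b\theta_{\rm err}^2$ followed by enlargement of the range to $\mathbb R$ gives $\oint\sqrt p\,d\theta_{\rm err}\le (a+b\,{\rm RMSE}^2)^{1/2}(\pi/\sqrt{ab})^{1/2}$; minimising over $a/b$ (optimum $a/b={\rm RMSE}^2$) yields $L_{1/2}(\Theta_{\rm est}-\Theta|\rho)\le 2\pi\,{\rm RMSE}$, so ${\rm RMSE}\ge \tfrac{1}{2\pi}L_{1/2}(\Theta_{\rm est}-\Theta|\rho)\ge \max_n p(n|\rho)$ by Theorem~\ref{thm1} with $\beta=\infty$, using $L_\infty(N|\rho)=1/\max_n p(n|\rho)$. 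Finally, for general $\alpha\in(\half,1)$ set $t:=1/(1-\alpha)>2$: maximising $\oint p^\alpha\,d\theta_{\rm err}$ over all probability densities on $\mathbb R$ with second moment ${\rm RMSE}^2$ is a concave variational problem whose extremal is $p(\theta_{\rm err})\propto(1+k\theta_{\rm err}^2)^{-t}$, the two constraints fixing $k=[(2t-3)\,{\rm RMSE}^2]^{-1}$; evaluating $\oint p^\alpha$ on this extremal via $\int_{\mathbb R}(1+u^2)^{-s}\,du=\sqrt\pi\,\Gamma(s-\half)/\Gamma(s)$ and raising to the power $1/(1-\alpha)$ gives $L_\alpha(\Theta_{\rm est}-\Theta|\rho)\le G(\alpha)\,{\rm RMSE}$ for an explicit $G(\alpha)$ built from Gamma functions. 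Since densities on $[-\pi,\pi)$ form a subset of those on $\mathbb R$, this is a valid (if not tight) bound.

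For the photon-number side, $\beta=\alpha/(2\alpha-1)=(t-1)/(t-2)>1$. Smear each weight $p(n|\rho)$ uniformly over $[n-\half,n+\half)$ to obtain a density $q$ on $[-\half,\infty)$ with $\int q^\beta=\sum_n p(n|\rho)^\beta$ and mean $\langle N\rangle$; a translation converts this into a density on $[0,\infty)$ of mean $\langle N\rangle+\half$ with the same $\beta$-integral, so $\sum_n p(n|\rho)^\beta$ is at least the minimum of $\int r^\beta$ over densities $r$ on $[0,\infty)$ of mean $\langle N\rangle+\half$. For $\beta>1$ that convex problem is solved by the ``power wedge'' $r(x)\propto(1-x/x_0)_+^{1/(\beta-1)}$, and a Beta computation yields $L_\beta(N|\rho)\le(\langle N\rangle+\half)\big(\tfrac{t}{t-1}\big)^{t-1}$; this comparison is precisely where the shift $\half$ of the Heisenberg denominator originates.

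Combining Theorem~\ref{thm1} with the last two bounds gives ${\rm RMSE}\ge 2\pi/[G(\alpha)\,L_\beta(N|\rho)]\ge f(\alpha)/(\langle N\rangle+\half)$ with $f(\alpha)=2\pi\,G(\alpha)^{-1}\big(\tfrac{t-1}{t}\big)^{t-1}$; repeated use of $\Gamma(x+1)=x\,\Gamma(x)$ reduces this to the $\half\le\alpha\le1$ expression in \erf{thm2b} (and $f(1^-)=\sqrt{2\pi/e^3}$ recovers \erf{rmse}). Numerical maximisation over $t>2$, i.e.\ over $\alpha\in(\half,1)$, gives $f_{\max}\approx0.5823$ at $\alpha\approx0.7471$; the $\alpha\ge1$ branch of \erf{thm2b}, not needed for the stated bound, comes from the same argument with the roles of the two extremal families (the compactly supported $(1-k\theta_{\rm err}^2)_+$-type density for the error, a heavy-tailed density for $N$) exchanged. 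The main obstacle is the general-$\alpha$ error estimate: identifying the extremal $t$-distribution and confirming it maximises $\oint p^\alpha$ at fixed variance, controlling the relaxation of the support from $[-\pi,\pi)$ to $\mathbb R$, and pushing the Beta/Gamma bookkeeping through so that the constant lands exactly on \erf{thm2b}; the discrete-to-continuum smearing that produces the $\half$ in the number-length bound is the other place demanding care.
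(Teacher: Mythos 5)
Your proposal is correct and follows essentially the same route as the paper's proof in Appendix~\ref{appb}: Theorem~\ref{thm1} in R\'enyi-length form, combined with (i) the maximum of $L_\alpha$ of the error at fixed second moment on the real line (the generalised-Cauchy/power-wedge extremal families, with the same exponent $t=1/(1-\alpha)$ and the same scale $k=[(2t-3)\,{\rm RMSE}^2]^{-1}$) and (ii) the maximum of $L_\beta$ at fixed mean on the half-line after smearing $p(n|\rho)$ over unit intervals, which is exactly where the paper's $\langle N\rangle+\half$ also arises. The only differences are cosmetic: you obtain the $\alpha=\infty$ and $\alpha=\half$ endpoint bounds by direct bathtub and Cauchy--Schwarz arguments rather than as limits/special cases of the general Lagrange-multiplier formula, and your constants $(t/(t-1))^{t-1}=\alpha^{\alpha/(\alpha-1)}$ and $2\pi/G(\alpha)=\alpha^{\alpha/(\alpha-1)}f(\alpha)$ agree with the paper's.
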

In Equation~(\ref{thm2b}), $\Gamma(x)$ denotes the Gamma function and the value of $f(1)$ is defined by taking the limit $\alpha\rightarrow1$ in either expression and using  $\lim_{x\rightarrow 0} (1-3x/2)^{1/x}=e^{-3/2}$ and $\lim_{x\rightarrow\infty}x^{1/2}\Gamma(x)/\Gamma(x+\half)=1$, to obtain $f(1)=\sqrt{2\pi/e^3}\approx 0.5593$.
 The scaling function~$f(\alpha)$ is plotted in Figure~\ref{fig1}. The proof of the theorem relies on Theorem~\ref{thm1} and upper bounds on R\'enyi entropies under various constraints, and is given in Appendix~\ref{appb}.

\begin{figure}[t!]
	\includegraphics[width=10.5 cm]{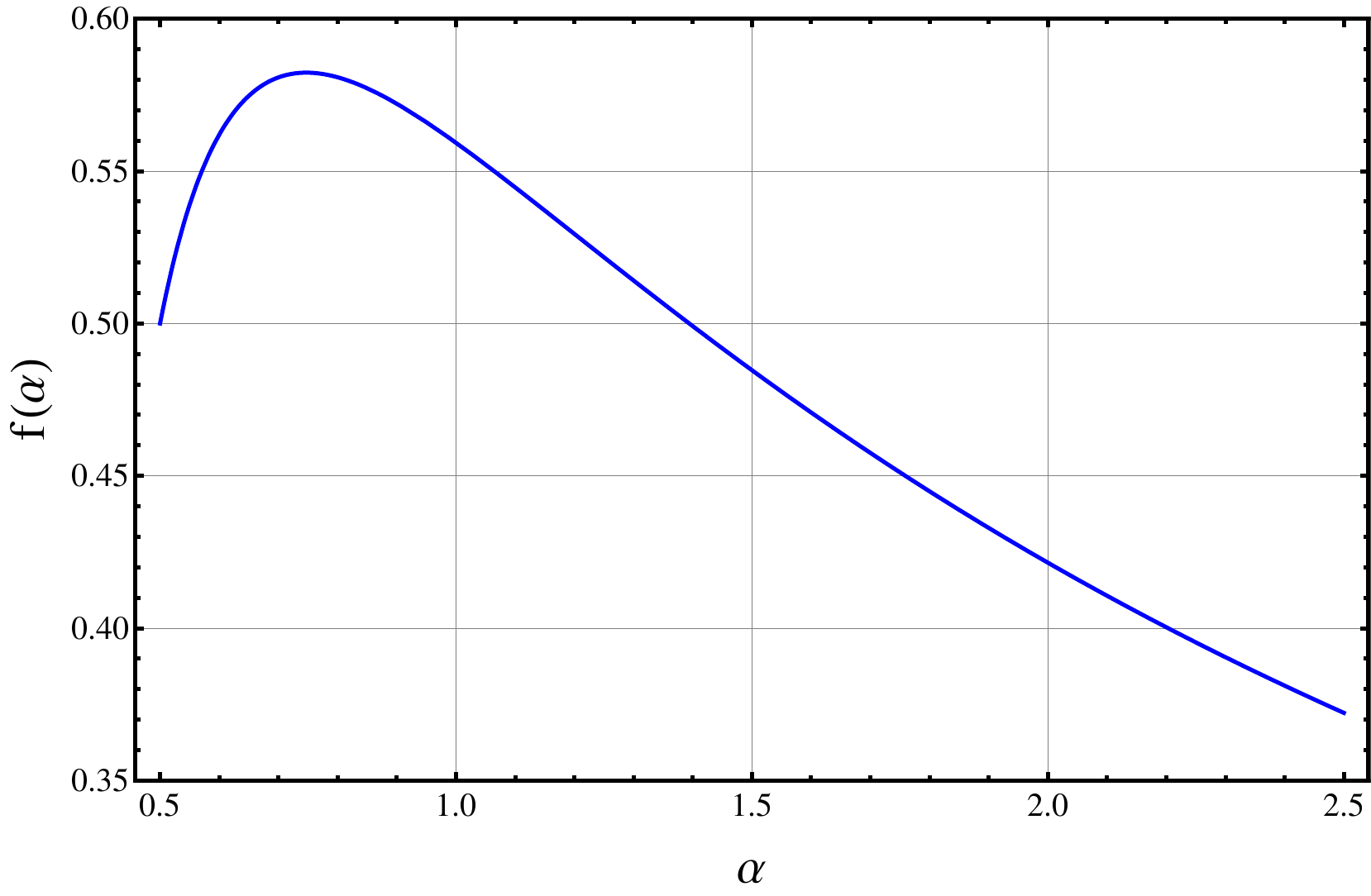}
	\caption{The scaling function $f(\alpha)$ for the Heisenberg limit in Theorem~\ref{thm2}. Particular values of interest are $f(1/2)=1/2$, $f(1)=\sqrt{2\pi/e^3}\approx 0.5593$, and the maximum value $f_{\max}\approx f(0.7471)\approx 0.5823$.
		\label{fig1}}
\end{figure}

The lower bounds in Theorem~\ref{thm2} are relatively strong, and indeed the first inequality in Equation~(\ref{thm2a}) is tight, being saturated for number states. In particular, the error distribution is always uniform for this case, as noted below Theorem~\ref{thm1}, yielding ${\rm RMSE}=\frac{1}{2\pi}\int_{-\pi}^\pi d\phi_{\rm err} \, (\phi_{\rm err})^2=\pi/\sqrt{3}$, as per the first lower bound.  

Moreover, the third bound in Equation~(\ref{thm2a}) of Theorem~\ref{thm2} is stronger than the Heisenberg limit in Equation~(\ref{rmse}), both in the numerator and the denominator. In particular, the scaling factor $f(1)\approx0.5593$ in Equation~(\ref{rmse}) is outperformed by $\approx 4\%$ when compared to $f_{\max}\approx 0.5823$ in Equation~(\ref{thm2a}).  Note that while the derivation of this bound relies on properties of R\'enyi entropies (see Appendix~\ref{appb}), no R\'enyi entropies appear in the bound itself. The bound thus demonstrates an unambiguous advantage of using R\'enyi entropies in quantum metrology that is completely independent of their interpretation.

Theorem~\ref{thm2} can also be directly compared to the Fisher information bound
\beq
{\rm RMSE}_\theta := \langle (\Theta_{\rm est}-\theta)^2\rangle_{\rho_\theta}^{1/2}=\left[ \int_{-\infty}^{\infty}d\theta_{\rm est}\ (\theta_{\rm est}-\theta)^2 p(\theta_{\rm est}|\rho_\theta)\right]^{1/2} \geq \frac{1}{2\Delta N} ,
\eeq
for the root mean square error of any locally unbiased estimate $\Theta_{\rm est}$ of a given phase-shift $\theta$ of probe state $\rho$~\cite{CavesPRL}. Here phase shifts are `unwrapped' from the unit circle to the real line, $\Delta N$ is the root mean square deviation of the number operator for the probe state, and local unbiasedness is the requirement that $\langle \Theta_{\rm est}\rangle_{\rho_\chi}=\int_{-\infty}^{\infty} d\theta_{\rm est}\,\theta_{\rm est}p(\theta_{\rm est}|\rho_\chi)=\chi$ for all phase shifts $\chi$ in some neighbourhood of $\theta$. Note the bound implies that ${\rm RMSE}_\theta$ becomes infinite for number states. Under the constraint of a maximum photon number $n_{\max}$, the probe state minimising the Fisher bound is the simple superposition $2^{-1/2}(|0\rangle+|n_{\max}\rangle)$, considered in Section~{\ref{sec:ent}, yielding ${\rm RMSE}_\theta \geq 1/n_{\max}$, which approaches zero as $n_{\max}$ increases~\cite{CavesPRL}. In contrast, for any estimate of a uniformly random phase shift, the first two bounds in Equation~(\ref{thm2a}) of Theorem~\ref{thm2} give the much stronger lower bounds ${\rm RMSE}>\pi/(2\sqrt{3})\approx 0.9069$ and ${\rm RMSE}\geq \half$ for this probe state, irrespective of the value of $n_{\max}$.  Thus the optimal single-mode probe state for the Fisher information bound is not optimal for estimating uniformly random phase shifts. The underlying reason for this difference is related to the degree of prior information available about the phase shift~\cite{HallPRX,HallJapan}, as discussed further in Section~\ref{sec:setting}.

Finally, it is of interest to note that each of the bounds in Theorem~\ref{thm2} can be used to obtain a corresponding preparation uncertainty relation for the canonical phase and photon number observables. In particular, defining the standard deviation $\Delta_\chi\Phi$ of the canonical phase observable $\Phi$ with respect to reference angle $\chi$ via~\cite{HallJMO}
\beq \label{phasevar}
(\Delta_\chi\Phi)^2 := \int_{\chi-\pi}^{\chi+\pi}  d\phi\, (\phi-\chi)^2 p(\phi|\rho),
\eeq
one has the following corollary of Theorem~\ref{thm2}, also proved in Appendix~\ref{appb}.

\begin{Corollary}
\label{cor1}
The canonical phase and photon number of an optical mode satisfy the family of uncertainty relations
\beq \label{cor1a}
L_\beta(N|\rho) \Delta_\chi\Phi \geq \alpha^{\frac{\alpha}{\alpha-1}} f(\alpha),\qquad \frac{1}{\alpha}+\frac{1}{\beta} = 2 ,
\eeq
for  R\'enyi length and standard deviation. In particular, the choices  $\alpha\rightarrow\infty$, $\alpha=0.5$, and the value of $\alpha$ maximising $f_{\alpha}$ yield the corresponding mumber-phase uncertainty relations
\beq \label{cor1b}
 L_{1/2}(N|\rho)\Delta_\chi\Phi \geq \frac{\pi}{\sqrt{3}},\qquad \Delta_\chi\Phi\geq \max_n p(n|\rho), \qquad  (\langle N\rangle +\half) \Delta_\chi\Phi\geq f_{\max}\approx 0.5823 .
\eeq
\end{Corollary}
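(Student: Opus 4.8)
The plan is to recognise $\Delta_\chi\Phi$ as a special case of the root-mean-square error bounded in Theorem~\ref{thm2}, and to obtain the family~(\ref{cor1a}) by isolating the intermediate inequality from which Theorem~\ref{thm2} is assembled. First I would note that the canonical phase measurement $\Phi$ is itself an admissible estimate $\Theta_{\rm est}$ of a random phase shift: since $p(\phi|\rho_\theta)=p(\phi-\theta|\rho)$, the error $\Phi-\Theta$ has probability density $\phi_{\rm err}\mapsto p(\phi_{\rm err}|\rho)=\langle\phi_{\rm err}|\rho|\phi_{\rm err}\rangle$ regardless of the distribution of $\Theta$ (this is the $\Theta_{\rm est}\equiv\Phi$ case of Equation~(\ref{perr}) in Appendix~\ref{appa}). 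To incorporate the reference angle $\chi$, I would apply Theorem~\ref{thm2} to the shifted estimate $\Theta_{\rm est}=\Phi-\chi$ (equivalently, to the probe state $e^{iN\chi}\rho e^{-iN\chi}$), whose error density is $\phi_{\rm err}\mapsto p(\phi_{\rm err}+\chi|\rho)$; with the reference interval $\phi_{\rm err}\in[-\pi,\pi)$ used in Theorem~\ref{thm2}, the associated root-mean-square error is
\[ \Big[\int_{-\pi}^{\pi}d\phi_{\rm err}\,\phi_{\rm err}^2\,p(\phi_{\rm err}+\chi|\rho)\Big]^{1/2}=\Big[\int_{\chi-\pi}^{\chi+\pi}d\phi\,(\phi-\chi)^2 p(\phi|\rho)\Big]^{1/2}=\Delta_\chi\Phi . \]
Because phase shifts commute with $N$, the photon-number distribution is unaffected, so $L_\beta(N|\rho)$, $\langle N\rangle$ and $\max_n p(n|\rho)$ are the same for the shifted and unshifted probe state.

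With this identification the three relations in~(\ref{cor1b}) are immediate: they are precisely the three bounds of Equation~(\ref{thm2a}) written for the estimate $\Phi-\chi$. For the one-parameter family~(\ref{cor1a}) I would return to the structure of the proof of Theorem~\ref{thm2} in Appendix~\ref{appb}. Schematically, for each $\alpha$ with $1/\alpha+1/\beta=2$ that argument combines Theorem~\ref{thm1} in R\'enyi-length form, $L_\alpha(\Theta_{\rm est}-\Theta|\rho)\,L_\beta(N|\rho)\ge 2\pi$, with a variational upper bound $L_\alpha(\Theta_{\rm est}-\Theta|\rho)\le [2\pi/(\alpha^{\alpha/(\alpha-1)}f(\alpha))]\,{\rm RMSE}$ obtained by maximising the R\'enyi length of a probability density on $[-\pi,\pi)$ at fixed second moment, the extremal being a (truncated) power-type density, which is where the Gamma-function factors in~(\ref{thm2b}) originate. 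Eliminating $L_\alpha(\Theta_{\rm est}-\Theta|\rho)$ between these two inequalities yields
\[ {\rm RMSE}\cdot L_\beta(N|\rho)\ \ge\ \alpha^{\frac{\alpha}{\alpha-1}}f(\alpha),\qquad \frac{1}{\alpha}+\frac{1}{\beta}=2, \]
and the bounds of Equation~(\ref{thm2a}) follow on further bounding $L_\beta(N|\rho)\le \alpha^{\alpha/(\alpha-1)}(\langle N\rangle+\half)$, or $L_\beta(N|\rho)=1/\max_n p(n|\rho)$ at $\beta\to\infty$, or letting $\alpha\to\infty$. Substituting ${\rm RMSE}\to\Delta_\chi\Phi$ via the first paragraph turns the displayed inequality into~(\ref{cor1a}), and recovers~(\ref{cor1b}) at $\alpha\to\infty$, $\alpha=\half$, and $\alpha$ equal to the maximiser $\alpha^\star\approx 0.7471$ of $f$ (the last combined with $L_\beta(N|\rho)\le\alpha^{\alpha/(\alpha-1)}(\langle N\rangle+\half)$).

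The only genuine point requiring care is the bookkeeping of the reference interval — Theorem~\ref{thm2} centres the error at $0$ while $\Delta_\chi\Phi$ is centred at $\chi$, which the shift in the first paragraph reconciles — together with the observation that restricting the error density to the finite interval $[-\pi,\pi)$ cannot violate the variational bound, since such densities form a subset of the densities on $\mathbb{R}$ with the same second moment and the bound holds a fortiori. Beyond this the corollary is essentially Theorem~\ref{thm2} and its proof specialised to a distinguished estimate, so I do not anticipate a new obstacle; the one place a short extra argument might be needed is if the proof in Appendix~\ref{appb} does not separate cleanly into the ``error-distribution'' step producing the displayed inequality and the ``photon-distribution'' step $L_\beta(N|\rho)\le\alpha^{\alpha/(\alpha-1)}(\langle N\rangle+\half)$, in which case~(\ref{cor1a}) would have to be extracted by repeating the relevant part of that proof with $L_\beta(N|\rho)$ left in place.
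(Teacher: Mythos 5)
Your proposal is correct and follows essentially the same route as the paper: the paper's proof of Corollary~\ref{cor1} simply reruns the variational bound~(\ref{funappb}) with $\sigma$ replaced by $\Delta_\chi\Phi$ and the error entropy replaced by $H_\alpha(\Phi|\rho)$, then invokes uncertainty relation~(\ref{urrenyi}) and the length bound~(\ref{upperlappb}) --- which is the same chain of inequalities you extract, since Theorem~\ref{thm1} specialised to $\Theta_{\rm est}=\Phi$ just reproduces Equation~(\ref{urrenyi}) and your identification of $\Delta_\chi\Phi$ with the RMSE of the $\chi$-shifted canonical estimate is exactly the substitution the paper makes. Your extraction of the intermediate inequality ${\rm RMSE}\cdot L_\beta(N|\rho)\geq\alpha^{\frac{\alpha}{\alpha-1}}f(\alpha)$ from Equation~(\ref{rmseappb}), and the handling of the reference angle and of the restriction to $[-\pi,\pi)$, all match the paper's reasoning.
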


The above uncertainty relations are easy to evaluate for many states, and are relatively strong. In particular, the first inequality in Equation~(\ref{cor1b}) is saturated for number states, and hence has the best possible lower bound. Further, the lower bound $f_{\max}$ in the third inequality is near-optimal, as it cannot be improved to more  than $\pi/(2\sqrt{3})$, corresponding to the value of the left hand side for the number state $|0\rangle$. This suggests the conjecture 
\beq \label{conj}
(\langle N\rangle +\half) \Delta_\chi\Phi\geq \frac{\pi}{2\sqrt{3}}\approx 0.9069 .
\eeq 
Note this conjecture is consistent with the fact that  the right hand side  can be no larger than $2(-z_A/3)^{3/2}\approx1.3761$ in the asymptotic limit $\langle N\rangle\rightarrow\infty$, where $z_A$ denotes the first (negative) zero
of the Airy function~\cite{Bandilla}.  Evidence for a related conjecture,  $(\langle N\rangle +1) \Delta_\chi\Phi\geq 2(-z_A/3)^{3/2}$, is given in~References~\cite{HallPRA,heislong}.

\section{Stronger metrology bounds and uncertainty relations via sandwiched R\'enyi relative entropies and asymmetry}
\label{sec:sand}

The results of the previous section relied on the known uncertainty relation~(\ref{urrenyi}) for Renyi entropies. To obtain stronger results which can, for example, take prior information and nonpurity into account, two approaches are possible. The first is to follow essentially the same strategy as in Section~\ref{sec:ent}, but starting with stronger uncertainty relations, similar to the approach in Reference~\cite{HallNJP} for standard entropies. The second approach, followed in this paper, is more fundamental, being based on information properties of quantum communication channels that not only yield stronger metrology tradeoff relations, but also lead to stronger uncertainty relations for R\'enyi entropies than Equation~(\ref{urrenyi}). Further, the results are applicable not only  to number and phase but to all unitary displacements generated by discrete-valued operators, including rotations generated by angular momentum and time evolution generated by a Hamiltonian with a discrete specturm.

\subsection{Setting the scene: the case of standard entropies}
\label{sec:setting}

It is helpful to first  briefly review how uncertainty relations and metrology bounds for standard Shannon and von Neumann entropies, such as Equations~(\ref{ur})--(\ref{rmse}), follow from upper and lower bounds for quantum information~\cite{HallPRX,Nair}. This both motivates and provides a base of comparison for the general case. 

For a quantum communication channel in which signal state $\rho_x$ is transmitted with prior probability density $p(x)$, corresponding to the ensemble $\E \equiv\{\rho_x;p(x)\}$,  the information that can be gained per signal via measurements of observable $A$ at the receiver is given by the Shannon mutual information
\beq \label{iax}
I(A:X) = H(A)-H(A|X)= H(A|\rho_\E) - \int dx\,p(x) H(A|\rho_x) , \qquad \rho_\E=\int dx\,p(x) \rho_x, 
\eeq
where $H(A|X)$ is the conditional entropy $H(AX)-H(X)$, and integration is replaced here and below by summation for discrete signal ensembles~\cite{inftext}. A useful upper bound for this information gain is the Holevo quantity $\chi(\E)$, with~~\cite{inftext,Ozawa}
\beq \label{hol}
I(A:X)\leq \chi(\E) := H(\rho_\E) - \int dx\,p(x) H(\rho_x) .
\eeq

Consider now the case of signal states generated by a group of unitary transformations $U_x=e^{-ixG}$, for some Hermitian generator $G$ with a discrete spectral decomposition $G=\sum_k g_k P_k$, where $P_k$ is the projector onto the eigenspace of eigenvalue $g_k$, so that $\rho_x=U_x \rho U_x^\dagger$ for some `probe' state $\rho$.  The Holevo quantity then has the upper bound~\cite{HallPRX,Nair}
\begin{align} \label{asymmbound}
\chi(\E) &= H(\rho_\E) - H(\rho) 
\leq A^G(\rho) := H(\rho_G) - H(\rho) , ~~~~ \rho_G := \sum_n P_k\rho P_k =\sum_k P_k\rho_\E P_k, 
\end{align}
where the inequality follows because the decoherence map $\rho\rightarrow\rho_G$ is unital and hence entropy-increasing. The upper bound, $A^G(\rho)$, is called the asymmetry of $\rho$ with respect to $G$~\cite{vacc1}, and more generally extends to a resource measure for groups of unitary displacements of a given state~\cite{vacc1,vacc2, gour}.   Note that inequality~(\ref{asymmbound}) unifies and generalises  energy-time uncertainty relations~(8), (12) and~(E11) of Reference~\cite{Colestime}, which correspond to discrete prior distributions and  uniform continuous prior distributions (see also Section~\ref{sec:con}). 

For any estimate $A=X_{\rm est}$ of the shift parameter $X$ , one also has a simple lower bound for the Shannon mutual information from rate-distortion theory ~\cite{HallPRX,Nair}:
\beq \label{shanlower}
I(X_{\rm est}:X) = H(X) - H(X|X_{\rm est}) =H(X)-H(X-X_{\rm est}|X_{\rm est}) \geq H(X)-H(X-X_{\rm est}).
\eeq
Combining this with the Holevo bound~(\ref{hol}) and the asymmetry bound~(\ref{asymmbound}), and noting that $H(Z)=H(-Z)$ in general, then gives the strong metrological tradeoff relation~\cite{HallPRX}
\beq \label{metstrong}
 H(X_{\rm est}-X|\rho) + H(\rho_G) \geq H(X) + H(\rho) .
\eeq
This will be generalised to Renyi entropies further below.

For example, if $\rho$ is the state of an optical mode then phase shifts are generated by $G=N$, which is nondegenerate. Hence $H(\rho_N)=H(N|\rho)$, and the above relation becomes
\beq \label{metstrongphase}
H(\Theta_{\rm est}-\Theta|\rho) + H(N|\rho) \geq H(\Theta) + H(\rho) .
\eeq
Note that this implies and hence is stronger than both Equation~(\ref{urmet}) and Theorem~\ref{thm1} (for $\alpha=\beta=1$), which correspond to the special case of a uniformly random phase shift, with $p(\theta)=\frac{1}{2\pi}$ and $H(\Theta)=\log 2\pi$. Further, entropic number-phase uncertainty relation~(\ref{ur}) is recovered by choosing $\Theta_{\rm est}=\Phi$ and $p(\theta)=\frac{1}{2\pi}$ in Equation~(\ref{metstrongphase}), and hence may be viewed as a consequence of the Holevo bound for quantum communication channels~\cite{HallJPA}.

Equation~({\ref{metstrongphase}) also strengthens the Heisenberg limit in Equation~({\ref{rmse}) to take into account the degree of purity of the probe state and any prior information about the phase shift, via
\beq \label{rmsestrong}
{\rm RMSE}\geq (2\pi e)^{-1/2}\,e^{H(\rho)-H(N|\rho)}e^{H(\Theta)}  \geq \frac{(2\pi e^3)^{-1/2}\, e^{H(\rho)}e^{H(\Theta)}}{\langle N\rangle +\half}  
\eeq
(using units of nats for the entropies). The first lower bound follows via the well-known bound $H\leq \log {\rm RMSE} +\half\log 2\pi e$ for Shannon entropy, and the second from inequality~(\ref{upperlappb}) in Appendix~\ref{appb} (for $\alpha=\beta=1$). In particular, the lower bounds increase for less pure probe states (via increasing $H(\rho)$), and decrease for more prior information (via decreasing $H(\Theta)$). 

Finally, note that it is the latter property that underlies the different scaling behaviour of the Fisher bound~(\ref{fisher}), discussed in Section~\ref{sec:ent}. In particular, the first inequality in Equation~(\ref{rmsestrong}) implies that the only way to obtain a scaling of ${\rm RMSE}\sim 1/n_{\max}$, for the probe state $2^{-1/2}(|0\rangle+|n_{\max}\rangle)$, is if the entropy of the prior probability density $p(\theta)$ scales as $H(\Theta)\sim-\log n_{\max}$, i.e, if the phase shift is already known to within an accuracy of $L(\Theta)\sim 1/n_{\max}$ {\it before} any estimate is made. This is consistent with the Fisher bound, since the latter only applies  to the error of an (unbiased) estimate of a {\it known} phase shift~$\theta$. If the phase shift is not known, then ${\rm RMSE}_\theta$ must be generalised to take the prior distribution into account, e.g., via the ${\rm RMSE}$, for which the typically stronger bounds in Equation~(\ref{rmsestrong}) apply (to both biased and unbiased estimates). 

To emphasise this point of distinction, note that applying a phase shift $\theta$ to the probe state $2^{-1/2}(|0\rangle+|n_{\max}\rangle)$ gives the phase-shifted state $2^{-1/2}(|0\rangle+e^{-in_{\max}\theta}|n_{\max}\rangle)$, implying that no measurement on this state (nor on multiple copies thereof) can discriminate between the phase shifts $\theta, \theta+2\pi/n_{\max}, \dots, \theta+2\pi(n_{\max}-1)/n_{\max}$. Thus  $\theta$ cannot be accurately estimated to within an error of $1/n_{\max}$ via this probe state, as per the Fisher bound, unless it is already known to lie within an interval of length $2\pi/n_{\max}$. Finally, it is also worth noting that Fisher information bounds cannot be used to obtain Heisenberg limits in terms of average photon number, as per Equation~(\ref{rmsestrong}),unless further assumptions are made (in addition to local unbiasedness). For example, the probe state $\frac{\sqrt{3}}{2}\sum_m 2^{-m}|2^m\rangle$ has $\Delta N=\infty$ and $\langle N\rangle =3/2$~\cite{ZZ}, giving a trivial Fisher bound of 0 in Equation~({\ref{fisher}) but a nontrivial bound in Equation~(\ref{rmsestrong}).
Further discussion, including the case of multiple probe states, may be found in References~\cite{HallPRX,HallJapan,HallJPA}.

\subsection{Sandwiched R\'enyi relative entropy and mutual information}
\label{sec:dalpha}

The use of information bounds for quantum communication channels to obtain strong metrological bounds~(\ref{metstrong}) and~(\ref{metstrongphase}) points the way to strengthening the results in Section~\ref{sec:ent}. To proceed, however, suitable generalisations of mutual information and the Holevo quantity, i.e., of $I(A:X)$ and $\chi(\E)$, are required. These are provided via sandwiched R\'enyi relative entropies, as discussed below.

The starting point is  rewrite $I(A:X)$ and $\chi(\E)$ in terms of relative entropies:
\beq \label{holrewrite}
I(A:X) = D(p_{AX}\| p_A p_X) \leq D(\rho_{\E X}\| \rho_\E\otimes \rho_X) = \chi(\E) .
\eeq
Here $p_{AX}(a,x)=p(x)p(a|\rho_x)$ denotes the joint distribution for outcome $A=a$ and signal state $\rho_x)$, with marginals $p_A(a)=p(a|\rho_\E)$ and $p_X(x)=p(x)$; $\rho_{\E X}$ denotes the joint density operator $\int dx\,\rho_x\otimes |x\rangle\langle x|$, with reduced density operators $\rho_\E$ and $\rho_X= \int dx\, p(x) |x\rangle\langle x|$ for some orthogonal basis set $\{|x\rangle\}$ on an auxiliary Hilbert space; and
\beq \label{dpq}
D(p\| q):=\int dz\, p(z)[\log p(z)(\log p(z) - \log q(z)]\geq0,~~~ D(\rho\|\sigma):=\tr[\rho(\log \rho-\log\sigma]\geq0,
\eeq
are the respective Shannon and von Neumann relative entropies for two probability distributions $p,q$ and two density operators  $\rho,\sigma$ (defined to be infinite if the support of $p$ does not lie in the support of $q$ and similarly for $\rho$ and $\sigma$). The inequality in Equation~(\ref{holrewrite}), corresponding to the Holevo bound~(\ref{hol}), is a direct consequence of the data processing inequality~\cite{inftext,Ozawa}.

Now, the relative entropies appearing in Equation~(\ref{holrewrite}) have the alternate form
\beq
D(p_{AX}\| p_A p_X) =\inf_{q_A} D(p_{AX}\| q_A p_X),\qquad D(\rho_{\E X}\| \rho_\E\otimes \rho_X)= \inf_{\sigma_\E} D(\rho_{\E X}\| \sigma_\E\otimes\rho_X) 
\eeq
(using, e.g., $D(\rho_{\E X}\| \sigma_\E\otimes\rho_X) = D(\rho_{\E X}\| \rho_\E\otimes \rho_X)+D(\rho_\E\|\sigma_\E)$). Hence, if suitable definitions of classical and quantum R\'enyi relative entropies $D_\alpha(p\| q)$ and $D_\alpha(\rho\|\sigma)$ are available and satisfy a data processing inequality, then one can define a classical R\'enyi mutual information $I_\alpha(A:X)$~\cite{Sibson,Verdu} and associated Holevo quantity $\chi_\alpha(\E)$~\cite{Sharma,Wilde,Tom} (also called the quantum R\'enyi mutual information), satisfying a R\'enyi Holevo bound, via
\beq \label{ialpha}
I_\alpha(A:X) :=\inf_{q_A} D_\alpha(p_{AX}\| q_A p_X) \leq   \inf_{\sigma_\E} D_\alpha(\rho_{\E X}\| \sigma_\E\otimes\rho_X) =: \chi_\alpha(\E).
\eeq
Further, suitable definitions of $D_\alpha(p\| q)$ and $D_\alpha(\rho\|\sigma)$ do indeed exist,  given by the classical R\'enyi relative entropy~\cite{Renyi,Harremoes}
\beq \label{renyiclass}
D_\alpha(p\|q):=\frac{1}{\alpha-1} \log \int dz\, p(z)^\alpha q(z)^{1-\alpha} \geq 0
\eeq
and the quantum sandwiched R\'enyi relative entropy~\cite{Wilde,Lennert} 
\beq \label{sandwich}
D_\alpha(\rho\|\sigma) := \frac{1}{\alpha-1} \log \tr[(\sigma^{\frac{1-\alpha}{2\alpha}} \rho
\sigma^{\frac{1-\alpha}{2\alpha}})^\alpha ] \geq 0 .
\eeq
These reduce to standard relative entropies in the limit $\alpha\rightarrow1$, vanish for $p= q$ and for $\rho=\sigma$, and satisfy the data processing inequality 
\beq \label{data}
D_\alpha(\nu(\rho)\| \nu(\sigma)) \leq D_\alpha(\rho\|\sigma), \qquad \alpha\geq \half
\eeq
for all completely positive trace preserving maps ~$\nu$~\cite{Liebdata}. 
Further properties and operational interpretations of sandwiched relative entropies are given in References~\cite{Wilde, Tom, Lennert, Liebdata, Beigi,Mosonyi}. 

Finally, note that while the rewriting of $\chi(\E)$ in Equation~(\ref{holrewrite}) and the definition of $\chi_\alpha(\E)$ in Equation~(\ref{ialpha}) are perfectly valid for the case of a discrete set of signal states (with integration replaced by summation), there is an important point of rigour to be considered for the case of a continuous set of signal states. In particular, kets $\{|x\rangle\}$ forming an orthogonal basis set for this case are not normalisable, with $\langle x| x'\rangle=\delta(x-x')$, so that $\rho_X$ and $\rho_{\E X}$ are not well-defined density operators. This point may be addressed by working with discrete signal ensembles, with $p(x)$ and $\rho_x$ replaced by $p_j$ and $\rho_j$, via $p_j\rho_j= \int_{X_j} dx\,p(x)\rho_x$ for some countable (possibly finite) partition $\{X_j\}$ of the range of $x$.  The existence of a suitable orthonormal basis $\{|x_j\rangle\}$ is then  assured; the integrals defining  $\rho_X$ and $\rho_{\E X}$ can  be replaced by summations over the index $j$; and $\chi_\alpha(\E)$ in Equation~(\ref{ialpha}) rigorously defined as the limit (or supremum) of a suitable sequence of such partitions~\cite{endnote}. 
An approach of this type is used, for example, in Reference~\cite{Colestime}.   Hence, in what follows, the notation in the definition of $\chi_\alpha(\E)$ in Equation~(\ref{ialpha})  will be informally used for both discrete and continuous ensembles, under the implicit assumption that the limiting approach is  applied for the continuous case.

\subsection{R\'enyi asymmetry and upper bounds for mutual information}
\label{sec:asymm}

To generalise the asymmetry bound in Equation~(\ref{asymmbound}),   one may follow the general approach of Gao {\it et al.}~\cite{renyiasymm} to define the R\'enyi asymmetry of state $\rho$, with respect to any Hermitian operator $G$ with a discrete spectrum, by~\cite{Colestime},
\beq \label{renyiasymm}
A^G_\alpha(\rho) := \inf_{\sigma:[\sigma,G]=0} D_\alpha(\rho\|\sigma) ,\qquad \alpha\geq\half .
\eeq
This reduces to $A^G(\rho)$ in Equation~(\ref{asymmbound}) for $\alpha=1$, via the identity $D(\rho\|\sigma)= H(\rho_G)-H(\rho)+D(\rho_G\|\sigma)$ for $[\sigma,G]=0$. For later purposes, it is worth noting that, since any state commuting with $G$ must also commute with $h(G)$ for any function $h$, 
\beq \label{nonlinear}
A^{h(G)}_\alpha(\rho) =\inf_{\sigma:[\sigma,h(G)]=0} D_\alpha(\rho\|\sigma) \leq \inf_{\sigma:[\sigma,G]=0} D_\alpha(\rho\|\sigma) =A_\alpha^G(\rho).
\eeq

If relative entropies are regarded as quasi-distances, then the R\'enyi asymmetry may be interpreted as the distance from $\rho$ to the closest state that is invariant under the unitary transformation $U_x$ for all $x$, i.e, that commutes with $G$.  In particular, it vanishes if $\rho$ itself is invariant  i.e., if $[\rho,G]=0$, implying that the asymmetry may also be interpreted as the inherent `quantum' uncertainty of $G$~\cite{Kor} (see also Section~\ref{sec:con}). Further, when $G$ is nondegenerate, the asymmetry is a measure of coherence, relative to the eigenbasis of $G$~\cite{Chitambar}. This measure corresponds to the relative entropy of coherence for $\alpha=1$~\cite{coh}, and it is of interest to note that 
\beq \label{agcoh}
A_{1/2}^G(\rho) = -\log [1-C_g(\rho)],\qquad A_{\infty}^G(\rho) = \log [1+C_R(\rho)],
\eeq
for nondegenerate $G$, where $C_g(\rho)$ and $C_R(\rho)$ are the geometric coherence~\cite{coh} and robustness of coherence~\cite{coh,Chitambar}, respectively.
However, the utility of asymmetry for the purposes of this paper arises from the following upper bounds on the R\'enyi Holevo quantity.

\begin{Theorem} \label{thm3}
For any Hermitian operator $G$ having a discrete spectral decomposition $G=\sum_k g_k P_k$,   the R\'enyi asymmetry $A^G_\alpha(\rho)$ of an ensemble of signal states $\E\equiv\{\rho_x=e^{-ixG}\rho e^{ixG};p(x)\}$  has the upper and lower bounds
\beq
\label{thm3a}
\chi_\alpha(\E) \leq  A^G_\alpha(\rho) 
\leq H_\beta(G|\rho),\qquad \frac{1}{\alpha}+\frac{1}{\beta}=2,
\eeq
where $\chi_\alpha(\E)$ is the R\'enyi Holevo quantity in Equation~(\ref{ialpha})  and $H_\beta(G|\rho)$ is the classical R\'enyi entropy of the probability distribution $p(g_k|\rho)=\tr[\rho P_k]$. The upper bound is saturated when $\rho$ is pure.
\end{Theorem}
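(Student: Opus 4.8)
The plan is to prove the two inequalities in~(\ref{thm3a}) separately: the left-hand bound $\chi_\alpha(\E)\le A^G_\alpha(\rho)$ is a quick consequence of data processing, while the right-hand bound $A^G_\alpha(\rho)\le H_\beta(G|\rho)$ is obtained by first settling the pure-state case (which simultaneously yields the saturation claim) and then lifting it to mixed states by purification.

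For the lower bound I would fix any $\sigma$ with $[\sigma,G]=0$ and introduce the ``dephasing‑with‑flag'' CPTP map $\nu:\tau\mapsto \sum_j p_j\,(e^{-ix_jG}\tau e^{ix_jG})\otimes|x_j\rangle\langle x_j|$ built from the (discrete) signal ensemble, as in Section~\ref{sec:dalpha}. Then $\nu(\rho)=\rho_{\E X}$, while $\nu(\sigma)=\sigma\otimes\rho_X$ because $e^{-ix_jG}\sigma e^{ix_jG}=\sigma$. The data processing inequality~(\ref{data}) gives $D_\alpha(\rho_{\E X}\|\sigma\otimes\rho_X)\le D_\alpha(\rho\|\sigma)$, so by the definition~(\ref{ialpha}) of $\chi_\alpha(\E)$ as an infimum, $\chi_\alpha(\E)\le D_\alpha(\rho\|\sigma)$; taking the infimum over all $\sigma$ commuting with $G$ and recalling~(\ref{renyiasymm}) yields $\chi_\alpha(\E)\le A^G_\alpha(\rho)$. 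The continuous-ensemble case follows from the limiting procedure of Section~\ref{sec:dalpha}.

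For the pure-state case $A^G_\alpha(|\psi\rangle\langle\psi|)=H_\beta(G|\rho)$: writing $p_k:=\langle\psi|P_k|\psi\rangle$ and $|\psi_k\rangle:=P_k|\psi\rangle/\sqrt{p_k}$, and using the identities $\tfrac{1-\alpha}{\alpha}=\tfrac{\beta-1}{\beta}$ and $\tfrac{\alpha}{\alpha-1}=\tfrac{\beta}{1-\beta}$ implied by $\tfrac1\alpha+\tfrac1\beta=2$, the rank-one form of $\rho$ reduces~(\ref{sandwich}) to $D_\alpha(|\psi\rangle\langle\psi|\|\sigma)=\tfrac{\beta}{1-\beta}\log\langle\psi|\sigma^{(\beta-1)/\beta}|\psi\rangle$. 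The states obeying $[\sigma,G]=0$ are precisely the block-diagonal ones $\sigma=\bigoplus_k\mu_k\tau_k$ with $\mu_k\ge0$, $\sum_k\mu_k=1$, $\tau_k$ a state on $\mathrm{ran}\,P_k$, for which $\langle\psi|\sigma^{(\beta-1)/\beta}|\psi\rangle=\sum_k p_k\,\mu_k^{(\beta-1)/\beta}\langle\psi_k|\tau_k^{(\beta-1)/\beta}|\psi_k\rangle$. I would optimise each $\tau_k$ first: whether the exponent $(\beta-1)/\beta$ is positive ($\beta>1$, i.e. $\tfrac12\le\alpha<1$) or negative ($\beta<1$, i.e. $\alpha>1$), the optimal choice in the relevant direction is $\tau_k=|\psi_k\rangle\langle\psi_k|$, making every factor $\langle\psi_k|\tau_k^{(\beta-1)/\beta}|\psi_k\rangle$ equal to $1$; a Hölder/Lagrange optimisation over $\{\mu_k\}$ then gives $\mu_k\propto p_k^{\beta}$ and extremal value $\big(\sum_k p_k^\beta\big)^{1/\beta}$. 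Substituting back gives $A^G_\alpha(|\psi\rangle\langle\psi|)=\tfrac{1}{1-\beta}\log\sum_k p_k^\beta=H_\beta(G|\rho)$ (the endpoints $\alpha\to1$, $\alpha=\tfrac12$, $\alpha\to\infty$ following by continuity, consistently with~(\ref{agcoh})), which is the saturation statement.

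Finally, to lift the upper bound to an arbitrary $\rho$ I would purify it as $|\Psi\rangle\in\mathcal H\otimes\mathcal H'$. Since the eigenprojectors of $G\otimes I$ are $P_k\otimes I$ with $\langle\Psi|(P_k\otimes I)|\Psi\rangle=\tr[\rho P_k]=p(g_k|\rho)$, the pure-state result gives $A^{G\otimes I}_\alpha(|\Psi\rangle\langle\Psi|)=H_\beta(G|\rho)$, attained at some $\Sigma_*$ with $[\Sigma_*,G\otimes I]=0$. Because $\Sigma_*$ is block-diagonal with respect to $\{\,\mathrm{ran}\,P_k\otimes\mathcal H'\,\}$, its reduced operator $\tr_{\mathcal H'}\Sigma_*$ commutes with $G$, so data processing under the partial trace gives $D_\alpha(\rho\|\tr_{\mathcal H'}\Sigma_*)\le D_\alpha(|\Psi\rangle\langle\Psi|\|\Sigma_*)=H_\beta(G|\rho)$, whence $A^G_\alpha(\rho)\le H_\beta(G|\rho)$ (the case $H_\beta(G|\rho)=\infty$ being trivial). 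I expect the pure-state optimisation to be the main obstacle: one must track the sign of $(\beta-1)/\beta$ and the direction of the infimum through both regimes $\tfrac12\le\alpha<1$ and $\alpha>1$, justify decoupling the $\{\tau_k\}$- and $\{\mu_k\}$-optimisations, and handle vanishing $p_k$ and the endpoint indices; the remaining steps are routine applications of~(\ref{data}).
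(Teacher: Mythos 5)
Your proposal is correct and follows essentially the same route as the paper: the lower bound via data processing applied to the ensemble-preparation channel (the paper uses the inverse controlled unitary $U=\int dx\,U_x^\dagger\otimes|x\rangle\langle x|$, which is the same idea run backwards), and the upper bound via the pure-state duality $A_\alpha^G(|\psi\rangle\langle\psi|)=H_\beta(G|\,|\psi\rangle\langle\psi|)$ followed by purification and data processing under the partial trace. Your explicit block-diagonal/H\"older optimisation of the pure-state infimum is just a more hands-on version of the paper's Lagrange variation in the proof of Equation~(\ref{dual}), and your sign analysis in the two regimes checks out.
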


Theorem~\ref{thm3} clearly generalises the information bounds in Equation~(\ref{asymmbound}) for standard entropies, which correspond to the special case $\alpha=1$. The lower bound $\chi_\alpha(\E) \leq A^G_\alpha(\rho)$  also generalises energy-time uncertainty relation~(10) of Reference~\cite{Colestime}, from the special case of a uniform prior distribution to arbitrary prior distributions (see also Section~\ref{sec:con}). Similarly to Reference~\cite{Colestime}, the lower bound is a simple consequence of the fact that the transformation mapping $\rho_x\otimes|x\rangle\langle x|$ to $\rho\otimes|x\rangle\langle x|$ is a reversible isometry. In particular, data processing inequality~(\ref{data}) implies that $D_\alpha(\rho_{\E X} \|\sigma_\E\otimes\rho_X)=D_\alpha(U\rho_{\E X} U^\dagger\|U\sigma_\E\otimes\rho_X U^\dagger)$ for the unitary transformation $U=\int dx\,U_x^\dagger\otimes |x\rangle\langle x|$. This transformation maps $\rho_{\E X}\rightarrow \rho\otimes \rho_X$, and $\sigma_\E\otimes\rho_X\rightarrow\sigma_\E\otimes \rho_X$ when $[\sigma_\E,G]=0$, and hence
\beq
\chi_\alpha(\E)\leq  \inf_{\sigma_\E:[\sigma_\E,G]=0} D_\alpha(\rho_{\E X}\|\sigma_\E\otimes\rho_X) = \inf_{\sigma_\E:[\sigma_\E,G]=0} D_\alpha(\rho\otimes\rho_X\|\sigma_\E\otimes\rho_X) = A_\alpha^G(\rho) ,
\eeq
as desired, where the inequality follows immediately from the definition of $\chi_\alpha(\E)$ in Equation~(\ref{ialpha}).
The remainder of Theorem~\ref{thm3} is proved in Appendix~\ref{appc}. 

The upper bound for asymmetry in Equation~(\ref{thm3a}) has the benefit of being simpler to calculate than the asymmetry itself, and will be seen to underlie the constraint on $\alpha$ and $\beta$ in R\'enyi uncertainty relation~(\ref{urrenyi}). It is a consequence of the duality property
\beq \label{dual}
A_\alpha^G(|\psi\rangle\langle\psi|) = H_\beta(|\psi\rangle\langle\psi|_G), \qquad \frac{1}{\alpha}+\frac{1}{\beta}=2,
\eeq
for the asymmetry of pure states, also proved in Appendix~\ref{appc}, which in turn is formally connected to a deeper (and much harder to prove) duality relation for conditional R\'enyi entropies~\cite{Lennert,Beigi}.

Finally, it is of interest to note there is an alternative representation of asymmetry that relates it  more directly  to the Holevo quantity $\chi_\alpha$.  In particular, defining the continuous ensemble $\E_r\equiv \{\rho_x;p_r(x)\}$ by $p_r(x):=1/(2r)$ for $|x|<r$ and vanishing otherwise, it follows that $\lim_{r\rightarrow\infty}\rho_{\E_r}=\rho_G$ (see Appendix~\ref{appc}). Hence, using Equation~({\ref{hol}),
$\lim_{r\rightarrow\infty} \chi(\E_r) = S(\rho_G)-S(\rho) = A^G(\rho)$.
Thus the standard asymmetry is equal to the Holevo quantity in the limiting case of a maximally uniform ensemble of signal states~\cite{HallPRX}.   This property generalises to all R\'enyi asymmetries, i.e.,
\beq \label{infinity}
\chi_\alpha^\infty := \lim_{r\rightarrow\infty} \chi_\alpha(\E_r) = A_\alpha^G(\rho) ,
\eeq
as shown in Appendix~\ref{appc}. In the case that $U_x$ is periodic with period $x_p$, the limiting ensemble can be replaced by the uniform ensemble over $[0,x_p)$, as noted in References~\cite{vacc1,vacc2,gour} for $\alpha=1$.

\subsection{A convolution lower bound for mutual information}

Unlike the Holevo quantity $\chi_\alpha(\E)$, the classical R\'enyi mutual information $I_\alpha(A:X)$ in Equation~(\ref{ialpha})  can be calculated explicitly~\cite{Sibson,Verdu}. In particular, Equations~(\ref{ialpha}) and~(\ref{renyiclass}) give
\beq
I_\alpha(A:X) = \inf_{q_A}\frac{1}{\alpha-1}\log \int da\, q_A(a)^{1-\alpha}r_\alpha(a) ,\qquad 
r_\alpha(a):=\int dx\, p(x) p(a|x)^\alpha ,
\eeq
with integration replaced by summation over discrete ranges of $A$ and $X$, and  $p(a|x)=\tr[\rho_x M_a]$ for a measurement of observable $A$ described by the POVM $\{M_a\}$. A straightforward variation with respect to $q_A$ under the constraint $\int da\,q(a)=1$ then yields~\cite{Sibson,Verdu}
\beq
I_\alpha(A:X)= H_{\frac{1}{\alpha}}(\tilde p_A) + \frac{1}{1-\alpha}\log \int da\, r_\alpha(a) ,\qquad \tilde p_A(a):= r_\alpha(a){\Big/}\int da\,r_\alpha(a) .
\eeq
It may be checked that this expression reduces to $I(A:X)$ in Equation~(\ref{iax}) in the limit $\alpha\rightarrow1$, as expected.

However, while this expression allows the mutual information to be explicitly calculated for arbitrary observables $A$, the statistical characterisation of the error $X_{\rm est}-X$ of some estimate $X_{\rm est}$ of $X$ requires an expression involving the error probability density 
\beq \label{perrgen}
p_{\rm err}(y)=\int dx\, p_{X_{\rm est}X}(x+y,x) , 
\eeq
where $p_{X_{\rm est}X}(x_{\rm est},x)=p(x)\tr[\rho_x M_{x_{\rm est}}]$ is the joint probability density of $X_{\rm est}$ and $X$ (akin to Equation~(\ref{perr}) of Appendix~\ref{appa}). For Shannon mutual information this requirement is achieved via the lower bound in Equation~(\ref{shanlower}), which is partially generalised here via the following theorem and corollary. 

\begin{Theorem} \label{thm4}
The classical R\'enyi mutual information, for an estimate $X_{\rm est}$ of $X$ made on the ensemble $\E\equiv\{\rho_x;p(x)\}$, has the lower bound
\beq \label{thm4a}
I_\alpha(X_{\rm est}:X) \geq \inf_q D_\alpha(p_{\rm err}\|q\ast p_-), \qquad \alpha\geq \half,
\eeq
where $p_{\rm err}(y)$ is the probability density of the error $X_{\rm est}-X$;  $(f\ast g)(y)=\int dx\,f(x)g(y-x)$ denotes the convolution of functions $f$ and $g$; and $p_-(x):=p(-x)$. 
\end{Theorem}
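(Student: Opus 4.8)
The plan is to reduce the desired bound to the classical data processing inequality for $D_\alpha$ applied to a cleverly chosen stochastic map on the pair $(X_{\rm est},X)$. First I would recall that, by definition,
\[
I_\alpha(X_{\rm est}:X) = \inf_{q_{X_{\rm est}}} D_\alpha\big(p_{X_{\rm est}X}\,\big\|\,q_{X_{\rm est}}\,p_X\big),
\]
where $p_{X_{\rm est}X}(x_{\rm est},x)=p(x)\,\tr[\rho_x M_{x_{\rm est}}]$. The key observation is that the deterministic change of variables $(x_{\rm est},x)\mapsto(y,x)$ with $y=x_{\rm est}-x$ is an invertible map, hence a reversible (bijective) classical channel, and so leaves $D_\alpha$ unchanged: $D_\alpha(p_{X_{\rm est}X}\|q_{X_{\rm est}}p_X)=D_\alpha(\tilde p_{YX}\|\tilde q)$, where $\tilde p_{YX}(y,x)=p_{X_{\rm est}X}(x+y,x)$ and $\tilde q(y,x)=q_{X_{\rm est}}(x+y)\,p(x)$. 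Marginalising $\tilde p_{YX}$ over $x$ gives exactly $p_{\rm err}(y)$ as in Equation~(\ref{perrgen}).

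Next I would apply the data processing inequality~(\ref{data}), valid for $\alpha\geq\half$, to the coarse-graining channel that forgets $X$, i.e.\ the map $(y,x)\mapsto y$. This gives
\[
D_\alpha\big(\tilde p_{YX}\,\big\|\,\tilde q\big)\ \geq\ D_\alpha\big(p_{\rm err}\,\big\|\,\tilde q_Y\big),\qquad
\tilde q_Y(y)=\int dx\,q_{X_{\rm est}}(x+y)\,p(x).
\]
The marginal $\tilde q_Y$ is precisely the convolution $(q_{X_{\rm est}}\ast p_-)(y)$, since $\int dx\, q_{X_{\rm est}}(x+y)p(x)=\int dx\, q_{X_{\rm est}}(y-x)p(-x)$ after $x\mapsto-x$. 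Taking the infimum over $q_{X_{\rm est}}$ on the left, and noting that every such $q_{X_{\rm est}}$ yields some normalised $q=q_{X_{\rm est}}$ producing the candidate divergence $D_\alpha(p_{\rm err}\|q\ast p_-)$ on the right, we conclude
\[
I_\alpha(X_{\rm est}:X)=\inf_{q_{X_{\rm est}}} D_\alpha(\tilde p_{YX}\|\tilde q)\ \geq\ \inf_{q} D_\alpha\big(p_{\rm err}\,\big\|\,q\ast p_-\big),
\]
which is Equation~(\ref{thm4a}). (A minor point: the infimum on the right ranges over all probability densities $q$, which is a superset of $\{q_{X_{\rm est}}\}$ arising from actual priors, so the inequality direction is preserved.)

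The main obstacle I anticipate is rigour in the continuous case: the change of variables and the marginalisation must be justified when $\{|x\rangle\}$ is a continuum and $\rho_X$, $\rho_{\E X}$ are not genuine density operators. As the excerpt already flags (in Section~\ref{sec:dalpha}), the clean way to handle this is to carry out the argument for discrete signal ensembles $\{\rho_j;p_j\}$ associated with a countable partition $\{X_j\}$, where all objects are well defined and the data processing inequality~(\ref{data}) applies verbatim to the (now genuinely classical, finite-alphabet) channels above, and then pass to the limit/supremum over refinements of the partition to recover the continuous statement; the monotonicity of $\chi_\alpha$ and $I_\alpha$ under such refinements makes this limit well behaved. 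A secondary technical check is that the shift $(y,x)\mapsto(x+y,x)$ genuinely has unit Jacobian so that no spurious density factors appear — this is immediate but worth stating. Everything else is bookkeeping. I would relegate these measure-theoretic details to an appendix, consistent with the paper's stated practice.
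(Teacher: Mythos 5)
Your proof is correct and follows essentially the same route as the paper's: an invertible change of variables $(x_{\rm est},x)\mapsto(x_{\rm est}-x,x)$ that leaves $D_\alpha$ unchanged, followed by the data processing inequality applied to marginalisation over $x$, with the marginal of the product reference density identified as the convolution $q\ast p_-$. The only cosmetic difference is your parenthetical about the range of the infimum, which is unnecessary since $q_{X_{\rm est}}$ in the definition of $I_\alpha$ already ranges over all normalised densities, but it does no harm.
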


Theorem~\ref{thm4} is proved in Appendix~\ref{appd}, and the lower bound in Equation~(\ref{thm4a}) will be referred to as the convolution lower bound. 
This bound has the desirable property of depending only on the prior probability density $p(x)$ and the error probability density $p_{\rm err}$, similarly to Equation~(\ref{shanlower}) for Shannon mutual information. 

\begin{Corollary} \label{cor2}
If the prior probability density $p(x)$ is uniform on an interval $I$ of length $\ell_I$, and vanishes outside this interval, then 
\beq \label{cor2a}
I_\alpha(X_{\rm est}:X) \geq \log \ell_I - H_\alpha(X_{\rm est}-X|\rho), \qquad \alpha\geq \half .
\eeq
\end{Corollary}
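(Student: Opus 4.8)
The plan is to derive Corollary~\ref{cor2} directly from the convolution lower bound of Theorem~\ref{thm4} by exhibiting a lower bound on $D_\alpha(p_{\rm err}\|q\ast p_-)$ that is \emph{uniform} in the auxiliary density $q$. Since Theorem~\ref{thm4} gives $I_\alpha(X_{\rm est}:X)\geq \inf_q D_\alpha(p_{\rm err}\|q\ast p_-)$ for $\alpha\geq\half$, it suffices to show that $D_\alpha(p_{\rm err}\|q\ast p_-)\geq \log\ell_I - H_\alpha(X_{\rm est}-X|\rho)$ for every probability density $q$; taking the infimum over $q$ then completes the proof.

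First I would observe that $p_-(x):=p(-x)$ is again a normalised density, uniform with constant value $1/\ell_I$ on the reflected interval $-I$, which has the same length $\ell_I$. Hence $(q\ast p_-)(y)=\int q(x)\,p(x-y)\,dx = \ell_I^{-1}\int_{y+I} q(x)\,dx \leq \ell_I^{-1}$ for all $y$, simply because $q$ is a probability density and therefore integrates to at most $1$ over the set $y+I$. Substituting this pointwise bound into the explicit form $D_\alpha(p\|r)=(\alpha-1)^{-1}\log\int p(z)^\alpha r(z)^{1-\alpha}\,dz$ of the classical R\'enyi relative entropy, Equation~(\ref{renyiclass}), then yields the claim: for $\alpha>1$ one has $1-\alpha<0$, so $(q\ast p_-)(y)^{1-\alpha}\geq \ell_I^{\alpha-1}$, and dividing by the positive factor $\alpha-1$ preserves the inequality; for $\half\leq\alpha<1$ one has $1-\alpha>0$, so $(q\ast p_-)(y)^{1-\alpha}\leq \ell_I^{\alpha-1}$, but now the prefactor $(\alpha-1)^{-1}$ is negative and reverses the inequality back. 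In either case one obtains $D_\alpha(p_{\rm err}\|q\ast p_-)\geq \log\ell_I + (\alpha-1)^{-1}\log\int p_{\rm err}(y)^\alpha\,dy = \log\ell_I - H_\alpha(X_{\rm est}-X|\rho)$, using the definition of the error R\'enyi entropy in Equation~(\ref{renyient}); the case $\alpha=1$ follows by continuity in $\alpha$.

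I do not expect a serious obstacle here: once the convolution bound of Theorem~\ref{thm4} is available, the argument is essentially a single pointwise estimate. The only points requiring mild care are bookkeeping ones — checking that $p_-$ inherits the length $\ell_I$ and unit normalisation of $p$, tracking the signs of $1-\alpha$ and $(\alpha-1)^{-1}$ as $\alpha$ crosses $1$, and the degenerate case in which $q\ast p_-$ vanishes on part of the support of $p_{\rm err}$. In the latter case, for $\alpha>1$ the relative entropy $D_\alpha$ is infinite and the bound is trivially satisfied, while for $\half\leq\alpha<1$ the integral inequality above still holds verbatim (with $0\leq \ell_I^{\alpha-1}$), so the conclusion is unaffected.
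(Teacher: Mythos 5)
Your proof is correct, but it takes a genuinely different route from the paper's. The paper proves Corollary~\ref{cor2} by introducing a concentration (wrapping) map ${\cal C}:p(y)\rightarrow p(y \bmod I_0)$, applying the data processing inequality a second time to get $I_\alpha \geq \inf_q D_\alpha({\cal C}p_{\rm err}\|{\cal C}(q\ast p_-))$, computing that the wrapped reference ${\cal C}(q\ast p_-)$ is \emph{exactly} the constant $\ell_I^{-1}$ on $I_0$, and then invoking Lemma~\ref{lem} to pass from $H_\alpha(\tilde X_{\rm err})$ back to $H_\alpha(X_{\rm err})$. You instead observe the pointwise bound $(q\ast p_-)(y)=\ell_I^{-1}\int_{y+I}q\leq \ell_I^{-1}$ on the whole real line and feed it directly into the explicit formula~(\ref{renyiclass}), with the correct sign bookkeeping on either side of $\alpha=1$ and a correct treatment of the degenerate support case. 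This is shorter and entirely bypasses both the second data-processing step and Lemma~\ref{lem}. What the paper's wrapping argument buys in exchange is the strictly stronger intermediate inequality $I_\alpha(X_{\rm est}:X)\geq \log\ell_I - H_\alpha(\tilde X_{\rm err})$ involving the wrapped error $\tilde X_{\rm err}=X_{\rm err}\bmod I_0$ (stronger precisely by Lemma~\ref{lem}), which is the natural form when the error variable lives on a circle; your direct estimate yields only the stated unwrapped bound. Both arguments use the same essential inputs (Theorem~\ref{thm4} with $q$ a normalised density, and uniformity of $p$ on $I$), so either is a valid proof of the corollary as stated.
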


Note for $\alpha=1$ that this result is equivalent to Equation~(\ref{shanlower}) in the case that $p(x)$ is uniform on some interval $I$. Corollary~\ref{cor2} is  proved in Appendix~\ref{appd}, and relies on the following Lemma, which is of some interest in its own right.

\begin{Lemma} \label{lem}
If $Z$ is a random variable on the real line, and $\tilde Z= Z \mod I$ for some interval $I$, then 
\beq \label{lema}
H_\alpha(\tilde Z) \leq H_\alpha(Z) , \qquad \alpha\geq 0.
\eeq
\end{Lemma}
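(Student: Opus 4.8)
The plan is to reduce the "mod $I$" operation to a countable partition-and-fold of the real line, and then show that folding can only decrease the R\'enyi entropy because it merges disjoint probability masses (for the discrete-support part) or sums up densities over a partition (for the continuous part), both of which act favorably on the $\alpha$-norm of the distribution. Concretely, write $\ell = \ell_I$ for the length of $I$ and, without loss of generality, take $I=[0,\ell)$. The map $z\mapsto \tilde z = z \bmod I$ sends the real line onto $I$ by identifying $z$ with $z - k\ell$, where $k=\lfloor z/\ell\rfloor\in\mathbb Z$. Thus the density (or probability mass function) $\tilde p$ of $\tilde Z$ is obtained from that of $Z$ by $\tilde p(y) = \sum_{k\in\mathbb Z} p(y+k\ell)$ for $y\in I$ (with the analogous sum of point masses in the discrete case, and a mixed statement in general).

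The key step is then the elementary inequality that, for any nonnegative reals $\{a_k\}$,
\beq \label{lemkey}
\Big(\sum_k a_k\Big)^\alpha \;\geq\; \sum_k a_k^\alpha \quad (\alpha\geq1),\qquad
\Big(\sum_k a_k\Big)^\alpha \;\leq\; \sum_k a_k^\alpha \quad (0\leq\alpha\leq1),
\eeq
i.e. superadditivity of $t\mapsto t^\alpha$ for $\alpha\geq 1$ and subadditivity for $\alpha\in[0,1]$. Applying this pointwise to $a_k = p(y+k\ell)$ and integrating (or summing) over $y\in I$ gives, for $\alpha>1$,
\beq
\int_I \tilde p(y)^\alpha\,dy = \int_I\Big(\sum_k p(y+k\ell)\Big)^\alpha dy \;\geq\; \int_I \sum_k p(y+k\ell)^\alpha\,dy = \int_{\mathbb R} p(z)^\alpha\,dz ,
\eeq
so $\sum_I \tilde p^\alpha \geq \sum_{\mathbb R} p^\alpha$; since the prefactor $1/(1-\alpha)$ is negative for $\alpha>1$, taking logs and multiplying yields $H_\alpha(\tilde Z)\leq H_\alpha(Z)$. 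For $0\leq\alpha<1$ the inequality in \eqref{lemkey} reverses, $\int_I \tilde p^\alpha \leq \int_{\mathbb R} p^\alpha$, but now $1/(1-\alpha)>0$, so the same conclusion $H_\alpha(\tilde Z)\leq H_\alpha(Z)$ follows. The boundary cases are handled separately and trivially: $\alpha=1$ by continuity (the limit of both sides), and $\alpha=0$ since $H_0$ is the log of the size of the support, and the support of $\tilde Z$ is the image of the support of $Z$ under a map, hence no larger.

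I expect the only real subtlety — and thus the "main obstacle," though a mild one — to be the bookkeeping needed to state and prove \eqref{lemkey}-via-integration uniformly for distributions that may have both a continuous (density) part and an atomic part, and to make sure that the folding sum $\sum_k p(\cdot + k\ell)$ converges and that Tonelli's theorem applies so that $\int_I\sum_k = \sum_k\int_{I} = \int_{\mathbb R}$. Since all terms are nonnegative this is automatic, so the argument goes through cleanly; one can alternatively avoid the mixed case entirely by discretizing (partition $\mathbb R$ into small cells, apply the finite version of \eqref{lemkey}, and take a limit), which is the style already used elsewhere in the paper for continuous ranges. I would present the clean measure-theoretic version with a one-line remark that the discretized version is equivalent.
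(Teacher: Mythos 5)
Your proof is correct and follows essentially the same route as the paper's: fold the density onto the interval via $\tilde p(y)=\sum_k p(y+k\ell)$ and apply the pointwise sub/superadditivity of $t\mapsto t^\alpha$ before integrating, which is exactly the paper's key step phrased there via the normalized weights $p_j(z)=w_j\tilde q_j(z)/\sum_j w_j\tilde q_j(z)$ and the observation that $\sum_j p_j(z)^\alpha\geq 1$ for $\alpha\leq 1$ and $\leq 1$ for $\alpha\geq 1$. The only cosmetic difference is the boundary case $\alpha=1$, which the paper handles with the chain rule $H(Z)=H(J)+H(\tilde Z|J)$ and nonnegativity of $H(J|\tilde Z)$ rather than your continuity limit.
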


This lemma corresponds to the intuition that concentrating a probability density onto an interval will reduce the spread of distribution, and is proved in the last part of Appendix~\ref{appd}.   Note that if $Z$ is periodic, with period equal to $\ell_I$, then Lemma~\ref{lem} holds trivially with equality (noting that entropies are translation invariant).

\subsection{Putting it all together: strengthened  metrology bounds and uncertainty relations}
\label{sec:together}

Combining the R\'enyi Holevo bound, $I_\alpha(X_{\rm est}:X)\leq \chi_\alpha(\E)$ in Equation~(\ref{ialpha}), with the upper bounds for $\chi_\alpha(\E)$ in Theorem~\ref{thm3} and the lower bound for $I\alpha(X_{\rm est}:X)$ in Corollary~\ref{cor2} yields the inequality chain
\beq \label{funchain}
\log \ell_I - H_\alpha( X_{\rm est}-X|\rho)\leq I_\alpha(X_{\rm est}:X)\leq \chi_\alpha(\E) \leq A^G_\alpha(\rho) \leq H_\beta(G|\rho), \qquad \frac{1}{\alpha}+\frac{1}{\beta}=2 ,
\eeq
connecting the entropy of the estimation error to the asymmetry and entropy of the generator $G$, for the case of a prior distribution of the shift parameter that is uniform over some interval $I$. 
This chain immediately implies the following general result.

\begin{Theorem} \label{thm5}
For any estimate $X_{\rm est}$ of a unitary displacement $X$ applied to a probe state $\rho$, where the displacement is generated by a Hermitian operator $G$ with a discrete spectrum and has a uniform prior distribution over an interval $I$ of length $\ell_I$,  the entropy of the estimation error $X_{\rm est}-X$ satisfies the tradeoff relation
\beq \label{thm5a}
H_\alpha(X_{\rm est}-X|\rho) + A^G_\alpha(\rho) \geq \log \ell_I ,\qquad \alpha\geq\half .
\eeq
Further, for any nonlinear displacement generated by $h(G)$, the tradeoff relation
\beq \label{thm5b}
H_\alpha(X_{\rm est}-X|\rho) + H_\beta(G|\rho)\geq \log \ell_I, \qquad \frac{1}{\alpha}+\frac{1}{\beta}=2 ,
\eeq
holds for any function $h$.
\end{Theorem}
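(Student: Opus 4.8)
The plan is to derive Theorem~\ref{thm5} directly from the inequality chain~(\ref{funchain}), which has already assembled all the necessary pieces. For the first tradeoff relation~(\ref{thm5a}), I would simply read off the outer terms of~(\ref{funchain}): the leftmost expression $\log \ell_I - H_\alpha(X_{\rm est}-X|\rho)$ is bounded above by $A^G_\alpha(\rho)$, which upon rearrangement gives precisely $H_\alpha(X_{\rm est}-X|\rho) + A^G_\alpha(\rho) \geq \log \ell_I$. The constraint $\alpha \geq \half$ is inherited from the weakest link in the chain, namely the data processing inequality~(\ref{data}) underlying both the R\'enyi Holevo bound and Corollary~\ref{cor2}. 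Each individual inequality in~(\ref{funchain}) has already been established: the leftmost from Corollary~\ref{cor2}, the R\'enyi Holevo bound $I_\alpha \leq \chi_\alpha$ from Equation~(\ref{ialpha}), and $\chi_\alpha(\E) \leq A^G_\alpha(\rho) \leq H_\beta(G|\rho)$ from Theorem~\ref{thm3}.

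For the nonlinear displacement relation~(\ref{thm5b}), the key observation is that a nonlinear displacement generated by $h(G)$ is itself a unitary displacement generated by the Hermitian operator $\tilde G := h(G)$, which still has a discrete spectrum (its eigenvalues being $\{h(g_k)\}$, possibly with merged eigenspaces). Applying the already-proven relation~(\ref{thm5a}) with $G$ replaced by $\tilde G = h(G)$ yields $H_\alpha(X_{\rm est}-X|\rho) + A^{h(G)}_\alpha(\rho) \geq \log \ell_I$. Now I invoke the monotonicity property~(\ref{nonlinear}), namely $A^{h(G)}_\alpha(\rho) \leq A^G_\alpha(\rho)$, together with the upper bound $A^G_\alpha(\rho) \leq H_\beta(G|\rho)$ from Theorem~\ref{thm3}, to conclude $H_\alpha(X_{\rm est}-X|\rho) + H_\beta(G|\rho) \geq \log \ell_I$ with $\frac{1}{\alpha}+\frac{1}{\beta}=2$. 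Note that the constraint relation between $\alpha$ and $\beta$ comes in only at the last step via Theorem~\ref{thm3}; in fact one could equally state the intermediate sharper bound $H_\alpha(X_{\rm est}-X|\rho) + A^{h(G)}_\alpha(\rho) \geq \log \ell_I$ for all $\alpha \geq \half$, but the stated form in terms of $H_\beta(G|\rho)$ has the advantage of being directly computable from the spectral distribution of $G$ alone.

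Since every inequality in the chain has already been justified in the preceding subsections, there is essentially no hard part remaining — the theorem is a repackaging. The one point requiring a small amount of care is the continuous-ensemble subtlety flagged in Section~\ref{sec:dalpha}: the prior $p(x)$ uniform on $I$ is a continuous distribution, so $\chi_\alpha(\E)$ and $I_\alpha(X_{\rm est}:X)$ must be understood via the partition/limiting construction described there. This is handled exactly as in Corollary~\ref{cor2}, whose proof in Appendix~\ref{appd} already accommodates this; one needs only note that the R\'enyi Holevo bound and Theorem~\ref{thm3} are compatible with that limiting procedure (both being consequences of data processing, which is preserved under the relevant limits). A secondary remark worth including: for the special case $G = N$ (nondegenerate), relation~(\ref{thm5b}) with a uniform prior on $[0,2\pi)$ recovers Theorem~\ref{thm1}, and relation~(\ref{thm5a}) strengthens it by replacing $H_\beta(N|\rho)$ with the generally smaller $A^N_\alpha(\rho)$, establishing the previewed inequality~(\ref{preview}).
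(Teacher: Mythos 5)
Your proposal is correct and follows essentially the same route as the paper: Equation~(\ref{thm5a}) is read off directly from the chain~(\ref{funchain}), and Equation~(\ref{thm5b}) is obtained by applying~(\ref{thm5a}) to the generator $h(G)$ and then invoking the monotonicity property~(\ref{nonlinear}) together with the upper bound $A^G_\alpha(\rho)\leq H_\beta(G|\rho)$ from Theorem~\ref{thm3}. Your added remarks on the continuous-ensemble limiting construction and on where the constraint $1/\alpha+1/\beta=2$ enters are accurate but not points the paper dwells on.
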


Equation~(\ref{thm5a}) of the theorem  generalises the metrology bounds and uncertainty relations in Section~\ref{sec:ent} to estimates of general unitary displacements, and strengthens them to take into account prior information and any nonpurity of the state.  Equation~(\ref{thm5b}), following via asymmetry property~(\ref{nonlinear}), shows that a nonlinear generator gives no advantage, in the sense that the R\'enyi entropy of the estimation error is lower-bounded by $\log\ell_I - H_\beta(G|\rho)$ for all generators $h(G)$, similarly to the case of Shannon entropy~\cite{HallPRX}. 

The content of Theorem~\ref{thm5} has the advantage of being independent of, and hence not requiring, any interpretation of the R\'enyi mutual information $I_\alpha$ and the Holevo quantity $\chi_\alpha$. The following four corollaries provide simple applications of the theorem to ${\rm RMSE}$ and to phase shifts, with further applications discussed in Section~\ref{sec:time}.

\begin{Corollary} \label{cor5}
	For any estimate $X_{\rm est}$ of a unitary displacement $X$ applied to a probe state $\rho$, where the displacement is generated by a Hermitian operator $G$ with  discrete spectral decomposition $\sum_k g_kP_k$, and $X$ has a uniform prior distribution over an interval $I$ of length $\ell_I$, the root-mean-square error of the estimate, RMSE=$\langle (X_{\rm est}-X)^2\rangle^{1/2}$, has the lower bounds
	\beq \label{cor5a}
	{\rm RMSE} \geq  \frac{\alpha^{\frac{\alpha}{\alpha-1}} f(\alpha) \ell_I}{2\pi}\, e^{-A^G_\alpha(\rho)} ,\qquad \alpha\geq \half,
	\eeq
	(using units of nats for $A^G_\alpha(\rho)$), where  $f(\alpha)$ is the function defined in Equation~(\ref{thm2b}). For the particular choices $\alpha=\infty$, $\alpha=\half$ and $\alpha\rightarrow1$, one further has
	\beq \label{cor5b}
	{\rm RMSE} \geq \frac{\ell_I}{2\sqrt{3}\,L_{1/2}(G|\rho)},\qquad {\rm RMSE}\geq \frac{\ell_I}{2\pi}\max_k p(g_k|\rho) \qquad {\rm RMSE} \geq \frac{\ell_I}{\sqrt{2\pi e}} e^{H(\rho) - H(\rho_G)} .
	\eeq
	where $L_\alpha(G|\rho)$ is a R\'enyi length as per Equation~(\ref{length}).
\end{Corollary}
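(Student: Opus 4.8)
The plan is to couple the R\'enyi tradeoff relation of Theorem~\ref{thm5} with the upper bound on R\'enyi entropy in terms of second moment that is established in Appendix~\ref{appb} (and already used there to prove Theorem~\ref{thm2} and Corollary~\ref{cor1}). That bound states that any real-valued random variable $Y$ with $\langle Y^2\rangle=\sigma^2$ satisfies $L_\alpha(Y)\leq 2\pi\sigma/[\alpha^{\alpha/(\alpha-1)}f(\alpha)]$ for $\alpha\geq\half$, equivalently
\beq
H_\alpha(Y)\leq \log(2\pi\sigma)-\log\!\big[\alpha^{\frac{\alpha}{\alpha-1}}f(\alpha)\big],
\eeq
with $f(\alpha)$ as in Equation~(\ref{thm2b}); the extremiser is a generalised Gaussian, $p(y)\propto(1-cy^2)_+^{1/(\alpha-1)}$ for $\alpha>1$ (compactly supported, tending to a uniform density as $\alpha\to\infty$) and $p(y)\propto(1+cy^2)^{-1/(1-\alpha)}$ for $\half\leq\alpha<1$ (power-law tails, with finite variance since $\alpha>1/3$).

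First I would apply this to the error variable $Y=X_{\rm est}-X$, for which $\sigma={\rm RMSE}$ by definition, giving $H_\alpha(X_{\rm est}-X|\rho)\leq\log(2\pi\,{\rm RMSE})-\log[\alpha^{\alpha/(\alpha-1)}f(\alpha)]$. Feeding in Equation~(\ref{thm5a}) of Theorem~\ref{thm5}, namely $H_\alpha(X_{\rm est}-X|\rho)\geq\log\ell_I-A^G_\alpha(\rho)$ for $\alpha\geq\half$, yields $\log(2\pi\,{\rm RMSE})-\log[\alpha^{\alpha/(\alpha-1)}f(\alpha)]\geq\log\ell_I-A^G_\alpha(\rho)$; exponentiating (with the logarithm base fixed to $e$, so $A^G_\alpha(\rho)$ is in nats) and rearranging gives Equation~(\ref{cor5a}).

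For the three displayed cases of Equation~(\ref{cor5b}) I would insert the stated $\alpha$ into Equation~(\ref{cor5a}) and evaluate the prefactor $\alpha^{\alpha/(\alpha-1)}f(\alpha)/(2\pi)$: at $\alpha=\half$ it is $2\cdot\half/(2\pi)=1/(2\pi)$ (since $f(\half)=\half$); as $\alpha\to1$ one has $\alpha^{\alpha/(\alpha-1)}\to e$ and $f(\alpha)\to\sqrt{2\pi/e^3}$, so the prefactor tends to $\sqrt{2\pi/e}/(2\pi)=(2\pi e)^{-1/2}$; and as $\alpha\to\infty$ the Gamma-function asymptotics of Equation~(\ref{thm2b}) give $f(\alpha)\sim\pi/(\sqrt3\,\alpha)$ and $\alpha^{\alpha/(\alpha-1)}\sim\alpha$, so the prefactor tends to $(\pi/\sqrt3)/(2\pi)=1/(2\sqrt3)$ --- these being exactly the limit identities recorded below Theorem~\ref{thm2}. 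In the $\alpha\to1$ case I would then substitute $A^G(\rho)=H(\rho_G)-H(\rho)$ from Equation~(\ref{asymmbound}) to obtain the third bound; in the $\alpha=\half$ and $\alpha=\infty$ cases I would weaken $A^G_\alpha(\rho)$ to its upper bound $H_\beta(G|\rho)$ via Theorem~\ref{thm3} (with $1/\alpha+1/\beta=2$), so that $e^{-A^G_{1/2}(\rho)}\geq e^{-H_\infty(G|\rho)}=\max_k p(g_k|\rho)$ gives the second bound and $e^{-A^G_\infty(\rho)}\geq e^{-H_{1/2}(G|\rho)}=1/L_{1/2}(G|\rho)$ gives the first.

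The only genuinely non-routine step is the one borrowed from Appendix~\ref{appb}: showing that the generalised-Gaussian density really does maximise $H_\alpha$ at fixed second moment over \emph{all} real random variables for every $\alpha\geq\half$, and computing the sharp constant $2\pi/[\alpha^{\alpha/(\alpha-1)}f(\alpha)]$ --- including the endpoint cases $\alpha=\half$ (power-law extremiser $p\propto(1+y^2/\sigma^2)^{-2}$, still of finite variance) and $\alpha=\infty$ (uniform extremiser), which are obtained by continuity. With that bound available, the corollary reduces to bookkeeping: one use each of Theorems~\ref{thm5} and~\ref{thm3}, together with the $\alpha\to1$ and $\alpha\to\infty$ limits of $f$ already supplied in the text.
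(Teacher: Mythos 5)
Your proposal is correct and follows essentially the same route as the paper: Equation~(\ref{cor5a}) is obtained by combining the maximum-R\'enyi-entropy-at-fixed-second-moment bound of Equation~(\ref{funappb}) (with $\sigma={\rm RMSE}$) with Theorem~\ref{thm5}, and the three special cases follow from the limits of $\alpha^{\alpha/(\alpha-1)}f(\alpha)$ recorded in Appendix~\ref{appb} together with $A^G_\alpha(\rho)\leq H_\beta(G|\rho)$ (for $\alpha=\infty,\half$) and $A^G_1(\rho)=H(\rho_G)-H(\rho)$ (for $\alpha\to1$). Your limit evaluations and the identification of the sharp constant $2\pi/[\alpha^{\alpha/(\alpha-1)}f(\alpha)]$ all check out against Equations~(\ref{pextappb})--(\ref{rmseappb}).
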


The lower bound in Equation~(\ref{cor5a}) of Corollary~\ref{cor5} is sensitive to the purity of the probe state via $A_\alpha^G(\rho)$, and to the prior information via $\ell_I$, and follows by noting that the derivation of Equation~(\ref{funappb}) in Appendix~\ref{appb} goes through for the choice $\sigma^2= {\rm RMSE}$, and applying Theorem~\ref{thm5}. The first two inequalities in Equation~(\ref{cor5b}) then follow  via the upper bound for asymmetry in Equation~(\ref{funchain}), and generalise the corresponding bounds in Theorem~\ref{thm2} to include prior information about the estimate. The third inequality follows via the expression for asymmetry in Equation~(\ref{asymmbound}) for $\alpha=1$, and generalises Equation~(\ref{rmsestrong}) for phase and photon number to arbitrary discrete generators. Note that all lower bounds in Corollary~\ref{cor5} scale in proportion to the length of the interval, $\ell_I$, to which the displacement $X$ is restricted. Thus, better estimates are possible when more prior information is available.

\begin{Corollary} \label{cor3}
The root-mean-square error for any estimate of a phase shift $\Theta$ with a uniform prior distribution on an interval $I$ with length $\ell_I$, via a measurement on probe state $\rho$, satisfies the strong Heisenberg limit
\beq \label{cor3a}
{\rm RMSE}\geq  \frac{\ell_I}{2\pi} \, \frac{f_{\max}}{\langle N\rangle +\half} ,
\eeq
where $f_{\max}\approx 0.5823$ is the maximum value of the function $f(\alpha)$ in Equation~(\ref{thm2b}).
\end{Corollary}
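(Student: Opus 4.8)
The plan is to obtain Corollary~\ref{cor3} as an immediate specialisation of Corollary~\ref{cor5} to an optical phase shift, following the same route that converts Theorem~\ref{thm1} into the third bound of Theorem~\ref{thm2}, while simply carrying along the prior-interval length~$\ell_I$. The three ingredients are: the RMSE bound~(\ref{cor5a}) of Corollary~\ref{cor5}, the asymmetry upper bound~(\ref{thm3a}) of Theorem~\ref{thm3}, and the maximum-R\'enyi-entropy estimate for distributions on the nonnegative integers at fixed mean that is established in Appendix~\ref{appb} in the course of proving Theorem~\ref{thm2}.

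First I would specialise Equation~(\ref{cor5a}) to the generator $G=N$. Since $N$ is nondegenerate, its spectral projectors are the rank-one operators $|n\rangle\langle n|$, so $\rho_N=\sum_n p(n|\rho)|n\rangle\langle n|$ and $p(g_k|\rho)=p(n|\rho)$; Equation~(\ref{cor5a}) then reads
\beq
{\rm RMSE}\geq \frac{\alpha^{\frac{\alpha}{\alpha-1}}f(\alpha)\,\ell_I}{2\pi}\,e^{-A^N_\alpha(\rho)},\qquad \alpha\geq\half ,
\eeq
with $f(\alpha)$ as in~(\ref{thm2b}) and $A^N_\alpha(\rho)$ in nats. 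Next I would eliminate the asymmetry using Theorem~\ref{thm3}, which in the nondegenerate case gives $A^N_\alpha(\rho)\leq H_\beta(N|\rho)$ with $\frac1\alpha+\frac1\beta=2$, whence $e^{-A^N_\alpha(\rho)}\geq e^{-H_\beta(N|\rho)}=L_\beta(N|\rho)^{-1}$ by the definition of the R\'enyi length in~(\ref{length}), so that
\beq
{\rm RMSE}\geq \frac{\alpha^{\frac{\alpha}{\alpha-1}}f(\alpha)\,\ell_I}{2\pi\,L_\beta(N|\rho)},\qquad \frac1\alpha+\frac1\beta=2 .
\eeq
Finally I would invoke the mean-constrained maximum-R\'enyi-entropy bound from Appendix~\ref{appb} (inequality~(\ref{upperlappb})), which in the present normalisation reads $L_\beta(N|\rho)\leq \alpha^{\alpha/(\alpha-1)}(\langle N\rangle+\half)$; substituting it cancels the factor $\alpha^{\alpha/(\alpha-1)}$ and yields the one-parameter family
\beq
{\rm RMSE}\geq \frac{\ell_I}{2\pi}\,\frac{f(\alpha)}{\langle N\rangle+\half},\qquad \alpha\geq\half .
\eeq
Maximising the right-hand side over $\alpha\in[\half,\infty)$ replaces $f(\alpha)$ by $f_{\max}\approx0.5823$, giving exactly~(\ref{cor3a}).

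I do not expect a genuine obstacle: all three ingredients are already in place, and the only new bookkeeping relative to Theorem~\ref{thm2} is the overall factor $\ell_I/(2\pi)$, which collapses to $1$ --- recovering the third bound of Theorem~\ref{thm2} --- when the phase shift is completely unknown. The closest thing to a ``hard part'' is purely expository, namely quoting the maximum-R\'enyi-entropy estimate of Appendix~\ref{appb} in the normalisation in which the $\alpha^{\alpha/(\alpha-1)}$ factors cancel cleanly, so that the optimisation over $\alpha$ returns the same constant $f_{\max}$ as in Theorem~\ref{thm2}. If a self-contained derivation is preferred, one can instead start from the tradeoff relation~(\ref{thm5b}) of Theorem~\ref{thm5} with $G=N$ and $h$ the identity map, which gives $H_\alpha(\Theta_{\rm est}-\Theta|\rho)+H_\beta(N|\rho)\geq\log\ell_I$ (the $\ell_I$-generalisation of Theorem~\ref{thm1}), and then repeat verbatim the estimates of Appendix~\ref{appb} --- the bound on $H_\alpha$ at fixed second moment, the bound on $H_\beta$ at fixed mean, and the maximisation over $\alpha$ --- that turn Theorem~\ref{thm1} into the third inequality of Theorem~\ref{thm2}.
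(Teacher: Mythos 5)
Your proposal is correct and follows exactly the paper's own route: specialising Equation~(\ref{cor5a}) of Corollary~\ref{cor5} to $G=N$, replacing the asymmetry via $A^N_\alpha(\rho)\leq H_\beta(N|\rho)$ from Equation~(\ref{funchain}), substituting the R\'enyi-length bound~(\ref{upperlappb}) so the $\alpha^{\alpha/(\alpha-1)}$ factors cancel, and maximising over $\alpha$. No gaps.
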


Corollary~\ref{cor3} is a direct consequence of Equation~(\ref{cor5a})  in Corollary~\ref{cor5} for $G=N$, using $A_\alpha^G(\rho)\leq H_\beta(G|\rho)$ from Equation~(\ref{funchain}) and applying upper bound~(\ref{upperlappb}) from Appendix~\ref{appb}. It generalises the strong Heisenberg limit in Equations~(\ref{rmse}) and Theorem~\ref{thm2} to include prior information about the phase shift.

\begin{Corollary} \label{cor4}
The number and canonical phase observables $N$ and $\Phi$ satisfy the family of uncertainty relations previewed in Equation~(\ref{preview}), i.e.,
\beq \label{cor4a}
A_\alpha^N(\rho) + H_\alpha(\Phi|\rho) \geq \log 2\pi , \qquad \alpha\geq \half .
\eeq
\end{Corollary}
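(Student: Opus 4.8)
The plan is to obtain Corollary~\ref{cor4} as the special case of Theorem~\ref{thm5} corresponding to an optical mode, a canonical phase measurement, and no prior information beyond periodicity. Concretely, I would take the generator to be $G=N$, which has the discrete (indeed nondegenerate) spectrum $\{0,1,2,\dots\}$; take the prior distribution of the phase shift $\Theta$ to be uniform over the full circle, i.e., over the interval $I=[0,2\pi)$ of length $\ell_I=2\pi$; and take the estimate $\Theta_{\rm est}$ to be the outcome of the canonical phase POVM $\{|\phi\rangle\langle\phi|\}$ of Equation~(\ref{phi}), which is a bona fide measurement so that Theorem~\ref{thm5} applies. Equation~(\ref{thm5a}) then reads $H_\alpha(\Theta_{\rm est}-\Theta|\rho)+A_\alpha^N(\rho)\geq\log 2\pi$ for $\alpha\geq\half$.

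The one substantive step is to identify the error entropy with the phase entropy of the probe. Using the covariance property $p(\phi|\rho_\theta)=p(\phi-\theta|\rho)$ noted below Equation~(\ref{phi}), the joint density of $(\Theta_{\rm est},\Theta)$ is $\tfrac{1}{2\pi}p(\phi-\theta|\rho)$; changing variables to the error $\theta_{\rm err}=\phi-\theta$ (mod $2\pi$) and integrating out $\theta$ over its period gives $p_{\rm err}(\theta_{\rm err})=p(\theta_{\rm err}|\rho)$. This is exactly Equation~(\ref{perr}) of Appendix~\ref{appa} evaluated for $\Theta_{\rm est}\equiv\Phi$, specialised in Equation~(\ref{opt}). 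Hence $H_\alpha(\Theta_{\rm est}-\Theta|\rho)=H_\alpha(\Phi|\rho)$, the circular R\'enyi entropy of Equation~(\ref{renyient}), and substituting into Equation~(\ref{thm5a}) yields Equation~(\ref{cor4a}).

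The only point needing care, and the closest thing to an obstacle, is reconciling the real-line error $\Theta_{\rm est}-\Theta$ appearing in Theorem~\ref{thm5} and Corollary~\ref{cor2} with the intrinsically circular phase error entering $H_\alpha(\Phi|\rho)$. Since the phase-shift group is periodic with period $2\pi=\ell_I$, the remark following Lemma~\ref{lem} guarantees that the ``mod $I$'' reduction is entropy-preserving in this case, so the circular entropy may be used without loss. Equivalently, one may bypass Theorem~\ref{thm5} and run the chain~(\ref{funchain}) directly with $G=N$ and $\ell_I=2\pi$: Corollary~\ref{cor2} gives $\log 2\pi-H_\alpha(\Phi|\rho)\leq I_\alpha(\Phi:\Theta)$, the R\'enyi Holevo bound~(\ref{ialpha}) gives $I_\alpha(\Phi:\Theta)\leq\chi_\alpha(\E)$, and the lower bound of Theorem~\ref{thm3} gives $\chi_\alpha(\E)\leq A_\alpha^N(\rho)$; chaining and rearranging then gives Equation~(\ref{cor4a}). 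The restriction $\alpha\geq\half$ is inherited from these ingredients and matches the statement of the corollary.
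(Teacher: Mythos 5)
Your proposal is correct and follows essentially the same route as the paper, which obtains Corollary~\ref{cor4} by applying Theorem~\ref{thm5} with $G=N$, $\ell_I=2\pi$, and $\Theta_{\rm est}=\Phi$, using the covariance of the canonical phase distribution to identify $H_\alpha(\Theta_{\rm est}-\Theta|\rho)=H_\alpha(\Phi|\rho)$ on the unit circle. Your extra care about the real-line versus circular error (via the periodicity remark after Lemma~\ref{lem}) is a welcome clarification of a point the paper leaves implicit.
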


Corollary~\ref{cor4} follows from Theorem~\ref{thm5} for $G=N$, $\ell_I=2\pi$, and the particular choice of estimate $\Theta_{\rm est}=\Phi$, via $H(\Theta_{\rm est}-\Theta|\rho)=H(\Phi|\rho)$ on the unit circle.  For $\alpha=1$ it is equivalent to uncertainty relation~(\ref{ur}) for Shannon entropies when $\rho$ is a single mode state (noting that $A^G_1(\rho) = H(G|\rho)-H(\rho)$ for nondegenerate $G$ as per Equation~(\ref{asymmbound})), and more generally strengthens uncertainty relation~(\ref{urrenyi}) for R\'enyi entropies to take the degree of purity of the state into account.

\begin{Corollary} \label{cor6}
The number and canonical phase observables $N$ and $\Phi$  satisfy the family of uncertainty relations
\beq \label{cor6a}	A_\alpha^N(\rho) \Delta_\chi\Phi \geq \alpha^{\frac{\alpha}{\alpha-1}} f(\alpha), \qquad \alpha\geq\half,
	\eeq
where the function $f(\alpha)$ and the standard deviation $\Delta_\chi\Phi$ are defined in Equations~(\ref{thm2b}) and~(\ref{phasevar}).
\end{Corollary}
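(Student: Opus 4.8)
The plan is to obtain Equation~(\ref{cor6a}) by combining two already-available ingredients: the number-phase uncertainty relation $A_\alpha^N(\rho)+H_\alpha(\Phi|\rho)\geq\log2\pi$ of Corollary~\ref{cor4}, valid for $\alpha\geq\half$, and an upper bound of the form $H_\alpha(\Phi|\rho)\leq\log\big[2\pi\,\Delta_\chi\Phi\,/\,\big(\alpha^{\alpha/(\alpha-1)}f(\alpha)\big)\big]$ expressing the canonical phase R\'enyi entropy through the standard deviation $\Delta_\chi\Phi$. Subtracting $H_\alpha(\Phi|\rho)$ from both sides and exponentiating then gives $e^{A_\alpha^N(\rho)}\,\Delta_\chi\Phi\geq\alpha^{\alpha/(\alpha-1)}f(\alpha)$, which is Equation~(\ref{cor6a}).

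The first ingredient is Corollary~\ref{cor4} as it stands. For the second, I would re-centre the canonical phase density about the reference angle: the density $\tilde p(\psi):=p(\psi+\chi|\rho)$ on $[-\pi,\pi)$ satisfies $\oint d\phi\,p(\phi|\rho)^\alpha=\int_{-\pi}^{\pi}d\psi\,\tilde p(\psi)^\alpha$, so $H_\alpha(\Phi|\rho)=H_\alpha(\tilde p)$, while $\int_{-\pi}^{\pi}d\psi\,\psi^2\tilde p(\psi)=(\Delta_\chi\Phi)^2$ by Equation~(\ref{phasevar}). Extending $\tilde p$ by zero to the real line changes neither quantity, so $H_\alpha(\Phi|\rho)$ is at most the maximal R\'enyi-$\alpha$ entropy over all probability densities on the real line with second moment $(\Delta_\chi\Phi)^2$; this maximum is attained by a $q$-Gaussian and equals $\log\big[2\pi\,\Delta_\chi\Phi/(\alpha^{\alpha/(\alpha-1)}f(\alpha))\big]$. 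This is precisely the upper bound on R\'enyi entropy under a second-moment constraint that underlies Corollary~\ref{cor1}, derived in Appendix~\ref{appb} in the course of proving Theorem~\ref{thm2}, with the constant $2\pi/(\alpha^{\alpha/(\alpha-1)}f(\alpha))$ identified there via the Gamma-function identities behind Equation~(\ref{thm2b}).

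The same relation also follows directly from Corollary~\ref{cor5}: taking $G=N$ (hence $\ell_I=2\pi$) and the estimate $X_{\rm est}=\Phi$ of a uniformly random phase shift, one has from Equation~(\ref{perrgen}) with $p(\theta)=1/(2\pi)$ and $p(\phi|\rho_\theta)=p(\phi-\theta|\rho)$ that the error density equals $p(\cdot|\rho)$, so the root-mean-square error, with the error referred to $[-\pi,\pi)$ as in Theorem~\ref{thm2}, equals $\Delta_0\Phi$, and Equation~(\ref{cor5a}) yields Equation~(\ref{cor6a}) at $\chi=0$; the general $\chi$ then follows by applying this to the phase-shifted probe $\rho_{-\chi}:=e^{iN\chi}\rho\,e^{-iN\chi}$, using that $A_\alpha^N(\rho_{-\chi})=A_\alpha^N(\rho)$ (conjugation by $e^{-iN\chi}$ leaves $D_\alpha$ invariant and maps the set $\{\sigma:[\sigma,N]=0\}$ onto itself) and that $\Delta_0\Phi[\rho_{-\chi}]=\Delta_\chi\Phi[\rho]$. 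I do not expect a real obstacle: the substantive inputs — the R\'enyi Holevo bound, the asymmetry bound of Theorem~\ref{thm3}, the convolution bound of Corollary~\ref{cor2}, and the $q$-Gaussian extremal property behind $f(\alpha)$ — are all already established, so Corollary~\ref{cor6} is essentially an assembly; the only step that needs care is checking that the $H_\alpha(\Phi|\rho)$-versus-$\Delta_\chi\Phi$ bound holds for the specific reference angle $\chi$, which the re-centring above ensures.
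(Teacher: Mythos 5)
Your proposal is correct and follows essentially the paper's own route: the paper proves Corollary~\ref{cor6} precisely by repeating the Appendix~\ref{appb} argument for Corollary~\ref{cor1} --- the maximal-R\'enyi-entropy bound $H_\alpha(\Phi|\rho)\leq\log\left[2\pi\Delta_\chi\Phi/\bigl(\alpha^{\alpha/(\alpha-1)}f(\alpha)\bigr)\right]$ from Equation~(\ref{funappb}) with the extremal density re-centred at $\chi$ --- combined with Corollary~\ref{cor4} in place of Equation~(\ref{urrenyi}), which is exactly your first route (your second route via Corollary~\ref{cor5} is a valid repackaging of the same ingredients). Note that, as you correctly obtain, the derivation yields the exponentiated form $e^{A_\alpha^N(\rho)}\,\Delta_\chi\Phi\geq\alpha^{\alpha/(\alpha-1)}f(\alpha)$, which is the form consistent with Corollary~\ref{cor1} and with the way Corollary~\ref{cor6} is invoked in the discussion following Corollary~\ref{cor7}.
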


Corollary~\ref{cor6} follows via an analogous argument to the proof of Corollary~\ref{cor1} given in Appendix~\ref{appb}, but using the uncertainty relation of Corollary~\ref{cor4} in place of Equation~(\ref{urrenyi}).  It strengthens Equation~(\ref{cor1a}) of Corollary~\ref{cor1} when the state is non-pure.

\section{Applications to coherence measures, rotations, and energy-time tradeoffs}
\label{sec:time}

The results of the previous section have utility well beyond the number-phase examples given therein, as indicated here via several further applications.

\subsection{Coherence bounds}

Recall, as per the discussion in Section~\ref{sec:asymm}, that for nondegenerate $G$ the R\'enyi asymmetry $A_\alpha^G(\rho)$ is a measure of coherence, relative to the eigenbasis of $G$~\cite{Chitambar}.  Coherence measures are typically difficult to calculate explicitly, other than for pure states and general qubit states~\cite{Chitambar,coh}. However, the results of Section~\ref{sec:ent} lead to simple and strong upper and lower bounds for R\'enyi-related measures of coherence, as per the following theorem and corollary.

\begin{Theorem} \label{thm6} The R\'enyi entropy of coherence for state $\rho$, $C_\alpha(\rho):=A_\alpha^M(\rho)$, relative to a given orthonormal basis $\{|m\rangle\}$ indexed by a countable set of integers, has the upper and lower bounds
\beq \label{thm6a}	
\log 2\pi - H_\alpha(\Phi_{\boldsymbol\zeta}|\rho) \leq C_\alpha(\rho) \leq H_\beta(M|\rho), \qquad \frac{1}{\alpha}+\frac{1}{\beta}=2 ,
\eeq
where $M=\sum_m m|m\rangle\langle m|$, ${\boldsymbol\zeta}\equiv\{\zeta_m\}$ is an arbitrary set of reference phases, and $\Phi_{\boldsymbol\zeta}$ denotes the continuous phase observable corresponding to the POVM $\{|\phi\rangle\langle\phi|\}$ defined by
\beq \label{thm6b}
|\phi\rangle := \frac{1}{\sqrt{2\pi}} \sum_m e^{i \zeta_m} e^{-im\phi} |m\rangle .
\eeq
The upper bound is saturated for pure states.
\end{Theorem}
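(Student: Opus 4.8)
The plan is to recognize that Theorem~\ref{thm6} is essentially Theorem~\ref{thm5} (equivalently, the inequality chain~(\ref{funchain}) together with Theorem~\ref{thm3}) specialized to the generator $G=M$ and a cleverly chosen estimate, and then to verify that the ``canonical phase'' POVM $\{|\phi\rangle\langle\phi|\}$ built from the reference phases $\boldsymbol\zeta$ is exactly the right estimate to make the left-hand side come out as $\log 2\pi - H_\alpha(\Phi_{\boldsymbol\zeta}|\rho)$. Concretely, I would first observe that since $\{|m\rangle\}$ is indexed by a countable set of integers, the unitary group $U_x=e^{-ixM}$ is periodic with period $2\pi$, so a uniformly random displacement $X\in I=[0,2\pi)$ has $\ell_I=2\pi$ and $H(X)=\log 2\pi$; this is the setting of Theorem~\ref{thm5} for the linear generator $G=M$ (no $h$ needed). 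The upper bound $C_\alpha(\rho)=A_\alpha^M(\rho)\leq H_\beta(M|\rho)$ with $1/\alpha+1/\beta=2$, and its saturation for pure states, then follow \emph{immediately} from Theorem~\ref{thm3} applied to $G=M$, since by definition $C_\alpha(\rho):=A_\alpha^M(\rho)$. So the upper bound needs essentially no new work.

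For the lower bound, I would take the specific estimate $X_{\rm est}\equiv\Phi_{\boldsymbol\zeta}$, i.e.\ the measurement described by the POVM $\{|\phi\rangle\langle\phi|\}$ of Equation~(\ref{thm6b}). The key computation is to check that this POVM is \emph{covariant} under the phase shifts generated by $M$: applying $U_x=e^{-ixM}$ to $|\phi\rangle$ gives (up to an $x$-dependent overall phase that does not affect $|\phi\rangle\langle\phi|$) $|\phi+x\rangle$, exactly as in the standard number-phase case below Equation~(\ref{phi}), because the extra fixed phases $e^{i\zeta_m}$ commute through $e^{-ixm}$. Consequently the error density satisfies $p(\theta_{\rm err}) = \langle\theta_{\rm err}|\rho|\theta_{\rm err}\rangle = p(\theta_{\rm err}|\rho)$ on the unit circle (the argument is identical to that around Equation~(\ref{opt}) and Equation~(\ref{perr}) of Appendix~\ref{appa}), so that $H_\alpha(X_{\rm est}-X|\rho)=H_\alpha(\Phi_{\boldsymbol\zeta}|\rho)$. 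Substituting this and $\ell_I=2\pi$ into Equation~(\ref{thm5a}) of Theorem~\ref{thm5} yields
\beq
H_\alpha(\Phi_{\boldsymbol\zeta}|\rho) + A_\alpha^M(\rho) \geq \log 2\pi,\qquad \alpha\geq\half,
\eeq
which rearranges to $C_\alpha(\rho)=A_\alpha^M(\rho)\geq \log 2\pi - H_\alpha(\Phi_{\boldsymbol\zeta}|\rho)$, the claimed lower bound. This works simultaneously for every choice of the reference phases $\boldsymbol\zeta$, giving a whole family of lower bounds.

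The main obstacle, such as it is, is the bookkeeping around the non-periodic quantity $\theta_{\rm err}^2$ versus the periodic densities: one must be a little careful that the ``error'' random variable $X_{\rm est}-X$ appearing in Theorem~\ref{thm5} is the one reduced modulo $2\pi$ (since both $X$ and $X_{\rm est}=\Phi_{\boldsymbol\zeta}$ live on the circle), and that its R\'enyi entropy is indeed $H_\alpha(\Phi_{\boldsymbol\zeta}|\rho)$ — here Lemma~\ref{lem} guarantees the mod-$I$ reduction only decreases the R\'enyi entropy, so nothing is lost and the bound is if anything conservative; in fact for this covariant case the reduction is onto a full period and Lemma~\ref{lem} holds with equality. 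A secondary point to state cleanly is that $M$ may be degenerate only if some integers are repeated in the index set, but by hypothesis $\{|m\rangle\}$ is an orthonormal basis indexed by \emph{distinct} integers, so $M$ is nondegenerate, $\rho_M=\sum_m|m\rangle\langle m|\rho|m\rangle\langle m|$ is the dephased state, $H_\beta(M|\rho)$ is the R\'enyi entropy of $p(m|\rho)=\langle m|\rho|m\rangle$, and $C_\alpha(\rho)=A_\alpha^M(\rho)$ is a bona fide coherence measure as claimed in Section~\ref{sec:asymm}. With these remarks in place the proof is a direct corollary of Theorems~\ref{thm3} and~\ref{thm5}, and I would relegate the covariance check to Appendix~\ref{appc} or cite the analogous computation already used for Equation~(\ref{opt}).
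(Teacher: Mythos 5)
Your proposal is correct and follows essentially the same route as the paper: the upper bound (and its saturation for pure states) is read off directly from Theorem~\ref{thm3} with $G=M$, and the lower bound is obtained exactly as in Corollary~\ref{cor4}, i.e.\ by applying Theorem~\ref{thm5} with $G=M$, $\ell_I=2\pi$, and the covariant estimate $X_{\rm est}=\Phi_{\boldsymbol\zeta}$, noting that the fixed phases $e^{i\zeta_m}$ do not disturb the covariance $e^{-ixM}|\phi\rangle=|\phi+x\rangle$. The only cosmetic difference is that the covariance holds exactly (with no residual $x$-dependent overall phase), but this does not affect your argument.
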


\begin{Corollary} \label{cor7}
The relative entropy of coherence, geometric coherence and robustness of coherence, with respect to basis $\{|m\rangle\}$, satisfy the respective bounds 
\begin{align} \label{cor7a}
 \log 2\pi - H(\Phi_{\boldsymbol\zeta}|\rho) &\leq C_{\rm rel~ent}(\rho) = H(M|\rho)-H(\rho),\\
 \label{cg}
  1-\frac{L_{1/2}(\Phi_{\boldsymbol\zeta}|\rho)}{2\pi} &\leq C_g(\rho) \leq 1 - \max_m \langle m|\rho|m\rangle, \\
  \label{cr}
2\pi\,\sup_{\phi\in[0,2\pi)} p(\phi|\rho) - 1 &\leq C_R(\rho) \leq \Big(\sum_n \langle m|\rho|m\rangle^{1/2}\Big)^2 -1
\end{align}
for state $\rho$, where $L_\alpha$ denotes the R\'enyi length in Equation~(\ref{length}). The upper bounds are saturated for pure states.
\end{Corollary}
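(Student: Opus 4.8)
The plan is to obtain each of the three bound-pairs in Corollary~\ref{cor7} by specialising Theorem~\ref{thm6} to a single value of $\alpha$, combined with the known expressions relating $C_\alpha(\rho)=A_\alpha^M(\rho)$ to the named coherence measures. Since $M$ is nondegenerate, the needed identities are at hand: the identity $D(\rho\|\sigma)=H(\rho_M)-H(\rho)+D(\rho_M\|\sigma)$ for $[\sigma,M]=0$ noted after Equation~(\ref{renyiasymm}), minimised over $\sigma$ (optimum $\sigma=\rho_M$), gives $A_1^M(\rho)=H(M|\rho)-H(\rho)$, which is the relative entropy of coherence~\cite{coh}; and Equation~(\ref{agcoh}) supplies $A_{1/2}^M(\rho)=-\log[1-C_g(\rho)]$ and $A_\infty^M(\rho)=\log[1+C_R(\rho)]$. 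First I would take $\alpha=1$ (hence $\beta=1$) in Theorem~\ref{thm6}: its lower bound reads $\log 2\pi-H(\Phi_{\boldsymbol\zeta}|\rho)\leq C_1(\rho)$, and substituting $C_1(\rho)=C_{\rm rel~ent}(\rho)=H(M|\rho)-H(\rho)$ gives Equation~(\ref{cor7a}), the exact identity replacing and sharpening the Theorem's upper bound $H_1(M|\rho)$.

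Next I would treat $\alpha=\half$ (hence $\beta=\infty$) and $\alpha=\infty$ (hence $\beta=\half$), first rewriting the R\'enyi entropies in the bounds of Theorem~\ref{thm6} via Equations~(\ref{renyient}) and~(\ref{length}): $H_\infty(M|\rho)=-\log\max_m\langle m|\rho|m\rangle$, $H_\infty(\Phi_{\boldsymbol\zeta}|\rho)=-\log\sup_\phi p(\phi|\rho)$, $H_{1/2}(M|\rho)=\log L_{1/2}(M|\rho)=\log\big(\sum_m\langle m|\rho|m\rangle^{1/2}\big)^2$, and $H_{1/2}(\Phi_{\boldsymbol\zeta}|\rho)=\log L_{1/2}(\Phi_{\boldsymbol\zeta}|\rho)$. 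For $\alpha=\half$, inserting $C_{1/2}(\rho)=-\log[1-C_g(\rho)]$ turns Theorem~\ref{thm6} into $\log 2\pi-\log L_{1/2}(\Phi_{\boldsymbol\zeta}|\rho)\leq-\log[1-C_g(\rho)]\leq-\log\max_m\langle m|\rho|m\rangle$; exponentiating, then taking reciprocals (which reverses the chain, since $1-C_g(\rho)>0$, the bound being trivial otherwise) gives $\max_m\langle m|\rho|m\rangle\leq 1-C_g(\rho)\leq L_{1/2}(\Phi_{\boldsymbol\zeta}|\rho)/2\pi$, equivalent to Equation~(\ref{cg}). For $\alpha=\infty$, inserting $C_\infty(\rho)=\log[1+C_R(\rho)]$ gives $\log[2\pi\sup_\phi p(\phi|\rho)]\leq\log[1+C_R(\rho)]\leq\log\big(\sum_m\langle m|\rho|m\rangle^{1/2}\big)^2$; exponentiating (no reciprocal, as $1+C_R(\rho)\geq1$) and subtracting $1$ gives Equation~(\ref{cr}).

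Finally, the pure-state saturation statements follow at once: Theorem~\ref{thm6} asserts its upper bound $C_\alpha(\rho)\leq H_\beta(M|\rho)$ is saturated for pure $\rho$, and since $C_g$ and $C_R$ are strictly increasing functions of $A_{1/2}^M(\rho)$ and $A_\infty^M(\rho)$ respectively (while $C_{\rm rel~ent}=A_1^M$), the upper bounds in Equations~(\ref{cg}) and~(\ref{cr}) are saturated for pure states, and the $\alpha=1$ case is already an equality. I do not anticipate a genuine obstacle---the argument is algebraic bookkeeping on top of Theorem~\ref{thm6}---so the single point needing care is the direction of the inequalities in the post-exponentiation manipulations: the reciprocal step in the $C_g$ case flips the chain and the subtraction from~$1$ flips it back, so one must check that every argument of a logarithm or exponential is strictly positive so that monotonicity is legitimate throughout. (A minor typographical remark: the index in $\sum_n$ in Equation~(\ref{cr}) should be $m$.)
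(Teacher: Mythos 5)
Your proposal is correct and follows essentially the same route as the paper: specialise Theorem~\ref{thm6} to $\alpha=1,\ \half,\ \infty$ and translate the asymmetries into the named coherence measures via Equations~(\ref{asymmbound}) and~(\ref{agcoh}), with the pure-state saturation inherited from the saturation of the upper bound in Theorem~\ref{thm6}. Your care over the sign of the chain under exponentiation and reciprocation, and the observed typo $\sum_n\to\sum_m$ in Equation~(\ref{cr}), are both well taken.
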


The lower bounds in Theorem~\ref{thm6} and Corollary~\ref{cor7} hold for both finite and infinite Hilbert spaces, and follow in direct analogy to Corollary~\ref{cor4}, noting that $M$ is invariant under $|m\rangle\rightarrow e^{i\zeta_m}|m\rangle$ and using Equations~(\ref{phi}), (\ref{asymmbound}) and~(\ref{agcoh}). Note also that $ H_\alpha(\Phi_{\boldsymbol\zeta}|\rho)$ may be replaced by $\inf_{\boldsymbol\zeta} H_\alpha(\Phi_{\boldsymbol\zeta}|\rho)$, and that a weaker lower bound, $C_\alpha(\rho)\geq\log\alpha^{\frac{\alpha}{\alpha-1}}f(\alpha)-\log \Delta_\chi\Phi_{\boldsymbol\zeta}$, follows via Corollary~\ref{cor6}. The upper bounds follow directly from Theorem~\ref{thm3}, using  Equations~(\ref{asymmbound}) and~(\ref{agcoh}) and recalling that $H_\beta(\rho_G)=H_\beta(G|\rho)$ for nondegenerate $G$. It follows that a low phase uncertainty implies high coherence, which in turn requires a large uncertainy in $G$.

The bounds are relatively strong. The lower bounds are tight for all mixtures of number states, i.e., with zero coherence (noting that $p(\phi|\rho)=(2\pi)^{-1}$ for such states). Further, the upper bounds are saturated for all pure states, and are the strongest possible upper bounds that depend only on the distribution $p_m=\langle m|\rho|m\rangle$, being saturated for the pure state $\sum_m \sqrt{p_m}|m\rangle$ in particular.  Thus, for example, for an optical mode and the choice $M=N$, the coherent phase states  $|v\rangle=(1-|v|^2)^{1/2}\sum_n v^n|n\rangle$ (with $|v|<1$) not only have excellent phase resolution properties in general~\cite{HallJMO}, but also have the highest possible coherence for  given average photon number. Moreover, for $\alpha=1$ the bounds are equivalent to uncertainty relation~(\ref{ur}) for Shannon entropies, and more generally imply, and hence are stronger than, the uncertainty relation
\beq \label{genphaseur}
H_\alpha(M|\rho) + H_\beta(\Phi_{\boldsymbol\zeta}|\rho) \geq \log 2\pi, \qquad \frac{1}{\alpha}+\frac{1}{\beta}=2 
\eeq
for R\'enyi entropies, generalising Equation~(\ref{urrenyi}) to all dimensions.  

The upper bound  for geometric coherence in Equation~(\ref{cg}) recovers the upper bound in Theorem~1 of Reference~\cite{Zhang}. Further, the lower bound is typically stronger than the corresponding bound in Reference~\cite{Zhang}, as it depends on both the diagonal and off-diagonal elements of $\rho$. For example, for the maximally coherent  qubit state $|\psi\rangle=\frac{1}{\sqrt{2}}(|0\rangle+|1\rangle)$ one obtains $C_g(\rho) \geq 1-8/\pi^2\approx 0.189$ from Equation~(\ref{cg}) (with $\zeta_n\equiv0$),  whereas Theorem~1 of Reference~\cite{Zhang} gives a trivial lower bound of zero. Note that $C_g(\rho)=\half$ for this state, recalling that upper bound~(\ref{cg}) is saturated for pure states.

For the coherence of robustness, the upper bound $C_R(\rho)\leq \sum_{m,m'} |\langle m|\rho|m'\rangle|-1$ in Reference~\cite{Piani} is stronger than the upper bound in Equation~(\ref{cr}) for nonpure states (since the latter follows from the former via the Schwarz inequality). However, as noted above, Equation~(\ref{cr}) gives the strongest possible upper bound that depends only on the distribution of $M$. Further, the lower bound  in Equation~(\ref{cr}) is typically stronger than the corresponding bound in Reference~\cite{Piani}. For example, for the maximally coherent qubit state $|\psi\rangle$ in the above paragraph, both the lower bound in Equation~(\ref{cr}) and the lower bound in Theorem~5 of Reference~\cite{Piani} are saturated, with a value of unity. However, for the coherent phase state $|v\rangle$ mentioned above, the lower bound in Equation~(\ref{cr}) is again saturated, with a value $2|v|/(1-|v|)$ following from Equation~(37a) of Reference~\cite{HallJMO}, whereas Theorem~5 of Reference~\cite{Piani} gives a trivial lower bound of zero. 
 It would be of interest to compare these two lower bounds further, and to investigate the lower bound in Theorem~\ref{thm6} more generally.

\subsection{Rotations}

Two important applications of quantum metrology are the estimation of phase shifts, generated by the photon number operator, and the estimation of rotation angles, generated by angular momentum. In the latter case, for example, the strength of a magnetic field may be estimated via the rotation of an ensemble of atomic spins~\cite{Toth}.

The formal differences between phase and rotation estimation are small. For example, rotations of  GHZ states of $M$ spin qubits, $\frac{1}{\sqrt{2}}(\otimes^M |\uparrow\rangle+\otimes^M |\downarrow\rangle)$, are formally equivalent to phase shifts of single-mode states $\frac{1}{\sqrt{2}}(|0\rangle+|M\rangle)$ discussed following Theorems~\ref{thm1} and~\ref{thm2}, and to phase shifts of  two-mode NOON states $\frac{1}{\sqrt{2}}(|M,0\rangle+|0,M\rangle)$~\cite{Toth}.  

In fact the only significant formal difference between phase estimation and rotation estimation is that the eigenvalues of the photon number operator $N$ are nonnegative integers, whereas the eigenvalues of an angular momentum component $J_z$ range over all positive and negative integers. Thus, for example, the optical phase kets in Equation~(\ref{phi}) for the canonical phase observable $\Phi$ are replaced by the rotation kets
\beq
|\phi_z\rangle:=\frac{1}{\sqrt{2\pi}}\sum_{j=-\infty}^\infty e^{-ij\phi_z} |j\rangle 
\eeq
for the rotation angle observable $\Phi_z$ conjugate to $J_z$, where $\{|j\rangle\}$ denote the eigenstates of $J_z$. Note that  $\Phi_z$ corresponds to a Hermitian operator, with $\langle\phi_z|\phi_z'\rangle=\delta(\phi_z-\phi_z')$. 

Given that the general result in Theorem~\ref{thm5} holds for general discrete generators $G$, and noting that $J_z$ and $N$ only differ in their range of eigenvalues, 
it follows that all results for phase shifts not directly dependent on properties of the range of $N$ yield corresponding results for rotations, via the replacement of $N$ by $J_z$ and $\Phi$ by $\Phi_z$. Thus, for example, R\'enyi uncertainty relation (\ref{urrenyi}) holds for angular momentum and angle~\cite{Maassen,BB2006}, as does Theorem~\ref{thm1} and the metrology tradeoff relation
\beq
 H_\alpha(\Theta_{\rm z, est}-\Theta_z|\rho) + A^{J_z}_\alpha(\rho) \geq \log \ell_I, \qquad \alpha\geq \half
\eeq
corresponding to Theorem~\ref{thm5}. It follows that the estimation error, as characterised by its R\'enyi entropy, can only be small if the corresponding asymmetry of the state is large. In particular, a GHZ state of $M$ spin qubits has a relatively low asymmmetry, with $A^{J_z}_\alpha(\rho)=\log 2$ via duality property~(\ref{dual}) for pure states, and hence has a relatively poor angular resolution~\cite{HallPRX}. 

Further, since the relation between RMSE and entropy in Equation~(\ref{funappb}) of Appendix~\ref{appb} holds independently of the generator, all results for the RMSE of phase shifts and the standard deviation $\Delta_\chi\Phi$ of optical phase that do not depend on the eigenvalues of $N$ yield corresponding results for rotations. Thus, for example, the first two inequalities in Theorem~\ref{thm2} and the lower bounds in Corollaries~\ref{cor5} and~\ref{cor6} hold for angular momentum and angle, as does, e.g., the uncertainty relation
\beq
\Delta_\chi\Phi_z \geq \max_j p(j|\rho),
\eeq
corresponding to the second equality in Equation~(\ref{cor1b}) of Corollary~\ref{cor1}. This imples, for example, that a GHZ state of $M$ spin qubits has a relatively large standard deviation, with $\Delta_\chi\Phi_z\geq \half$.

Indeed, the only cases in which earlier results for number and phase do not immediately translate into results for angular momentum and angle are those involving the average photon number, such as the third uncertainty relation in Corollary~\ref{cor1} and the strong Heisenberg limits in Equations~(\ref{rmse}), (\ref{rmsealpha}), Theorem~\ref{thm2} and Corollary~\ref{cor3}. This is because such results rely on the upper bound for R\'enyi length in Equation~(\ref{upperlappb}) in Appendix~\ref{appb}, which assumes positive eigenvalues and so must be modified for the case of angular momentum.  A suitable modification is given by
\beq \label{upperl}
L_\beta(J_z|\rho) \leq \alpha^{\frac{\alpha}{\alpha-1}} \left(2\langle|J_z|\rangle+\half \langle 0|\rho|0\rangle\right) \leq \alpha^{\frac{\alpha}{\alpha-1}}\left( 2\langle|J_z|\rangle+\half\right), \qquad \frac{1}{\alpha}+\frac{1}{\beta}=2 ,
\eeq
as shown in the last part of Appendix~\ref{appb}.  This leads immediately to the following result  for angle estimation, corresponding to Corollary~\ref{cor3}.

\begin{Corollary} \label{cor8}
	The root-mean-square error for any estimate of a rotation $\Theta$ with a uniform prior distribution on an interval $I$ with length $\ell_I$, via a measurement on probe state $\rho$, satisfies the strong Heisenberg limits
	\beq \label{cor8a}
	{\rm RMSE}\geq  \frac{\ell_I}{2\pi} \, \frac{f_{\max}}{2\langle |J_z|\rangle +\half\langle0|\rho|0\rangle} \geq \frac{\ell_I}{2\pi} \, \frac{f_{\max}}{2\langle |J_z|\rangle +\half},
	\eeq
	where $f_{\max}\approx 0.5823$ is the maximum value of the function $f(\alpha)$ in Equation~(\ref{thm2b}).
\end{Corollary}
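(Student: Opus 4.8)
The plan is to mimic the proof of Corollary~\ref{cor3} essentially verbatim, replacing the optical generator $G=N$ by the angular-momentum component $G=J_z$, and replacing the R\'enyi-length bound~(\ref{upperlappb}) by its modification~(\ref{upperl}). Since Theorem~\ref{thm5}, and hence Corollary~\ref{cor5}, holds for any Hermitian generator with a discrete spectrum, inequality~(\ref{cor5a}) applies directly with $G=J_z$:
\beq
{\rm RMSE}\geq \frac{\alpha^{\frac{\alpha}{\alpha-1}}f(\alpha)\,\ell_I}{2\pi}\,e^{-A^{J_z}_\alpha(\rho)},\qquad \alpha\geq\half ,
\eeq
with $A^{J_z}_\alpha(\rho)$ measured in nats.

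The next step is to bound the asymmetry from above. From the inequality chain~(\ref{funchain}) one has $A^{J_z}_\alpha(\rho)\leq H_\beta(J_z|\rho)$ with $1/\alpha+1/\beta=2$, and the definitions~(\ref{renyient}) and~(\ref{length}) give $H_\beta(J_z|\rho)=\log L_\beta(J_z|\rho)$, so that $e^{-A^{J_z}_\alpha(\rho)}\geq 1/L_\beta(J_z|\rho)$ (nats). Substituting the modified bound~(\ref{upperl}), $L_\beta(J_z|\rho)\leq \alpha^{\frac{\alpha}{\alpha-1}}\ro{2\langle|J_z|\rangle+\half\langle0|\rho|0\rangle}$, cancels the factor $\alpha^{\frac{\alpha}{\alpha-1}}$ against the prefactor, leaving
\beq
{\rm RMSE}\geq \frac{f(\alpha)\,\ell_I}{2\pi\,\ro{2\langle|J_z|\rangle+\half\langle0|\rho|0\rangle}},\qquad \alpha\geq\half .
\eeq
Optimising over the free index, i.e.\ replacing $f(\alpha)$ by its maximum $f_{\max}$ over $\alpha\geq\half$ (attained near $\alpha\approx0.7471$), gives the first inequality of~(\ref{cor8a}); the second is immediate from $\langle0|\rho|0\rangle\leq 1$.

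This routine chaining is not the difficult part. The one genuinely new ingredient, relative to the optical-phase case, is the modified R\'enyi-length bound~(\ref{upperl}), established in Appendix~\ref{appb}. Its derivation must depart from that of~(\ref{upperlappb}), which exploits the nonnegativity of the spectrum of $N$: the spectrum of $J_z$ is symmetric about zero, so one expects the maximum of $H_\beta(J_z|\rho)$ at fixed $\langle|J_z|\rangle$ to be controlled by the doubled degeneracy of each nonzero value $|j|$ together with the single $j=0$ term, which is exactly what yields the constant $2\langle|J_z|\rangle+\half\langle0|\rho|0\rangle$ in place of $\langle N\rangle+\half$. I would therefore concentrate the real effort on that constrained-entropy maximisation, and on checking that the resulting bound carries the same $\alpha$-dependent prefactor $\alpha^{\frac{\alpha}{\alpha-1}}$ so that the cancellation above goes through precisely as claimed.
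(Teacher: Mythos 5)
Your proposal is correct and follows the paper's own route exactly: Corollary~\ref{cor5} with $G=J_z$, the chain $A^{J_z}_\alpha(\rho)\leq H_\beta(J_z|\rho)=\log L_\beta(J_z|\rho)$ from Equation~(\ref{funchain}), and the modified length bound~(\ref{upperl}), whose proof via the constrained maximisation over the full real line with the $\langle|x|\rangle$ constraint is indeed the only genuinely new ingredient and is carried out in Appendix~\ref{appb}. The cancellation of $\alpha^{\frac{\alpha}{\alpha-1}}$ and the final optimisation over $\alpha$ go through exactly as you describe.
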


The lower bounds in Corollary~\ref{cor8} improve on the Heisenberg limit given in endnote~[29] of Reference~\cite{HallPRA}, in both the numerators and denominators, in addition to including prior information about the rotation via the factor $\ell_I/(2\pi)$.

\subsection{Energy and time}

The results of Section~\ref{sec:sand} also apply straightforwardly to the time evolution of quantum systems with discrete energy levels. For example, any estimate $T_{\rm est}$ of a time translation $T$ generated by a Hamiltonian with discrete spectral decomposition $E=\sum_k E_k P_k$, for uniform prior probability density $p(t)=1/\ell_I$ over an interval of length $\ell_I$, satisfies the tradeoff relation
\beq \label{timegen}
H(T_{\rm est}-T|\rho) + A^E_\alpha(\rho) \geq \log \ell_I,\qquad \alpha\geq\half,
\eeq
as an immediate consequence of Theorem~\ref{thm5}. Further, the RMSE of the estimate satisfies the lower bounds in Corollary~\ref{cor5} for $G=E$.  

If the system is periodic, with period $\tau=2\pi/\omega$, the energy eigenvalues are of the form
\beq \label{eper}
E_k = \epsilon +  \hbar \omega n_k,
\eeq
where $\epsilon$ denotes the groundstate energy and the $n_k$ are nonnegative integers. Hence a time translation of the system by an amount $t$ is formally identical to a phase shift of an optical mode by $\phi=\omega t$ (for a state of the mode with support restricted to the number states $\{|n_k\rangle\}$), and all previous results for number and phase carry over immediately to analogous results for the energy and time of periodic systems via the replacement of $\phi$ by $\omega t$ and $N$ by $(E-\epsilon)/(\hbar\omega)$. For example, Corollary~\ref{cor3} implies the strong Heisenberg limit
\beq \label{heistime}
{\rm RMSE} = \langle (T_{\rm est}-T)^2\rangle^{1/2} \geq  \frac{\hbar\ell_I}{\tau} \, \frac{f_{\max}}{\langle E-\epsilon\rangle +\half\hbar\omega}
\eeq
for the RMSE for any estimate of the time shift of a periodic system, if the prior probability density $p(t)$ is uniform over an interval of length $\ell_I$, which strengthens the result in Reference~\cite{HallNJP} for this case. Similarly,  Corollaries~\ref{cor4} and~\ref{cor6} imply the strong energy-time uncertainty relations
\beq \label{uncertper}
A_\alpha^E(\rho) + H_\alpha({\cal T}_\tau|\rho)\geq \log\tau,\qquad A_\alpha^E(\rho) \Delta_{t_0}{\cal T}_\tau\geq \frac{\alpha^{\frac{\alpha}{\alpha-1}} f(\alpha)}{2\pi} \,\tau, \qquad \alpha\geq\half
\eeq
for the energy and canonical time observables of a periodic system with period $\tau$, where ${\cal T}_\tau$ denotes the canonical time observable  corresponding to $\Phi/\omega$~\cite{Holevo, Halltime}, and $\Delta_{t_0}{\cal T}_\tau=\langle ({\cal T}_\tau-t_0)^2\rangle^{1/2}$ is the standard deviation about any reference time $t_0$.  Note that the first of these relations, combined with the asymmetry bound in Theorem~\ref{thm3}, implies and hence is stronger than the known R\'enyi entropic uncertainty relation
\beq \label{uncertperweak}
H_\alpha(E|\rho) + H_\beta({\cal T}_\tau|\rho) \geq \log \tau, \qquad \frac{1}{\alpha}+\frac{1}{\beta}=2
\eeq
for the energy and time observables of periodic systems~\cite{Rastegin}, analogous to Equation~(\ref{urrenyi}) for number and phase.

While some quantum systems, such as harmonic oscillators and qubits, are indeed periodic,  most systems with discrete Hamiltonians do not have energy eigenvalues of the form in Equation~(\ref{eper}) and so are nonperiodic. This is not an issue for the basic energy-time metrology tradeoff relation~(\ref{timegen}), which is universal for discrete Hamiltonians and so applies equally well to both periodic and nonperiodic systems, as do the bounds for the RMSE of time estimates in Corollary~\ref{cor3} (choosing $G=E$). However, the question of whether there are  time-energy uncertainty relations for nonperiodic systems, that generalise Equations~(\ref{uncertper}) and~(\ref{uncertperweak}) for periodic systems, is less straightforward.

This question has been addressed for the case of Shannon entropies, via the definition of a canonical time observable $\cal T$ that is applicable to both periodic and nonperiodic systems. This observable has almost-periodic probability density $p_{ap}(t)$ associated with it,  and a corresponding almost-periodic Shannon entropy $H^{ap}({\cal T}|\rho)$, which satisfies the energy-time-energy entropic uncertainty relation~\cite{Halltime,HallJPA}
\beq \label{peruncert}
H(E|\rho) + H^{ap}({\cal T}|\rho) \geq 0 .
\eeq
This relation reduces to Equation~(\ref{uncertperweak}) for the case of periodic systems, and is strengthened  and extended below to general R\'enyi entropies.

To proceed, it is convenient to first deal with any energy degeneracies, by taking the degree of degeneracy to be the same for each energy eigenvalue (by formally extending the Hilbert space if necessary), so that the energy eigenstates can be formally written as $|E_k\rangle\otimes |d\rangle$ with the range of the degeneracy index $d$  independent of $E_k$. Defining $1_D:=\sum_d|d\rangle\langle d|$, the canonical time observable conjugate to $E=\sum_k |E_k\rangle\langle E_k|\otimes 1_D$ is then defined via the almost-periodic POVM ${\cal T}\equiv\{M_t\}$ given by
\beq \label{canont}
M_t := \sum_{k,k'}  e^{-i(E_k-E_{k'})t/\hbar}|E_k\rangle\langle E_{k'}| \otimes 1_D \geq 0, \qquad t\in(-\infty,\infty),
\eeq
and associated almost-periodic probability density
\beq \label{pap}
p_{ap}(t|\rho) := \tr[\rho M_t],\qquad t\in(-\infty,\infty).
\eeq
for state $\rho$~\cite{Halltime}. Note the formal similarity to the canonical phase observable in Section~\ref{sec:length}. For periodic systems, the associated periodic time observable ${\cal T}_\tau$ above has the related POVM $\{\tau^{-1}{\cal T}_t: t\in [0,\tau)\}$, with associated periodic probability density~\cite{Halltime} 
\beq \label{apper}
p_\tau(t|\rho) = \tau^{-1} p_{ap}(t|\rho)  ,\qquad t\in[0,\tau) .
\eeq

Now, it is easy to check that $p_{ap}(t|\rho)$ is not normalised with respect to the Lebesgue measure, and indeed that $\int_{-\infty}^\infty dt\,p_{ap}(t|\rho)$ diverges. Hence an alternative measure is required. This is provided by the Besicovitch measure $\mu_{ap}[\cdot]$, defined on the algebra of  almost-periodic functions,  i.e,  functions of the form $f(t)=\sum_j f_j e^{i\omega_j t}$ with $\sum_j |f_j|^2<\infty$, by~\cite{Bes}
\beq \label{limsup}
\mu_{ap}[f] := \limsup_{s\rightarrow\infty} \frac{1}{s} \int_0^s dt\, f(t)
\eeq
For $f(t)=p_{ap}(t|\rho)$ in Equation~(\ref{pap}) this yields
\beq
\mu_{ap}[p_{ap}] = \sum_{k,d}  \langle E_{k},d|\rho|E_k,d\rangle =1,
\eeq
and hence the almost-periodic density is normalised, as desired. The average of the  almost-periodic function $f(t)$ with respect to $p_{ap}(t|\rho)$can then be defined as~\cite{Halltime}
\beq
\langle f\rangle:=\mu_{ap}[p_{ap}f] =  \sum_{j,k,k': E_k-E_{k'}=\hbar\omega_j} f_j\, \langle E_{k'}|\tr_D[\rho]|E_k\rangle ,
\eeq
and the almost-periodic R\'enyi entropy of the canonical time observable by
\beq \label{hap}
H^{ap}_\alpha({\cal T}|\rho) := \frac{1}{1-\alpha} \log\mu_{ap}[(p_{ap})^\alpha] =  \frac{1}{1-\alpha} \log \limsup_{s\rightarrow\infty} \frac{1}{s} \int_0^s dt\,p_{ap}(t|\rho)^\alpha ,
\eeq
generalising the case of almost-periodic Shannon entropy~\cite{Halltime,HallJPA}.  For the case of a periodic system with period $\tau$, it  follows from Equation~(\ref{apper}) that
\beq \label{entdiff}
H^{ap}_\alpha({\cal T}|\rho) = H_\alpha({\cal T}_\tau|\rho) -\log \tau ,
\eeq
where $H_\alpha$ denotes the R\'enyi entropy of the periodic probability density $p_\tau(t|\rho)$.

Finally, any almost-periodic function $f(t)$ can be approximated by a periodic function to any desired accuracy, via a sequence of periodic functions $f_m(t)$ with  respective periods $\tau_m$, such that $\tau_m\rightarrow\infty$ and $f_m(t)$ converges uniformly  to $f(t)$ in the limit $m\rightarrow\infty$~\cite{Bes}. In particular, $p_{ap}(t|\rho)$ has such a sequence of periodic approximations, each corresponding to the canonical time distribution of a periodic system with energy observable $E^{(m)}=\sum_k E^{(m)}_k|E_k\rangle\langle E_k|\otimes 1_D$ and period $\tau_m$, with $E^{(m)}_k\rightarrow E_k$ as $m\rightarrow\infty$~\cite{HallPegg}. Further, each such periodic system must satisfy uncertainty relation~(\ref{uncertper}). Hence, using Equation~(\ref{entdiff}) and taking the limit $m\rightarrow\infty$, one obtains the following general result.

\begin{Corollary} \label{cor9}
The energy and almost-periodic canonical time observables $E$ and ${\cal T}$ satisfy the family of uncertainty relations
\beq \label{cor9a}
A^E_\alpha(\rho) + H^{ap}_\alpha({\cal T}|\rho) \geq 0 , \qquad \alpha\geq\half,
\eeq
for any quantum system with a discrete energy spectrum, where $H^{ap}_\alpha({\cal T}|\rho)$ is the almost-periodic R\'enyi entropy in Equation~(\ref{hap}).
\end{Corollary}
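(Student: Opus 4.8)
The plan is to derive the almost-periodic uncertainty relation~(\ref{cor9a}) from the \emph{periodic} one already established --- the first inequality of Equation~(\ref{uncertper}), applied with $G=E$ --- by approximating $E$ by periodic Hamiltonians with diverging periods. First I would, exactly as in the construction preceding the corollary, normalise the energy degeneracies so that $E=\sum_k|E_k\rangle\langle E_k|\otimes 1_D$ with the $E_k$ distinct, making $p_{ap}(t|\rho)=\sum_{k,k'}e^{-i(E_k-E_{k'})t/\hbar}\langle E_{k'}|\tr_D[\rho]|E_k\rangle$ a genuinely almost-periodic function. Then, invoking the uniform periodic approximation of~\cite{Bes} together with the realizability construction of~\cite{HallPegg}, I would fix periodic systems with Hamiltonians $E^{(m)}=\sum_k E^{(m)}_k|E_k\rangle\langle E_k|\otimes 1_D$, periods $\tau_m\to\infty$, and $E^{(m)}_k\to E_k$, whose (periodic) almost-periodic time densities $p^{(m)}_{ap}(t|\rho)$ converge to $p_{ap}(t|\rho)$ uniformly on the real line.

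For each $m$ I would apply the first inequality of Equation~(\ref{uncertper}) to the periodic system with Hamiltonian $E^{(m)}$, namely $A^{E^{(m)}}_\alpha(\rho)+H_\alpha({\cal T}_{\tau_m}|\rho)\geq\log\tau_m$ for $\alpha\geq\half$, and use identity~(\ref{entdiff}), $H_\alpha({\cal T}_{\tau_m}|\rho)=H^{ap}_\alpha({\cal T}^{(m)}|\rho)+\log\tau_m$, to cancel the $\log\tau_m$ terms and obtain $A^{E^{(m)}}_\alpha(\rho)+H^{ap}_\alpha({\cal T}^{(m)}|\rho)\geq0$. Since $E^{(m)}=h_m(E)$ for the function $h_m:E_k\mapsto E^{(m)}_k$, the monotonicity~(\ref{nonlinear}) gives $A^{E^{(m)}}_\alpha(\rho)\leq A^E_\alpha(\rho)$, so that $A^E_\alpha(\rho)+H^{ap}_\alpha({\cal T}^{(m)}|\rho)\geq0$ already holds for \emph{every} $m$ --- notably, no limit is needed on the asymmetry side, which is the point of normalising the degeneracies. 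It then only remains to pass to the limit $m\to\infty$, using $H^{ap}_\alpha({\cal T}^{(m)}|\rho)\to H^{ap}_\alpha({\cal T}|\rho)$, to reach Equation~(\ref{cor9a}).

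The substantive step is this last limit, i.e. continuity of the almost-periodic R\'enyi entropy along the approximating sequence. Given the uniform approximation supplied by~\cite{Bes}, $x\mapsto x^\alpha$ acts uniformly continuously on the (then bounded) range of the densities, so $(p^{(m)}_{ap})^\alpha\to(p_{ap})^\alpha$ uniformly and $(p_{ap})^\alpha$ is again almost periodic, whence the $\limsup$ in the definition~(\ref{hap}) of its Besicovitch mean is a genuine limit; since each $p^{(m)}_{ap}$ is $\tau_m$-periodic, $\mu_{ap}[(p^{(m)}_{ap})^\alpha]=\tau_m^{-1}\int_0^{\tau_m}(p^{(m)}_{ap})^\alpha$, and combining the uniform estimate $|\,\tau_m^{-1}\!\int_0^{\tau_m}(p^{(m)}_{ap})^\alpha-\tau_m^{-1}\!\int_0^{\tau_m}(p_{ap})^\alpha\,|\leq\|(p^{(m)}_{ap})^\alpha-(p_{ap})^\alpha\|_\infty$ with $\tau_m\to\infty$ gives $\mu_{ap}[(p^{(m)}_{ap})^\alpha]\to\mu_{ap}[(p_{ap})^\alpha]>0$, so that continuity of the logarithm finishes the case $\alpha\neq1$. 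I expect the main care to be needed precisely here: verifying that the periodic approximation can be taken in a mode strong enough to pass through the Besicovitch mean $\mu_{ap}[\cdot]$ (and, when $p_{ap}$ is unbounded, running the estimates in the appropriate Besicovitch norm instead of $\|\cdot\|_\infty$), which for $\alpha=1$ is exactly the analysis behind the known relation~(\ref{peruncert}) of Reference~\cite{Halltime}. The remaining ingredients --- the realizability of the approximating systems~\cite{HallPegg} and the periodic R\'enyi uncertainty relation~(\ref{uncertper}) --- are already in hand, after which letting $m\to\infty$ in $A^E_\alpha(\rho)+H^{ap}_\alpha({\cal T}^{(m)}|\rho)\geq0$ yields Equation~(\ref{cor9a}).
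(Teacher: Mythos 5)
Your proposal is correct and follows essentially the same route as the paper: approximate $p_{ap}$ by the time densities of periodic systems with $E^{(m)}_k\to E_k$ and $\tau_m\to\infty$, apply the periodic uncertainty relation~(\ref{uncertper}) to each, cancel $\log\tau_m$ via~(\ref{entdiff}), and pass to the limit. Your use of the monotonicity~(\ref{nonlinear}) to replace $A^{E^{(m)}}_\alpha(\rho)$ by $A^E_\alpha(\rho)$ before taking the limit, and your explicit treatment of the convergence of the Besicovitch mean, are welcome refinements of steps the paper leaves implicit.
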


Corollary~\ref{cor9} generalises Corollary~\ref{cor4} and uncertainty relations~(\ref{uncertper}) and~(\ref{uncertperweak}), for periodic systems, to any system with a discrete energy spectrum. Moreover, using Equation~(\ref{asymmbound}) for the case $\alpha=1$~\cite{HallJPA}, and Theorem~\ref{thm3} more generally, the corollary further leads to the respective energy-time uncertainty relations
\beq
H(\rho_E) +  H^{ap}({\cal T}|\rho) \geq H(\rho), \qquad H_\alpha(E|\rho) + H^{ap}_\beta({\cal T}|\rho) \geq 0, \qquad \frac{1}{\alpha}+\frac{1}{\beta}=2 ,
\eeq
for the Shannon and R\'enyi entropies of general systems. It may be noted, however, that the function $f(t)=t^2$ is not  almost-periodic, implying that there is no analogue of $\Delta_{t_0}{\cal T}$ for  nonperiodic systems and hence no corresponding generalisation of the second uncertainty relation in Equation~(\ref{uncertper}).

Finally, it should be noted that a suggestion in Reference~\cite{Halltime}, to interpret $I^{ap}(\rho):=-H^{ap}({\cal T}|\rho)$ as the maximum information that can be gained  about a random time shift in the limit of a uniform prior distribution on the real line,  via a measurement of ${\cal T}$, is incorrect. This quantity is in fact a {\it lower} bound for the information gain in this scenario. In particular, note from Corollary~\ref{cor2} that
$I_\alpha({\cal T}_\tau:T) \geq \log \tau - H_\alpha({\cal T}_\tau-T|\rho)= \log \tau - H_\alpha({\cal T}_\tau|\rho)$
for uniformly random time displacements of a periodic system with period $\tau$. Hence, choosing the same sequence of periodic systems as above and using Equation~(\ref{entdiff}), the lower bound
\beq
I_\alpha({\cal T}:T)\geq -H^{ap}_\alpha({\cal T}|\rho) = I^{ap}_\alpha(\rho)
\eeq
follows for information gain in the limit of a uniform prior distribution on the real line, valid for both periodic and nonperiodic systems.  This lower bound can be quite strong for systems with many pairs of resonant energy levels (i.e., with $E_j-E_{j'}=E_k-E_{k'}\neq0$), but is no greater than $\log 2$ in the case of no shared resonances and $\alpha=1$~\cite{Halltime}.

\section{Discussion}
\label{sec:con}

The main results of the paper, embodied in Theorems~\ref{thm1}--\ref{thm6} and Corollaries~\ref{cor1}--\ref{cor9}, are seen to have wide applicability, including lower bounds for  the error of any estimate of unitary displacement parameters, such as phase shifts, rotations and time;  Heisenberg limits for the scaling of RMSE with average photon number and angular momentum;  upper and lower bounds for measures of coherence; and uncertainty relations for canonically conjugate observables. As demonstrated by various examples, the results are typically stronger than existing results in the literature.

Whereas the results in Section~\ref{sec:ent} are based on known uncertainty relation~(\ref{urrenyi}) for R\'enyi entropies, the results in Sections~\ref{sec:together} and~\ref{sec:time} rely on the upper and lower bounds in Theorems~\ref{thm3} and~\ref{thm4} for the R\'enyi mutual information of quantum communication channels, which provide a path to far stronger  metrology bounds and uncertainty relations. All of these results have the advantage of being independent of, and hence not requiring any interpretation of, the R\'enyi mutual information itself. Indeed a number of the inequalities in Theorem~\ref{thm2} and Corollaries~\ref{cor1}, \ref{cor3} and~\ref{cor8} do not refer even to R\'enyi entropies.

There is an interesting subtlety worth noting in regard to  entangled states. In particular,  if a unitary displacement acts only on one component of an entangled probe state, then there are two distinct scenarios: (i)~an estimate is made via a measurement on that component, or (ii)~via a measurement on the whole state. The first scenario, by limiting the class of measurements, will in general have an increased estimation error, that is not taken into account in Theorem~\ref{thm5} and its corollaries. Fortunately this is straightforward to remedy, by replacing the state $\rho$ in those results by its accessible component,  i.e, by the partial trace $\tr_R[\rho]$ over any unmeasured components. Stronger lower bounds for estimation error are thereby obtained in the first scenario, noting that $A^G_\alpha(\tr_R[\rho])\leq A_\alpha^G(\rho)$ via the data processing inequality, which yield correspondingly improved uncertainty relations for  observables  that act on a component of an entangled state. 

The R\'enyi asymmetry, already known to be a useful resource in various contexts~\cite{vacc1,vacc2,gour,Chitambar,Colestime} is seen to also be a valuable resource in quantum metrology. In particular, the various lower bounds for the estimation error of unitary displacements, whether measured via its entropy or the RMSE, decrease as the asymmetry increases, making probe states with high asymmetry desirable.  Moreover, the strong uncertainty relations derived in the paper, e.g., Corollaries~\ref{cor4}, \ref{cor6} and~\ref{cor9}, imply that $A^G_\alpha(\rho)$ can further be regarded as a measure of the intinsically `quantum' uncertainty of $G$ for state $\rho$, given that it only vanishes for eigenstates of $G$ and for any classical mixtures thereof. Together with Theorem~\ref{thm3}, this suggests that the `total uncertainty' of $G$ for state $\rho$, as characterised by ${\cal U}_\alpha^{\rm total}(G|\rho):=H_\alpha(G|\rho)$, can be decomposed into quantum and classical contributions via
\beq	
{\cal U}_\alpha^{\rm total}(G|\rho) = {\cal U}_\alpha^{\rm quantum}(G|\rho) +{\cal U}_\alpha^{\rm classical}(G|\rho) , \qquad \alpha\geq \half,
\eeq
where ${\cal U}_\alpha^{\rm quantum}(G|\rho):=A^G_\beta(\rho)$, ${\cal U}_\alpha^{\rm classical}(G|\rho):=H_\alpha(G|\rho)-A^G_\beta(\rho)$, and $1/\alpha+1/\beta=2$. It follows that the quantum contribution vanishes if and only if the state is classical with respect to $G$, i.e.,  $[G,\rho]=0$. Conversely, the classical contribution vanishes if and only if the state is pure, i.e., has no classical mixing. For $\alpha=1$ this decomposition matches the one introduced in Reference~\cite{Kor} for Shannon entropy. An analogous decomposition of variance into quantum and classical contributions has been given by Luo~\cite{Luo2005}.

It was mentioned in Section~\ref{sec:sand} that the universal asymmetry bound $\chi_\alpha(\E)\leq A_\alpha^G(\rho)$  in Equations~(\ref{asymmbound}) and~(\ref{thm3a}) unifies and generalises the recent energy-time estimation relations given by Coles {\it et al.}~\cite{Colestime}. To see this in more detail, note that the main result in Reference~\cite{Colestime} for uniform discrete ensembles, in the form given by  Equations~(3) and~(10) thereof, translates in the notation of this paper to
\beq
-\inf_{\sigma_\E} D_\alpha(\rho_{\E X}\|\sigma_\E\otimes 1_X) + A^G_\alpha(\rho) \geq \log d,
\eeq
where $\E$ is an ensemble corresponding to $d$ displaced states $\rho_{x_k}$ having uniform prior probabilities $p(x_k)=1/d$, and $1_X$ is the unit operator on a corresponding reference system for these displacements (with orthonormal basis $\{|x_k\rangle\}$ as in Section~\ref{sec:dalpha}). It is straightforward to check that $D_\alpha(\rho_{\E X}\|\sigma_\E\otimes 1_X)=D_\alpha(\rho_{\E X}\|\sigma_\E\otimes d^{-1}1_X) -\log d = D_\alpha(\rho_{\E X}\|\sigma_\E\otimes \rho_X) -\log d$ for such ensembles, implying via Equation~(\ref{ialpha}) that the above result  is equivalent to $\chi_\alpha(\E)\leq A^G_\alpha(\rho)$, as claimed.  A similar equivalence holds between the asymmetry bound and Equation~(12) of Reference~\cite{Colestime}  for uniform continuous ensembles.  Hence, while the estimation relations in Reference~\cite{Colestime} are interpreted via a game, with scores determined by R\'enyi conditional entropies for estimates of $G$ and of the displacements it generates, they may  also be interpreted as special cases of the asymmetry bound. Alternatively, they may be interpreted  via Equation~(\ref{infinity}) as instances of a general inequality  between the quantum R\'enyi mutual information of a given ensemble and that of the ensemble corresponding to the limit of a maximally uniform prior distribution.

Finally, several topics for future work are suggested by the results. These include using Theorem~\ref{thm4} to obtain explicit lower bounds for the mutual information of ensembles with arbitrary prior probabilities, including discrete prior probability distributions (thus generalising the results in Sections~\ref{sec:together} and~\ref{sec:time}); extending the analysis to displacements induced by generators with continuous spectra, such as translations generated by momentum, and to multiparameter displacements (some preliminary results for the case of Shannon entropy are given in Reference~\cite{HallJPA}); obtaining Heisenberg-type limits for RMSE in terms of the variance of $N$ rather than of $\langle N\rangle$ (via corresponding upper bounds on R\'enyi entropies analogous to Equation~(\ref{upperlappb}) of Appendix~\ref{appb}); and further investigating the lower bounds for coherence in Theorem~\ref{thm6} and~Corollary~\ref{cor7}.

\appendix

\section{}
\label{appa}

\begin{proof}[Proof of Theorem~\ref{thm1}]
	First, applying a uniformly random phase shift $\theta$ to a probe state $\rho$ gives the phase-shifted state $\rho_\theta=e^{-iN\theta}\rho e^{iN\theta}$, with associated prior probability density $p(\theta)=1/(2\pi)$. Further, any estimate $\theta_{\rm est}$ made of $\theta$ must be described by some POVM $\Theta_{\rm est}\equiv \{M_{\theta_{\rm est}}\}$, with $M_{\theta_{\rm est}}\geq0$ and $\oint d\theta_{\rm est}\, M_{\theta_{\rm est}}=\hat 1$. Hence the joint probability density of $\theta$ and $\theta_{\rm est}$ is given by
	\beq
	p(\theta,\theta_{\rm est}|\rho) = p(\theta) p(\theta_{\rm est}|\rho_\theta)  = \frac{1}{2\pi}\tr[\rho_\theta M_{\theta_{\rm est}}] ,
	\eeq
	and the error of the estimate, $\Theta_{\rm est}-\Theta$, has the corresponding marginal probability density 
	\begin{align}
		p(\theta_{\rm err}) &=  \oint d\theta_{\rm est}\, p(\theta_{\rm est}-\theta_{\rm err},\theta_{\rm est}|\rho) 
		=\frac{1}{2\pi}\oint d\theta_{\rm est}\, \tr[\rho_{\theta_{\rm est}-\theta_{\rm err}} M_{\theta_{\rm est}}] 
		= \tr[\rho \widetilde M_{\theta_{\rm err}}] , 
		\label{perr}
	\end{align}
	where
	\beq
	\widetilde M_{\theta_{\rm err}}:= e^{-iN\theta_{\rm err}}\widetilde M_0 e^{iN\theta_{\rm err}},\qquad \widetilde M_0:= \frac{1}{2\pi}\oint d\theta_{\rm est}\, e^{iN\theta_{\rm est}} M_{\theta_{\rm est}} e^{-iN\theta_{\rm est}} ,
	\eeq
	and the cyclic property of the trace has been used. Note the useful property $\langle n|\widetilde M_0|n\rangle=1$.
	
	Now, $\widetilde M_0\geq 0$ follows from $M_{\theta_{\rm est}}\geq0$, implying it can be written in the form $\widetilde M_0=\sum_{\tilde m} |\tilde m\rangle\langle \tilde m|$ for some set of kets $\{|\tilde m\rangle\}$. Defining the completely-positive trace-preserving map $\mu(\rho)=\sum_{\tilde m} A_{\tilde m}\rho A_{\tilde m}^\dagger$ with $A_{\tilde m}=\sum_{n=0}^\infty \langle \tilde m|n\rangle\,|n\rangle\langle n|$~\cite{HallJMO,Halltime}, it is then straightforward to calculate the canonical phase distribution of $\Phi$ for state $\mu(\rho)$ as
	\beq
	p(\phi|\mu(\rho)) = p(\theta_{\rm est} - \theta=\phi|\rho),
	\eeq
	i.e, it is identical to the distribution of the error $\Theta_{\rm est}-\Theta$. One further finds
	\beq
	p(n|\mu(\rho)) = \langle n|\mu(\rho)|n\rangle = \langle n|\rho|n\rangle = p(n|\rho) ,
	\eeq
	i.e., the number distributions of $\mu(\rho)$ and $\rho$ are identical. Finally, it may be checked that $\mu(\hat 1)=\hat 1$, i.e., $\mu$ is a unital map, implying that the von Neumann entropy increases under $\mu$~\cite{inftext}, i.e.,
	\beq \label{unital}
	H(\mu(\rho)) \geq H(\rho).
	\eeq
	Hence, uncertainty relation~(\ref{ur}) for standard entropies immediately implies that
	\beq
	H(\Theta_{\rm est}-\Theta|\rho) + H(N|\rho) = H(\Phi|\mu(\rho))+H(N|\mu(\rho)) \geq \log 2\pi+H(\mu(\rho)) \geq \log 2\pi + H(\rho)
	\eeq
	as per Equation~(\ref{urmet}), while the R\'enyi uncertainty relation~(\ref{urrenyi}) similarly yields 
	\beq
	H_\alpha(\Theta_{\rm est}-\Theta|\rho) + H_\beta(N|\rho) \geq \log 2\pi,\qquad 
	\frac{1}{\alpha}+\frac{1}{\beta}=2 ,
	\eeq
	as per Equation~(\ref{thm1b}) of Theorem~\ref{thm1}.
\end{proof}

\section{}
\label{appb}

\begin{proof}[Proof of Theorem~\ref{thm2}]
The proof proceeds by establishing upper bounds on R\'enyi entropies under various constraints. First, consider the variation of the quantity $\int dx \,p(x)^\alpha$ with respect to probability density $p(x)$ for $\alpha\geq 1/2$ and $\alpha\neq1$, under the constraints $\int dx\,x^2p(x) =\sigma^2$ and $\int dx\,p(x)=1$, where integration is over the real line. Applying the method of Lagrange multipliers then gives an extremal probability distribution of the form
\beq \label{pextappb}
p(x)=\frac{1}{\lambda} p_1(x/\lambda), \qquad p_1(x) = K(1\pm x^2)^{\frac{-1}{1-\alpha}} .
\eeq
where the $+$ sign is chosen for $1/2\leq\alpha<1$ and the $-$ sign for $\alpha>1$~\cite{maxrenyi}, and in the latter case one takes $p_1(x)=0$ for $|x|>1$. These choices correspond to a maximum and minimum of $\int dx \,p(x)^\alpha$, respectively, and hence to a maximum of the associated R\'enyi entropy. The values of $\lambda$ and $K$ are determined by the constraints as~\cite{maxrenyi}
\beq
\lambda= \sigma \left(\frac{3\alpha-1}{|1-\alpha|}\right)^{1/2}, \qquad
K=\left\{ \begin{array}{ll}
\frac{\Gamma(\frac{1}{1-\alpha})}{\sqrt{\pi}\,\Gamma(\frac{1}{1-\alpha}-\frac{1}{2})}, & 1/2\leq \alpha<1\\
\frac{\Gamma(\frac{\alpha}{\alpha-1}+\frac{1}{2})}{\sqrt{\pi}\,\Gamma(\frac{\alpha}{\alpha-1})}, & \alpha>1
\end{array} \right. .
\eeq
Since phase error is restricted to a subset of the real line, it follows that its R\'enyi entropy can be no greater than the maximum for $p(x)$ with $\sigma={\rm RMSE}$.  Hence,
\begin{align}
H_\alpha(\Theta_{\rm est}-\Theta|\rho) &\leq H_\alpha[p]=\log \lambda+ H_\alpha[p_1]=\log \lambda +\frac{1}{1-\alpha}\log \int dx\, p_1(x)^\alpha \nn\\
&= \log {\rm RMSE}+ \log \left[\left(\frac{3\alpha-1}{|1-\alpha|}\right)^{1/2} \left(\frac{2\alpha}{3\alpha-1}\right)^{\frac{1}{1-\alpha}} \right] -\log K,
\label{funappb}
\end{align}
where the last line follows by direct calculation. It may be checked that this inequality also holds in the limit $\alpha\rightarrow1$, for which it becomes equivalent to the standard bound $H\leq \log {\rm RMSE} +\half\log 2\pi e$ for Shannon entropy. Inverting the above result, and using $H_\alpha(\Theta_{\rm est}-\Theta|\rho)\geq \log 2\pi - H_\beta(N|\rho)$ from Equation~(\ref{thm1b}) of Theorem~\ref{thm1}, then gives the lower bound
\beq \label{rmseappb}
{\rm RMSE} \geq \frac{\alpha^{\frac{\alpha}{\alpha-1}} f(\alpha)}{L_\beta(N|\rho)},\qquad \frac{1}{\alpha}+\frac{1}{\beta} = 2 ,
	\eeq
where $f(\alpha)$ is defined in Equation~(\ref{thm2b}). 

To obtain the first lower bound in Theorem~\ref{thm2}, consider the limit $\alpha\rightarrow\infty$ in Equation~(\ref{rmseappb}) above. The constraint on $\alpha$ and $\beta$ gives $\beta=\frac{1}{2}$, and one finds
\beq
\lim_{\alpha\rightarrow\infty} \alpha^{\frac{\alpha}{\alpha-1}} f(\alpha) = 2\sqrt{\pi}\,\frac{\Gamma(\frac{3}{2})}{\Gamma(1)} \lim_{\alpha\rightarrow\infty} \alpha^{\frac{1}{\alpha-1}} \left(\frac{\alpha-1}{3\alpha -1}\right)^{1/2} = \frac{\pi}{\sqrt{3}}
\eeq
(corresponding to the maximum value of $\alpha^{\frac{\alpha}{\alpha-1}} f(\alpha)$), yielding the required first bound. Similarly, substitution of $\alpha=\half$ into Equation~(\ref{rmseappb}) gives unity for the numerator and $L_\infty(N|\rho)=(\max_n p(n|\rho))^{-1}$ for the denominator, yielding the second lower bound in Theorem~\ref{thm2}.

The third and final bound requires an upper bound for the R\'enyi length $L_\beta(N|\rho)$ as a function of $\langle N\rangle$. This is obtained by considering the variation of the quantity $L_\beta(X)=[\int dx\,p(x)^\beta]^{1/(1-\beta)}$ for $\beta\geq\half$ and $\beta\neq1$, under the constraints $\int dx\,xp(x)=\bar x$ and $\int dx\, p(x)=1$, where integration is now over $x\geq0$. Applying the method of Lagrange multipliers as before, the probability distribution maximising $L_\beta(X)$ is found to have the form
\beq \label{ptilde}
\tilde p(x)=\frac{1}{\tilde\lambda}\tilde p_1(x/\tilde\lambda), \qquad \tilde p_1(x)= \tilde K (1\pm x)^{\frac{-1}{1-\beta}},
\eeq
analogous to Equation~(\ref{pextappb}). The $+$ sign is chosen for $1/2\leq\beta<1$ and the $-$ sign for $\beta>1$, where in the latter case one takes $p_1(x)=0$ for $x>1$. The values of $\tilde\lambda$ and $\tilde K$ follow from the constraints and $1/\alpha+1/\beta=2$ as
\beq
\tilde\lambda = \bar x\, \frac{2\beta-1}{|1-\beta|} = \frac{\bar x}{|1-\alpha|} , \qquad \tilde K= \frac{\beta}{|1-\beta|} = \frac{\alpha}{|1-\alpha|},
\eeq
yielding the upper bound
\beq \label{upperappb}
L_\beta(X) \leq \tilde\lambda L_\beta[\tilde p_1] = \tilde\lambda \tilde K^{-1} (1\pm \bar x/\tilde \lambda)^{\frac{1}{1-\beta}}  = \alpha^{\frac{\alpha}{\alpha-1}} \bar x 
\eeq
for the R\'enyi length of any random variable $X$ on $[0,\infty)$.  Choosing the particular random variable having probability density $p(x)=p(n|\rho)$ for $0\leq n\leq x<n+1$ gives $\bar x=\sum_n p(n|\rho)\int_n^{n+1} dx\,x=\sum_n (n+\half)p(n|\rho)=\langle N\rangle+\half$, and $L_\beta(X)=L_\beta(N|\rho)$, and hence the above upper bound reduces to
\beq \label{upperlappb}
L_\beta(N|\rho) \leq \alpha^{\frac{\alpha}{\alpha-1}} (\langle N\rangle+\half).
\eeq
This also holds in the limit $\alpha\rightarrow1$, for which it gives the strong upper bound $H(N|\rho)\leq \log(\langle N\rangle+\half)e$, which is close to the maximum entropy $\log(\langle N\rangle+1)+\langle N\rangle\log(1+1/\langle N\rangle)$ (to within second order in $1/\langle N\rangle$), corresponding to a thermal state. Finally, substitution of Equation~(\ref{upperlappb}) into Equation~(\ref{rmseappb}) yields the third bound in Theorem~\ref{thm2}, as required.
\end{proof}	

\begin{proof}[Proof of Corollary~\ref{cor1}]
The proof of Equation~(\ref{funappb}) above goes through with $\sigma$ and RMSE replaced by $\Delta_\chi\Phi$, the maximising density $p(x)$ in Equation~(\ref{pextappb}) by $p(x-\chi)$, and $H_\alpha(\Theta_{\rm est}-\Theta|\rho)$ by $H(\Phi|\rho)$. Using uncertainty relation~(\ref{urrenyi}) then leads to Equation~(\ref{cor1a}) of Corollary~\ref{cor1}, in analogy to Equation~(\ref{rmseappb}). The first two bounds in Equation~(\ref{cor1b}) of the corollary correspond to the cases $\alpha\rightarrow\infty$ and $\alpha=\half$, while the third bound follows via the upper bound in Equation~(\ref{upperlappb}) above.
\end{proof}

\begin{proof}[Proof of Equation~(\ref{upperl})]
As noted in the main text, Equation~(\ref{upperl}) for angular momentum is obtained via a suitable modification of the derivation of upper bound~(\ref{upperlappb}) above. This achieved by expanding the domain of integration in the derivation to the full real number line, and replacing the constraint $\int dx\, xp(x)=\bar x$ by $\int dx\, |x|p(x)=\bar x$. This leads to a maximising probability density of the same form as in Equation~(\ref{ptilde}), but extended to the negative numbers and with $x$ replaced by $\bar x$. This increases the corresponding R\'enyi entropy by $\log 2$ and hence  upper bound~(\ref{upperappb}) increases by a factor of 2, to
\beq \label{proofupperl}
L_\beta(X)\leq 2 \alpha^{\frac{\alpha}{\alpha-1}} \bar x .
\eeq
Finally, choosing the random variable having probability density $p(x)=p(j|\rho)$ for $j-\half\leq x<j+\half$, where $p_j=\langle j|\rho|j\rangle$ is the distribution of $J_z$, gives 
\beq
\bar x=\sum_j p_j\int_{j-\half}^{j+\half} dx\,|x| = \frac14 p_0 + \sum_{j\neq 0} p_j |j| = \frac14 p_0 +\langle |J_z|\rangle,
\eeq
and substitution into Equation~(\ref{proofupperl}) gives Equation~(\ref{upperl}) as desired.
\end{proof}

\section{}
\label{appc}

\begin{proof}[Proof of Theorem~\ref{thm3}]
The lower bound in Equation~(\ref{thm3a}) of Theorem~\ref{thm3} was proved in the main text.
The upper bound is obtained by considering a purification $|\psi\rangle\langle \psi|$ of state $\rho$ on the joint Hilbert space of the probe and a reference ancilla $R$.  If $1_R$ denotes the unit operator for the ancilla, then
\beq \label{1ga}
A_{\alpha}^{1_R\otimes G}(|\psi\rangle\langle\psi|) =\inf_{\sigma_{R\E}:[\sigma_{R\E},1\otimes G]=0} D_\alpha(|\psi\rangle\langle\psi|\|\sigma_{R\E}) \geq \inf_{\sigma_{\E}:[\sigma_{\E},G]=0} D_\alpha(\rho\|\sigma_\E) =A^G_\alpha(\rho) ,
\eeq
where the inequality follows by applying data processing inequality~(\ref{data}) to operation of tracing over the ancilla. Further, duality property~(\ref{dual}), proved further below, implies that the left hand side is given by
\begin{align}
A_{\alpha}^{1_R\otimes G}(|\psi\rangle\langle\psi|) &=H_\beta(|\psi\rangle\langle\psi|_{1_R\otimes G}) \nn\\
&= H_\beta\Big(\sum_k(1_R\otimes P_k)|\psi\rangle\langle\psi|(1_R\otimes P_k)\Big)\nn\\
&= \frac{1}{1-\beta} \tr\left[ \left( \sum_k \frac{|\psi_k\rangle\langle\psi_k|}{\langle\psi_k|\psi_k\rangle}  \langle\psi_k|\psi_k\rangle   \right)^\beta \right] \nn\\
&= \frac{1}{1-\beta} \sum_k \langle\psi_k|\psi_k\rangle^\beta,
\label{1gb}
\end{align}
where $|\psi_k\rangle:=(1_R\otimes P_k)|\psi\rangle$,   and the last line follows noting that the $|\psi_k\rangle\langle\psi_k|/\langle\psi_k|\psi_k\rangle$ terms are mutually orthogonal rank-1 projectors. Finally, if $|\psi\rangle=\sum_\lambda\sqrt{w_\lambda} |u_\lambda\rangle_R\otimes |v_\lambda\rangle$ is the Schmidt decomposition of $|\psi\rangle$, so that $\rho=\sum_\lambda w_\lambda |v_\lambda\rangle\langle v_\lambda|$, it follows that $\langle\psi_k|\psi_k\rangle=\sum_\lambda w_\lambda \langle v_\lambda|P_k|v_\lambda\rangle=\tr[\rho P_k]=p(G=g_k|\rho)$, and substituting in Equations~(\ref{1ga}), (\ref{1gb}) gives 
\beq
A^G_\alpha(\rho)\leq A_{\alpha}^{1_R\otimes G}(|\psi\rangle\langle\psi|) = \frac{1}{1-\beta}\sum_k p(G=g_k|\rho)^\beta = H(G|\rho) ,
\eeq
thus yielding the upper bound in Theorem~\ref{thm3}. Finally, if $\rho$ is pure, then duality property~(\ref{dual}) immediately implies this bound is saturated.
\end{proof}

\begin{proof}[Proof of duality property~(\ref{dual})]
The R\'enyi asymmetry of a pure state $|\psi\rangle$ follows from definitions~(\ref{sandwich}) and~(\ref{renyiasymm}) as
\begin{align}
A_\alpha^G(|\psi\rangle\langle\psi|) &= \inf_{\sigma:[\sigma,G]=0} D_\alpha(|\psi\rangle\langle\psi|\|\sigma)\nn\\
&= \inf_{\sigma:[\sigma,G]=0} 
\frac{1}{\alpha-1}\log \tr\left[\left(\frac{\sigma^{\frac{1-\alpha}{2\alpha}}|\psi\rangle\langle\psi|\sigma^{\frac{1-\alpha}{2\alpha}}}{\langle\psi|\sigma^{\frac{1-\alpha}{\alpha}}|\psi\rangle} \, \langle\psi|\sigma^{\frac{1-\alpha}{\alpha}}|\psi\rangle\right)^\alpha\right] \nn\\
&= \inf_{\sigma:[\sigma,G]=0} \frac{1}{\alpha-1}\log \langle\psi|\sigma^{\frac{1-\alpha}{\alpha}}|\psi\rangle^\alpha \nn\\
&= \inf_{\sigma:[\sigma,G]=0} \frac{\alpha}{\alpha-1}\log
\tr\left[ |\psi\rangle\langle\psi|\sum_k P_k\sigma^{\frac{1-\alpha}{\alpha}} P_k\right] \nn\\
&= \inf_{\sigma:[\sigma,G]=0} \frac{\alpha}{\alpha-1}\log
\tr[ |\psi\rangle\langle\psi|_G \,\sigma^{\frac{1-\alpha}{\alpha}} ] ,
\end{align}
where the third line follows because the fractional expression in the round brackets of the second line is a rank-1 projector, the fourth line from $[\sigma,G]=0$, and the last line from the cyclic property of the trace and the definition of $\rho_G$ in Equation~(\ref{asymmbound}).  To determine the infimum over all $\sigma$, note that variation of $\tr[\rho\sigma^{\frac{1-\alpha}{\alpha}}]$ with respect to arbitrary $\sigma$, under the constraint $\tr[\sigma]=1$, yields 
$\sigma=\tilde\sigma:= \rho^{\frac{\alpha}{2\alpha-1}}/\tr[ \rho^{\frac{\alpha}{2\alpha-1}}]$, and the corresponding extremal value $(\tr[ \rho^{\frac{\alpha}{2\alpha-1}}])^{\frac{2\alpha-1}{\alpha}}$. Hence, noting that $[\tilde\sigma,G]=0$ for $[\rho,G]=0$, and that $\beta=\alpha/(2\alpha-1)$ for $1/\alpha+1/\beta=2$, it follows for the choice $\rho=|\psi\rangle\langle\psi|_G$ that
\beq
A_\alpha^G(|\psi\rangle\langle\psi|) = \frac{\alpha}{\alpha-1}\log (\tr[ \rho^{\frac{\alpha}{2\alpha-1}}])^{\frac{2\alpha-1}{\alpha}} = \frac{2\alpha-1}{\alpha-1} \log (\tr[ \rho^{\frac{\alpha}{2\alpha-1}}])^{\frac{2\alpha-1}{\alpha}} = H_\beta(|\psi\rangle\langle\psi|_G),
\eeq
as per Equation~(\ref{dual}).
\end{proof}

\begin{proof}[Proof of Equation~(\ref{infinity}) relating asymmetry to maximally uniform ensembles]
Let $\rho^r_{\E X}$ and $\rho^r_X$ denote the states in Equation~(\ref{holrewrite}) for the prior distribution $p_r(x):=1/(2r)$ for $r<|x|$ and vanishing otherwise. Further, let $\E_r\equiv\{U_x\rho U_x^\dagger;p_r(x)\}$ denote the corresponding continuous ensemble, and $U=\int dx\,U_x^\dagger\otimes|x\rangle\langle x|$ be the controlled unitary transformation used in Section~\ref{sec:asymm}. Fixing $\epsilon>0$ and letting $\sigma_r$ be a state which achieves the infimum in Equation~(\ref{ialpha}) to within less than $\epsilon$, it follows that
\begin{align}
\chi_\alpha(\E_r)+\epsilon &> D_\alpha(\rho^r_{\E X}\|\sigma_r\otimes\rho^r_X) \nn\\
 &= D_\alpha(U\rho^r_{\E X}U^\dagger\|U\sigma_r\otimes\rho^r_X U^\dagger) \nn\\
&= D_\alpha(\rho \otimes\rho^r_X\|\int dx\,p_r(x)U_x^\dagger\sigma_r U_x\otimes |x\rangle\langle x|) \nn\\
&\geq D_\alpha(\rho\|\tilde\sigma_r),
\label{u}
\end{align}
where $\tilde\sigma_r:=\int dr\,p_r(x)U_x^\dagger\sigma_r U_x$ and the last line follows by applying data processing inequality~(\ref{data}) to the partial trace operation.  Hence, recalling  the definition $\chi_\alpha^\infty:=\lim_{r\rightarrow\infty}\chi_\alpha(\E_r)$ in Equation~(\ref{infinity}),
\beq \label{temp}
\chi_\alpha^\infty+\epsilon> D(\rho\|\tilde\sigma_\infty)  .
\eeq
But, using $\sum_k P_k=1$, one has
\begin{align}
\tilde\sigma_\infty&= \lim_{r\rightarrow\infty}\sum_{k,k'} \frac{1}{2r}\int_{-r}^r dx\,e^{ixG} P_k\sigma_r P_{k'}e^{-ixG} = \sum_{k,k'} \lim_{r\rightarrow\infty} \frac{1}{2r}\int_{-r}^r dx\, e^{ix(g_k-g_{k'})}  P_k\sigma_r P_{k'} = \sum_{k} P_k\sigma_\infty P_{k} ,
\end{align}
and thus $[\tilde\sigma_\infty,G]=0$. It then follows from Equation~(\ref{temp}) that
\beq
\chi_\alpha^\infty+\epsilon> \inf_{\sigma:[\sigma,G]=0} D(\rho\|\sigma) = A^G(\rho).
\eeq
Combined with $\chi_\alpha^\infty=\lim_{r\rightarrow\infty}\chi(\E_r)\leq \lim_{r\rightarrow\infty}A_G(\rho)=A_G(\rho)$ from Theorem~\ref{thm3}, this yields
\beq
A_G(\rho)-\epsilon < \chi_\alpha^\infty  \leq A_G(\rho)
\eeq
for all $\epsilon>0$. Hence $A_G(\rho)=\chi_\alpha^\infty$ as per Equation~(\ref{infinity}).
\end{proof}

\section{}
\label{appd}

\begin{proof}[Proof of Theorem~\ref{thm4}]
Defining the translation isometry ${\cal T}:p(a,x)\rightarrow p(x+a,x)$, analogous to the unitary transformation $U$ in Equation~(\ref{u}), and the marginalisation operation ${\cal M}:p(a,x)\rightarrow \int dx\, p(a,x)$, analogous to the partial trace operation, it follows from Equation~(\ref{ialpha}) and data processing inequality~(\ref{data}) for $A=X_{\rm est}$ that
\begin{align}
I_\alpha(X_{\rm est}:X) &= \inf_{q_A} D_\alpha(p_{AX}\| q_A p_X) \nn\\
&= \inf_{q_A} D_\alpha({\cal T}p_{AX}\| {\cal T}q_A p_X) \nn\\
&\geq \inf_{q_A} D_\alpha({\cal MT}p_{AX}\| {\cal MT}q_A p_X) \nn\\
&= \inf_{q} D_\alpha(p_{\rm err}\| q\ast p_-) 
\label{thm4proof}
\end{align}
as per Theorem~\ref{thm4}, where the last line follows using $({\cal MT}p_{AX})(y)=p_{\rm err}(y)$ via Equation~(\ref{perrgen}) and
\beq \label{mt}
({\cal MT}qp_X)(y)=\int dx\, q(y+x)p(x)=\int dx\, q(y-x)p_-(x) = (q\ast p_-)(y).
\eeq
\end{proof}

\begin{proof}[Proof of Corollary~\ref{cor2}]
Define the intervals $I_j:=[(j-\half)|I|,(j+\half)|I|)$ for integer $j$, and the concentration operation ${\cal C}:p(y)\rightarrow p(y\mod I_0)$, that shifts the distribution of $y$ onto interval $I_0$. The probability density $p_{\rm err}$ of $X_{\rm err}=X_{\rm est}-X$ $\tilde p_{\rm err}$ is then related to the probability density $\tilde p_{\rm err}$ of $\tilde X_{\rm err}:=X_{\rm err} \mod I_0$ via $\tilde p_{\rm err}={\cal C}p_{\rm err}$.  More generally,   for a general density $r$  one has by construction that
\beq \label{tilder}
({\cal C}r)(y) = \sum_j r(y+n\ell_I), \qquad y\in I_0.
\eeq
Now, data processing inequality~(\ref{data}) and Equation~(\ref{thm4proof}) above give
\beq \label{dc}
I_\alpha(X_{\rm est}:X) \geq \inf_{q} D_\alpha({\cal C}p_{\rm err}\|{ \cal C}q\ast p_-) = \inf_{q} D_\alpha(\tilde p_{\rm err}\| {\cal C}q\ast p_-) ,
\eeq
where choosing $r=q\ast p_-$ in Equation~(\ref{tilder}) and $p(x)=1/\ell_I$ on some interval $I$ gives
\begin{align}
({\cal C}q\ast p_-)(y) &= \sum_j (q\ast p_-)(y+n\ell_I) \nn\\
&= \frac{1}{\ell_I}\sum_j  \int_I dx\, q(y+x+n\ell_I) \nn\\
&= \frac{1}{\ell_I}\sum_j  \int_{I+n\ell_I} dx\, q(y+x) \nn\\
&= \frac{1}{\ell_I} \int_{-\infty}^\infty dx\, q(y+x) \nn\\
&= \frac{1}{\ell_I} 
\end{align}
(with the second line following via Equation~(\ref{mt})).  Hence Equation~(\ref{dc}) simplifies to
\beq
I_\alpha(X_{\rm est}:X) \geq D_\alpha(\tilde p_{\rm err}\|\ell_I^{-1}) = \log \ell_I- H_\alpha(\tilde p_{\rm err}) = \log \ell_I- H_\alpha(\tilde X_{\rm err}) \geq \log \ell_I- H_\alpha(X_{\rm err})
\eeq
as per  Corollary~\ref{cor2}, with the final inequality following from Lemma~\ref{lem} (proved below).
\end{proof}

\begin{proof}[Proof of Lemma~\ref{lem}]
Entropy is invariant under translations and hence, for the purposes of proving Lemma~\ref{lem}, it is sufficient to consider $\tilde Z=Z \mod I_0=[-\half\ell_I,\half\ell_I)$ for the interval $I_0$ in the proof of Corollary~\ref{cor2} above.  If $q$ denotes the probability density of $Z$, then the corresponding probability density of $\tilde Z$ is then  $\tilde q={\cal C}q$, as per Equation~(\ref{tilder}). 

It is convenient to represent $q$ by a mixture of non-overlapping probability densities, supported on the corresponding non-overlapping intervals $I_j=I_0+n\ell_I$. In particular, one has the identity
\beq
q(z) = \sum_j w_j q_j(z), \qquad w_j:=\int_{I_j} dz\,q(z),\qquad q_j(z):= \left\{ \begin{array}{ll} q(z)/w_j, & z\in I_j, \\
	0, & {\rm otherwise}, \end{array} 
\right.
\eeq
and it follows that $\tilde q=\sum_j w_j \tilde q_j$, where $\tilde q_j={\cal C}q_j$ is supported on $I_0$ and $q_j(z) = \tilde q_j(z+n\ell_I)$. 

For the case of Shannon entropies, i.e.,  $\alpha=1$, it follows that
\begin{align}
H(Z) &= - \sum_j \int_{I_j} dz\,w_jq_j(z)\log w_jq_j(z) = - \sum_j \int_{I_0} dz\,w_j\tilde q_j(z)\log w_j\tilde q_j(z)=H(J)+H(\tilde Z|J),
\end{align}
where $J$ is the discrete random variable with distribution $\{w_j\}$. Hence
\beq
H(Z)-H(\tilde Z) =H(J)+H(\tilde Z|J)-H(\tilde Z)=H(J|\tilde Z)\geq0 ,
\eeq
yielding Lemma~\ref{lem} for the case $\alpha=1$.

Further, for $\alpha\neq1$ the R\'enyi entropies of $Z$ and $\tilde Z$ have the forms
\begin{align}
H_\alpha(Z) &= \frac{1}{1-\alpha}\log \sum_j\int_{I_j} dz\, [w_jq_j(z)]^\alpha
=  \frac{1}{1-\alpha}\log  \int_{I_0} dz\,\sum_j [w_j \tilde q_j(z)]^\alpha , \\
H_\alpha(\tilde Z) &= \frac{1}{1-\alpha}\log \int_{I_0} dz\,\big[\sum_j w_j \tilde q_j(z)\big]^\alpha .
\end{align}
Defining $p_j(z):=w_j\tilde q_j(z)/\sum_j w_j\tilde q_j(z)$, it follows that $\sum_j p_j(z)= 1$, $p_j(z)\leq1$ and $\frac{1}{p_j(z)}\geq1$. Thus, letting $\sum'_j$ denote summation over nonzero values of $p_j(z)$, 
\beq
\frac{\sum_j[w_j\tilde q_j(z)]^\alpha}{[\sum_j w_j \tilde q_j(z)]^\alpha} = {\sum_j}' p_j(z)^\alpha  =
\left\{
\begin{array}{ll} 
	 \sum_j'  p_j(z) \frac{1}{p_j(z)^{1-\alpha}}  \geq  \sum_j'  p_j(z)=1, &\alpha\leq 1,
	\\ 
	 \sum'_j p_j(z) p_j(z)^{\alpha-1}  \leq  \sum_j'  p_j(z)=1, & \alpha\geq 1.
\end{array} 
\right.
\eeq
Hence $\sum_j[w_j\tilde q_j(z)]^\alpha$ is greater than or equal to (less than or equal to) $[\sum_j w_j \tilde q_j(z)]^\alpha$ if $\alpha<1$ ($\alpha>1$), and it immediately follows from the above forms for $H_\alpha(Z)$ and $H_\alpha(Z)$ that
\beq
H_\alpha(Z) \geq H_\alpha(\tilde Z)
\eeq
for the case $\alpha\neq 1$, yielding Lemma~\ref{lem} as desired.
\end{proof}

\newpage

\end{document}